\def\@ACM@checkaffil{
    \if@ACM@instpresent\else
    \ClassWarningNoLine{\@classname}{No institution present for an affiliation}%
    \fi
    \if@ACM@citypresent\else
    \ClassWarningNoLine{\@classname}{No city present for an affiliation}%
    \fi
    \if@ACM@countrypresent\else
        \ClassWarningNoLine{\@classname}{No country present for an affiliation}%
    \fi
}
\algnewcommand\Input{\item[\textbf{Input:}]}
\algnewcommand\Output{\item[\textbf{Output:}]}
\newcommand{\N}{\mathbb{N}}
\newcommand{\Ch}{\mathcal{C}}
\newcommand{\q}{\mathbf{q}}
\newcommand{\ttup}[1]{\textup{\texttt{#1}}}
\newcommand{\noAA}{\ttup{BASE}}
\let\tilde\widetilde
\newcommand{\EOE}{\hfill $\triangle$} 
\titlespacing{\section}{0pt}{2ex}{1ex}
\titlespacing{\subsection}{0pt}{1ex}{.5ex}
\titlespacing{\subsubsection}{10pt}{1ex}{.5ex}
\title[Disadvantaged Students in Highly Competitive Markets]{\Large Discovering Opportunities in New York City's Discovery Program: Disadvantaged Students in Highly Competitive Markets}
\author{Yuri Faenza} 
\affiliation{%
    \institution{IEOR, Columbia University}
    \city{New York, NY}
    \country{USA}
}
\author{Swati Gupta} 
\affiliation{%
    \institution{Sloan School of Management, MIT}
    \city{Cambridge, MA}
    \country{USA}
}
\author{Xuan Zhang} 
\affiliation{%
    \institution{IEOR, Columbia University}
    \city{New York, NY}
    \country{USA}
}
\begin{abstract}

        Discovery program (DISC) is a policy used by the New York City Department of Education (NYC DOE) to increase the number of admissions of students from low socio-economic background to specialized high schools. This policy has been instrumental in increasing the number of disadvantaged students attending these schools, by reserving a percentage of seats to disadvantaged students that complete a three-week summer program (with a very high success rate~\cite{NYT-HU}). However, assuming that students care more about the school they are assigned to rather than the type of seat they occupy (\emph{school-over-seat hypothesis}), our empirical analysis using NYC DOE data from 12 recent academic years (2005-06 to 2016-17) shows that DISC creates about 950 in-group blocking pairs each year amongst disadvantaged students, impacting about 650 disadvantaged students every year. Moreover, we find that this program does not respect improvements as it benefits lower-performing disadvantaged students more than top-performing disadvantaged students by matching some of the former to more preferred schools, thus unintentionally creating an incentive to under-perform. These experimental results are confirmed by our theoretical analysis.

    In order to alleviate the concerns caused by DISC, we explore two alternative policies: the minority reserve (MR) and the joint-seat allocation (JSA) mechanisms. As our main theoretical contribution, we introduce a feature of markets, that we term high competitiveness (HC).  Assuming the school-over-seat-hypothesis and the HC condition, we show that JSA dominates MR for all disadvantaged students. We give sufficient conditions under which high competitiveness is verified, such as the combination of high demand for seats and slightly poorer performances of disadvantaged students with respect to that of advantaged students. Data from NYC DOE satisfies the high competitiveness condition, and for this dataset our empirical results corroborate our theoretical predictions, showing the superiority of JSA. Given that JSA can be implemented by a simple modification of the classical deferred acceptance algorithm with responsive preference lists,we believe that, when the school-over-seat hypothesis holds, the discovery program can be changed for the better by implementing the JSA mechanism, leading in particular to aligned incentives for the top-performing disadvantaged students. We therefore suggest that policy makers solicit more information from students about the school-over-seat hypothesis and then explicitly incorporate their preferences in the mechanism.     
\end{abstract}
\begin{document}

\begin{titlepage}
    \maketitle
\end{titlepage}

\section{Introduction} \label{sec:introduction}

There is a pervasive problem in the way students are evaluated and given access to higher education \citep{ashkenas2017even,boschma2016concentration,capers2017implicit}. Promising students are often unable to join top schools because the path to getting admitted to these schools requires extensive training at various levels, starting as early as when students are $3$ years old \citep{shap-4yo}. Hence, underrepresented minorities, especially those with lower household income and lower family education, are systematically screened-out of the education pipeline: in many cities, schools remain highly segregated \citep{NYT1,NYT2}. Disparate opportunities in accessing high-quality education is one of the main causes of income imbalance and social immobility in the United States \citep{orfield2005segregation}. Policies such as quota-based mechanisms and training programs offer practical remedies for increasing representation of under-represented minorities and disadvantaged groups in public schools in the U.S. \citep{hafalir2013effective,dur2020explicit}, as well as in countries such as India \citep{sonmez2022affirmative} and Brazil \citep{aygun2021college}.

In this work, we study theoretically and empirically the characteristics of the \emph{Discovery Program}, which is used by the New York City Department of Education (NYC DOE) in an effort to increase the number of disadvantaged students at specialized high schools (SHS) \citep{SH-proposal}. SHSs span the five boroughs of NYC, and are among the most competitive ones in the city. Contrary to other public schools, these schools consider \emph{only} students' score on the Specialized High School Admissions Test (SHSAT) for admission. Around 5000 students are admitted every year to SHSs. The discovery program reserves some seats for disadvantaged students that are assigned after the regular admission process: it first runs the standard \emph{deferred acceptance} algorithm \cite{gale1962college} on general (i.e., non-reserved) seats with all student applicants, and it then runs the same algorithm on reserved seats for the unmatched disadvantaged students only. Disadvantaged students admitted via the reserved seats are required to participate in a 3-week enrichment program during the summer\footnote{The goal of this program is to better prepare disadvantaged students who are slightly below the cutoff points for attending specialized high schools. At the end of the program, each school decides whether to accept individual students based on, for example, their improvement. However, in practice, summer schools participation \emph{almost always} guarantees admission. E.g., in 2018, all students participating in the summer program at Stuyvesant were then admitted to the high school~\cite{NYT-HU}.}.

The discovery program has been instrumental in creating opportunities for disadvantaged students, increasing the number of admitted students to these extremely competitive public high schools in NYC. In 2020, for example, Mayor Bill de Blasio called for an expansion of the discovery program, with 20\% seats at SHSs reserved for the program. This expansion resulted in $1,350$ more disadvantaged students being admitted to these specialized schools \citep{SH-proposal,chalkbeat}. 

In this work, we dive deep into the student-school matching produced by the discovery program and starting from this case study, deduce general properties of markets with similar features. Our empirical analysis shows that under a reasonable assumption on students' preferences over schools which we term \emph{school-over-seat}\footnote{This hypothesis assumes that students' preference over schools are not affected by whether they are required to participate in the three-week summer enrichment program. See Section \ref{sec:policy-recommendation} for further discussions.}, the matchings from academic years 2005-06 to 2016-17 created about $950$ in-group blocking pairs each year amongst disadvantaged students, impacting about $650$ disadvantaged students every year (see Figure \ref{fig:block-pair-disc}). A blocking pair is a pair of student $s_1$ and school $c_1$ that prefer each other to their matches, thus violating the priority of student $s_1$ at school $c_1$ and creating dissatisfaction among students and schools. We also find that this program does not respect improvements, hence it benefits {\it lower-performing} disadvantaged students more than {\it top-performing} disadvantaged students, thus unintentionally creating an incentive to under-perform. See Figure \ref{fig:disc-who-worse} for our empirical analysis, where top-performing students (with ranks $0\sim 500$) attend less preferred schools  under the discovery program, unlike low-performing students (with ranks $500\sim 2200$) who get matched to better ranked schools (more preferred schools have lower numeric ranks). These drawbacks\footnote{These drawbacks are also discussed in online forums: see, e.g., a post on the r/SHSAT subreddit: \url{https://www.reddit.com/r/SHSAT/comments/ntkoq5/discovery_program/}; and a discussion on a popular site referenced on many reddit posts: \url{https://www.gregstutoringnyc.com/shsat-Discovery/}.} are not just an artifact of the data from NYC DOE, but are theoretical problems with the current implementation of the discovery program and the nature of the market. 

\begin{figure}[t]
\centering
\begin{subfigure}[t]{.48\textwidth}
\centering
\includegraphics[width=\textwidth]{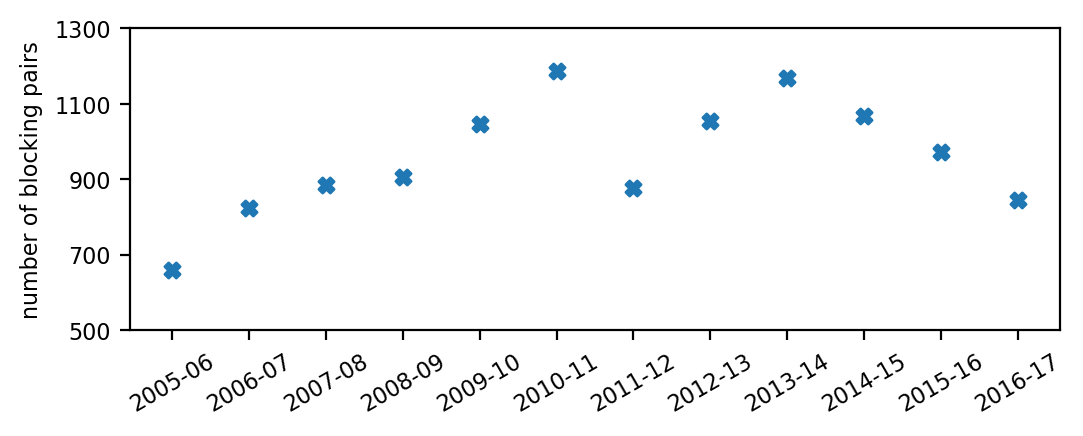}
\caption{\raggedright Number of blocking pairs among disadvantaged students under discovery program across the last 12 academic years, which impacted around 650 students each year. \vspace{-1em}} \label{fig:block-pair-disc}
\end{subfigure}
\hspace{.02\textwidth}
\begin{subfigure}[t]{.48\textwidth}
    \centering
    \includegraphics[width=\textwidth]{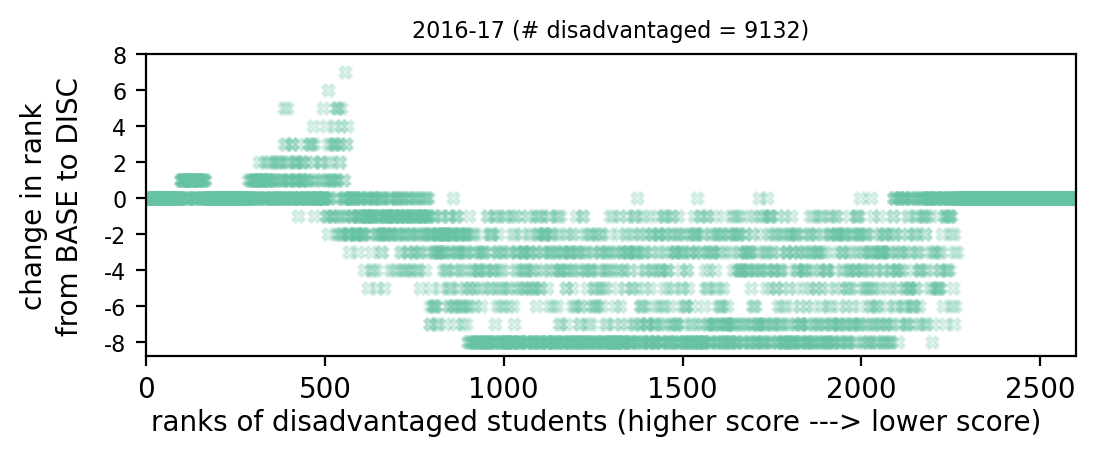}
    \caption{\raggedright Change in rank of assigned schools from the baseline mechanism (\noAA) to discovery program (\ttup{DISC}) (we plot \ttup{DISC} - \noAA) for disadvantaged students, ordered by their SHSAT scores.  \vspace{-1em}}
    \label{fig:disc-who-worse}
\end{subfigure}
\caption{Note that for Figure \ref{fig:disc-who-worse}, a negative change in rank means getting to a more preferred school under the discovery program. Top-performing disadvantaged students (ranked $0\sim 500$) are matched to worse schools under DISC, whereas the lower-performing disadvantaged students are matched to better schools.}
\end{figure}

Therefore, our goal in this paper is to explore alternative mechanisms with reserved seats, so that we can propose practical modifications to how the discovery program is implemented, while alleviating the above-mentioned drawbacks. We want to propose a solution that is theoretically sound, benefits disadvantaged students the most, and applies more generally to markets similar to the NYC SHS market under consideration, where demand vastly exceeds the offer. 

We consider in particular the \emph{minority reserve} (\ttup{MR}) and \emph{joint seat allocation} (\ttup{JSA}) mechanisms\footnote{They differ in terms of the processing order between reserved and general seats, which is known to affect the resulting matching. \ttup{MR} and \ttup{JSA} are referred to as \emph{horizontal} and \emph{vertical} reservation respectively in some of the literature.} These two mechanisms are also quota-based, with schools reserving a certain proportion of their seats for disadvantaged students. However, in contrast to the discovery program, \ttup{MR} and \ttup{JSA} allocate reserved and general seats at the same time. Under minority reserve, disadvantaged students are admitted first via reserved seats and then via general seats (when there are no more reserved seats); whereas under joint seat allocation, disadvantaged students would take first general seats if they are able to compete and otherwise revert to reserved seats. We compare these three policies with respect to the baseline stable matching mechanism, \noAA, which does not distinguish between disadvantaged and advantaged students~\citep{gale1962college}. We next discuss our key contributions. 

\subsection{Main results} \label{sec:main-results}

We first summarize known and new properties of different mechanisms with reserved seats under the school-over-seat hypothesis, i.e., students' preferences over schools are not influenced by whether they are admitted via general seats or reserved seats (in the case of NYC SHSs, reserved seats additionally require a 3-week summer program). We then show that while in general no mechanism dominates the other (not even \ttup{DISC}), we prove as our main theoretical contribution that \texttt{JSA} dominates \ttup{MR} for disadvantaged students under a novel, fairly broad condition, that we term \emph{high competitiveness} of markets. Finally, we empirically validate the high competitiveness condition and our theoretical results using data from NYC DOE, and make a policy recommendation for the discovery program. 

\subsubsection{Properties of Mechanisms.}

\paragraph{Question 1.} \textit{Which mechanisms with reserved seats considered in the paper satisfy reasonable notions of fairness such as absence of in-group blocking pairs and strategy-proofness? What is the impact of these mechanisms on the disadvantaged group of students?} 
\smallskip

We explore useful properties for mechanisms with reserved seats and briefly explain these properties here (see Sections \ref{sec:model-notation} and \ref{sec:affirmative-action} for formal definitions): 

\begin{itemize}
    \item[(i)] {\it strategy-proofness}: this property means that the best strategy of students is to honestly report their preferences over the schools; 
    \item[(ii)] {\it absence of in-group blocking pairs}: this is a \emph{fairness} condition which ensures there is no priority violation for students;
    \item[(iii)] the third property asks for the mechanism not to worsen (with respect to the baseline mechanism \noAA) the assignment of \emph{at least} one disadvantaged student\footnote{Intuitively, one might expect property (iii) to be so weak that it is trivially satisfied. However, the discovery program does not satisfy it in general.}; 
    \item[(iv)] the fourth property asks \emph{all} disadvantaged students not to be worse-off in a restricted scenario called \emph{smart-reserve}\footnote{This requirement was first proposed and studied by \citet{hafalir2013effective}, and they showed that such a condition is achievable either in an ad-hoc fashion or by using historical data on school admissions.}, meaning the number of seats reserved for disadvantaged students is no less than the number of disadvantaged students admitted by the baseline mechanism;
    \item[(v)] {\it respect for improvements}: this property is essential for meritocratic systems and it ensures that students have no incentive to underperform in the exam (i.e., lower their priority standings).
\end{itemize}

\begin{small}
\begin{table}[t]
    \centering
    \begin{tabular}{l|cccccccc}
        & \noAA & \ttup{DISC} & \ttup{MR} & \ttup{JSA} \\
        \hline
        weakly group strategy-proof & \ding{51} {\small [DF]} & \ding{55} (Ex \ref{ex:disc-fair-sp}) & \ding{51} {\small [HYY]} & \ding{51} ([KS,AT]) \\
        no in-group blocking pairs & \ding{51} {\small [GS]} & \ding{55} (Ex \ref{ex:disc-fair-sp}) & \ding{51} (Prop \ref{prop:mr-no-bp}) & \ding{51} (Prop \ref{prop:jsa-no-bp}) \\
        at least one disadvantaged student not worse off & \ttup{NA} & \ding{55} (Ex \ref{ex:disc-all-worse}) & \ding{51} {\small [HYY]} & \ding{51} (Thm \ref{thm:jsa-one-better}) \\
        no disadvantaged student worse off if smart reserve & \ttup{NA} & \ding{55} (Ex \ref{ex:disc-fair-sp}) & \ding{51} {\small [HYY]} & \ding{51} (Thm \ref{thm:jsa-smart-all-better}) \\
        {respect for improvements} & \ding{51}[KS] & \ding{55} (Ex \ref{ex:disc-fair-sp}) & \ding{51} [KS,AT] & \ding{51} [KS,AT] \\ \hline
    \end{tabular}
    \vspace{0.1cm}
    \caption{\raggedright Summary of properties of mechanisms with reserved seats under the school-over-seat assumption. \ttup{NA} means not applicable. Previously known results and their corresponding citations are given in square brackets, with: [DF] \citet{dubins1981machiavelli}; [HYY] \citet{hafalir2013effective}; [GS] \citet{gale1962college}; [KS] \cite{kominers2016matching}; and [AT] \citet{aygun2020dynamic}; other results are accompanied by the labels of examples, propositions, or theorems used to answer the questions. \vspace{-2.5em}} 
    \label{tab:summary-prop}
\end{table}
\end{small}

We summarize our results, as well as known results from the literature, in Table \ref{tab:summary-prop}. As one can immediately see from the table, the current implementation of the discovery program does not satisfy any of the attractive features we investigate, yet the other two mechanisms, \texttt{MR} and \texttt{JSA}, satisfy all these properties. This is even true when {\it all the schools rank students in the same order}, as in the NYC SHS admission market where students are ranked based on their SHSAT scores. We additionally demonstrate these findings empirically by computational experiments using the admission data on NYC SHSs (the details can be found in Section \ref{sec:data}). These results suggest that the discovery program could benefit by replacing the current implementation with either minority reserve or joint seat allocation. This result calls for a direct comparison of those mechanisms. 

\subsubsection{Dominance across Mechanisms with Reserved Seats}

\paragraph{Question 2.} \textit{Considering a fixed reservation quota. Does one of the mechanisms with reserved seats (\ttup{DISC}, \ttup{JSA} or \ttup{MR}) \emph{(weakly) dominate} another one for disadvantaged students, i.e., do all disadvantaged students weakly prefer the schools they are matched to under one mechanism compared to the other?}
\smallskip

We say that a mechanism A \emph{(weakly) dominates} another mechanism B for disadvantaged students if A places all disadvantaged students in schools they like at least as much as the schools they are placed in by B. Our results from Table \ref{tab:summary-prop} seem to suggest that the discovery program mechanism could be dominated by either minority reserve or joint seat allocation. However, this is \emph{not} the case, as shown by the results we summarize in Table \ref{tab:summary-compare}. All three mechanisms are incomparable, even under some pretty restrictive hypothesis: (1) schools rank students in the same order; and/or (2) reservation quotas being a smart reserve. The first hypothesis is common in markets where students' ranking is based on an entrance exam, such as the one for NYC SHSs, Chinese universities, and Indian IITs. The only exception to the incomparability results is that the baseline mechanism \noAA, under the second hypothesis, is dominated by minority reserve and joint seat allocation\footnote{This exception is simply another way of expressing the same results related to the third property in Table \ref{tab:summary-prop}.}.

\begin{small}
\begin{table}[t]
    \centering
    \setlength{\tabcolsep}{1pt}
    \begin{tabular}{l|lll|lll|lll|lll}
        \hline
        & \multicolumn{3}{c|}{\noAA} & \multicolumn{3}{c|}{\ttup{MR}} & \multicolumn{3}{c|}{\ttup{DISC}} & \multicolumn{3}{c}{\ttup{JSA}} \\
        \hline
        \noAA \hspace{2pt}
        & & & 
        & (\ding{55}) & (\ding{55}) & (\ding{55} Ex \ref{ex:AA-better}) 
        & (\ding{55}) & (\ding{55}) & (\ding{55} Ex \ref{ex:AA-better}) 
        & (\ding{55}) & (\ding{55}) & (\ding{55} Ex \ref{ex:AA-better}) \\
        \ttup{MR} 
        & (\ding{55} {\small [HYY]}) & (\ding{51} {\small [HYY]}) & (\ding{51}) 
        & & & 
        & (\ding{55}) & (\ding{55}) & (\ding{55} Ex \ref{ex:disc-no-compare}) 
        & (\ding{55}) & (\ding{55}) & (\ding{55} Ex \ref{ex:mr-jsa-no-compare}) \\
        \ttup{DISC} 
        & (\ding{55}) & (\ding{55}) & (\ding{55} Ex \ref{ex:disc-fair-sp}) 
        & (\ding{55}) & (\ding{55}) & (\ding{55} Ex \ref{ex:disc-no-compare}) 
        & & & 
        & (\ding{55}) & (\ding{55}) & (\ding{55} Ex \ref{ex:disc-no-compare}) \\
        \ttup{JSA} 
        & (\ding{55} \ref{ex:jsa-mr-worse}) & (\ding{51} Thm \ref{thm:jsa-smart-all-better}) & (\ding{51}) 
        & (\ding{55}) & (\ding{55}) & (\ding{55} Ex \ref{ex:mr-jsa-no-compare}) 
        & (\ding{55}) & (\ding{55}) & (\ding{55} Ex \ref{ex:disc-no-compare}) \\\hline 
    \end{tabular}
    \vspace{0.1cm}
    \captionsetup{justification=raggedleft,singlelinecheck=false}
    \caption{\raggedright The table answer the following question under the school-over-seat assumption: does the ``row'' mechanism dominates the ``column'' mechanism for disadvantaged students? We answer the question for three restricted domains: (1) schools share a common ranking of the students, (2) the reservation quotas is a smart reserve, and (3) both. The answers are given in the exact order. All answers are accompanied by the citations with [HYY] \citet{hafalir2013effective} or the labels of the examples or theorems used to answer the questions, except for cases when the answer for one domain can be inferred from that of another domain. \vspace{-2.5em}} 
    \label{tab:summary-compare}
\end{table} 
\end{small}

\paragraph{Question 3.} \textit{Between \ttup{MR} and \ttup{JSA}, is one better for the NYC SHS market under consideration? Can we deduce general properties that imply domination between these two mechanisms?}
\smallskip

To be able to identify crucial interventions for the discovery program, we study the behavior of the \ttup{JSA} and \ttup{MR} mechanisms in markets that satisfy a condition which we call \emph{high competitiveness}. 
This is a novel ex-post condition which guarantees that \texttt{JSA} weakly dominates \texttt{MR} for disadvantaged students. This condition is verified by our data from NYC DOE, where in fact \texttt{JSA} outperforms \texttt{MR} for disadvantaged students. We also show reasonable conditions on the primitives of the market that imply high competitiveness.  See Theorems \ref{thm:jsa-dom-mr-condition}, \ref{thm:cs-highly-competitive}, and \ref{thm:hc-potential} for the formal statement. Roughly speaking, the high competitiveness condition is satisfied when the demand for seats (i.e., number of students) is much larger than the supply, and when disadvantaged students are performing systematically worse than advantaged students. For the latter, we compare the distribution of SHSAT scores for both the advantaged and disadvantaged groups of students, and notice that there is a distributional shift between the scores of these two groups of students (see Figure \ref{fig:score-distributions}).

\subsubsection{Case Study based on Data from New York City's Department of Education.}\label{sec:subsub:intro-case-study}

We validate our theoretical results with extensive computational experiments using data we obtained from NYC DOE for the 2005-2006 to 2016-2017 academic years, where we label students as advantaged or disadvantaged based on the criteria given by the discovery program. First, we show that, in practice as well, the discovery program suffers from many of the theoretical drawbacks we presented in Table~\ref{tab:summary-prop} -- in particular, the discovery program creates in-group blocking pairs (Figure \ref{fig:block-pair-disc}) and does not respect improvements (Figure \ref{fig:disc-who-worse}). In terms of strategy-proofness, we are unable to observe systematic strategic behaviors from disadvantaged students. However, this is not surprising, as it is hard to detect if a preference list has been manipulated and it has moreover been well observed in the literature that strategic behaviors are unlikely to occur in large markets due to lack of information (see, e.g., \cite{kesten2010school,roth1999truncation,kojima2009incentives}). See section \ref{sec:res-dis} for details.

When comparing the mechanisms with the same reservation quotas, we observe that the discovery program results in the highest number of disadvantaged students admitted, whereas minority reserve has the lowest amount (see Figure \ref{fig:perc_in_school}). One may be tempted to deduce that the discovery program is the best for the disadvantaged group of students as a whole. However, the number of admits can easily be increased at the policy-maker's will by increasing the number of the reserved seats, while the negative impact of unfair seat allocation cannot be dealt with by a simple perturbation of the parameters.

In addition, given the observation that disadvantaged students are in general performing worse than advantaged students (see Figure \ref{fig:score-distributions}), it would undoubtedly lead to underrepresentation of disadvantaged students at these SHSs under the baseline mechanism (see Figure \ref{fig:perc_in_school}). Together with the fact that there is a limited number of seats when compared to the number of students applying to SHSs, we expect the market to be highly competitive and thus all disadvantaged students would weakly prefer their assignment under \ttup{JSA} than under \ttup{MR}. We indeed observe these characteristics for the NYC SHS admission market across all academic years we have data for (see Figure~\ref{fig:mr-jsa-adv-dis}). This leads to the policy recommendation we present in this work.

\subsubsection{Policy Recommendation.} \label{sec:policy-recommendation} 

Overall, our work paves the way to make the discovery program fairer for disadvantaged students. In particular, we provide an answer to how the existing practice of the discovery program can be changed minimally to improve the outcome for the disadvantaged group of students, so that the program aligns with the incentives to perform better.

\paragraph{Our Proposal:} We propose that the program takes into account the preferences of students in terms of the schools versus seats. Is attending a particular school more important than the type of seat they are assigned to or vice versa? We believe that most students should be willing to take a one-time 3-week summer program to attend a school they prefer, rather than not taking the program and attending, for 4 years, a school they prefer less. We find that this school-over-seat hypothesis is supported by the fact that preferences appear to be strongly polarized for certain schools due to, e.g., geographical considerations (details are reported in the Appendix, Section \ref{sec:res-preference}). Although this seems reasonable, unfortunately such preferences\footnote{In case where the school-over-seat assumption does not hold for a significant amount of students, a \emph{direct} mechanism which explicitly asks students to rank \emph{contracts} at schools (i.e., school and seat type) would be more suitable. \citet{sonmez2013matching} developed a mechanism that is fair, strategy-proof, and respects improvements, and the mechanism can be easily implemented for the NYC SHS market: order the students based on their SHSAT scores with ties broken randomly, then one at a time assign to students their most preferred contract that are still available.}  are currently not collected in the data provided by the NYC DOE.

Under the school-over-seat assumption, we find that the many drawbacks of the current implementation of the discovery program can be corrected by following the {\it joint seat allocation} mechanism. For the NYC Specialized High School market -- and, more generally, for highly competitive markets -- joint seat allocation gives a matching that is weakly better for disadvantaged students, when compared to matching output by the other replacement mechanism studied in this paper, both in theory and in practice. 

Although powerful, the modification we propose requires \emph{minimal} modification: there is essentially no change in terms of what students and schools should report to the DOE (preference lists for both and admission capacity for schools), and there is no change in terms of the algorithm (the deferred acceptance algorithm \citep{gale1962college}, which is currently in implementation). Given this information, to implement the \ttup{JSA} mechanism, one only needs to compute an equivalent instance \citep{kominers2016matching,aygun2020dynamic} where students' preference lists are expanded to be over reserved and general seats at schools, so that the matching we desire to obtain can be easily recovered from the matching obtained under the classical stable matching model on this equivalent instance.

\medskip
Details of the equivalent instance can be found in Appendix \ref{sec:aux-instances}. Before we delve deeper into our model and results, we would like to highlight a trade-off that any constrained resource allocation problem faces. Diverting resources to the disadvantaged groups can result in taking some resources that are currently assigned to the advantaged groups. In this work as well, we find from our empirical analysis, that advantaged students always weakly prefer their assignment under \ttup{MR} compared to \ttup{JSA}. For all the academic years we analyze, we find that about $3\%$ of the advantaged students are worse off under \ttup{JSA} than under \ttup{MR} (i.e., about $97\%$ of them are matched to the same school under the two mechanisms); and among the $3\%$, most of them experience a drop in the rank of assigned schools that is at most two. See Figure \ref{fig:mr-jsa-adv-dis} for details of one academic year. We consider this impact to be minimal compared to the ill-treatment faced by the disadvantaged students.

\subsection{Related literature} \label{sec:literature}

The problem of assigning students to schools (without reserving seats for disadvantaged students) was first studied by Gale and Shapley in their seminal work \citep{gale1962college}. \citet{abdulkadirouglu2003school} then analyzed the algorithm in the context of school choice and recommended school districts to replace their current mechanisms with either this algorithm or another algorithm, called the \emph{top trading cycle algorithm}. Since then, these mechanisms have been widely adopted by many cities in the United States, such as New York City and Boston. 

The first attempt of incorporating seat reservation with the stable mechanism occurred in this pioneering work \citep{abdulkadirouglu2003school}, where they extended their analysis to a simple affirmative action policy, using \emph{majority quotas}. However, \citet{kojima2012school} then analyzed the effects of these proposed affirmative action policies, as well as priority-based policies, and showed that in some cases, the mechanisms might hurt disadvantaged students, the very group these policies are trying to help. \citet{hafalir2013effective} further analyze the effect empirically through simulated data and suggested that this phenomenon might be quite common, and does not just happen in theory due to special edge cases. In addition, to overcome the efficiency loss, they propose the minority reserve mechanism.

Since then, there has been many work studying and proposing solutions for the efficiency loss due to seat reservation, such as \citet{afacan2016affirmative,dougan2016responsive,echenique2015control,ehlers2014school,fragiadakis2017improving,jiao2021school,nguyen2019stable}. 

Mostly related to our work are those that study the effects of the precedence order under which different types of seats are allocated (a special case is when there are only two types: reserved and general). \citet{kominers2016matching} is the first to study the importance of this precedence order in an abstract and general framework. \citet{dur2018reserve} then extended upon this work in the context of school choice and show its role in explaining why the walk zone reserve in Boston does not have the intended impact. Motivated by a school choice application in Chicago, \citet{dur2020explicit} compare mechanisms where multiple tiers of students are present, and each tier have some seats where they have priority over other tiers. Such mechanisms include, in particular, \ttup{JSA} and \ttup{MR}. They show that the precedence order can provide an ``additional lever to explicitly target disadvantaged applicants". In particular, their main results imply conditions under which the \emph{number} of disadvantaged students admitted by \ttup{JSA} is at least that by \ttup{MR}. In comparison, we give conditions on the \emph{quality} of the matching \emph{for each disadvantaged student}. As discussed in Section~\ref{sec:subsub:intro-case-study}, we believe that comparing mechanisms in terms of the quality of the matching for individual disadvantaged students gives a perspective complementary to the one that looks at the number of admitted disadvantaged students. Moreover, the hypotheses of~\cite{dur2020explicit} on preferences of agents appear to be more restrictive than ours; for instance, they require all schools to have the same ranking of the students. 

Other works that study precedence order in school choice include \citet{sonmez2022affirmative} for India's affirmative action system with both vertical and horizontal reservation policies and \citet{aygun2021college} for Brazil's affirmative action system. Moreover, \citet{delacretaz2021processing} proposed a simultaneous reserve system that treats all types of seats identically. \citet{pathak2023reversing} studied precedence order from a policy perspective, showing how misunderstanding of the precedence order or the reserve system affects decisions from applications of reserve systems. There are also works that studied other real-world applications besides school choice, such as H1B-visa allocation \citep{pathak2022immigration}, vaccine allocation \citep{pathak2021fair}, and cadet-branch matching in U.S.~military \citep{sonmez2013matching}.

Another popular form of affirmative action are \emph{priority-based} mechanisms (see, e.g., \cite{hafalir2013effective,jiao2021school,kojima2012school}), which creates a higher priority for disadvantaged students by, e.g., boosting their scores. Though this mechanism  satisfies important properties such as strategy-proofness and absence of in-group blocking pairs, its practical use is being largely debated. For example, in 2019, the college board proposed adding an adversity score to SAT scores to account for socio-economic differences, however, this was met with severe pushback \citep{sat-adversity}. In another lawsuit at the University of Michigan challenging a priority-based mechanism that assigned 20 points extra to disadvantaged students, the system was declared unconstitutional by the Supreme Court \citep{Gratz.vs.Bollinger}.~\cite{faenza2020impact} investigates the effects of policies where scores for minority students are boosted before the admission process by extra training, additional resources, etc. Since the goal of this work is to focus on operational suggestions to the discovery program, we do not explore priority-based mechanisms. 

\subsection{Outline}

The rest of the paper is organized as follows. In Section \ref{sec:model-notation}, we introduce the basic model and related concepts for stable matchings and stable matching mechanisms. In Section \ref{sec:affirmative-action}, we formally introduce the mechanisms with reserved seats considered in this paper and investigate their properties and answer Question 1. We then compare these mechanisms in Section \ref{sec:aa-compare} and provide the answer to Question 2 and Question 3. Lastly, in Section \ref{sec:data}, we dive into the data on NYC SHS admission, demonstrate our theoretical findings empirically and provide additional observations.

\section{Model and Notations} \label{sec:model-notation}

\subsection{Matchings and mechanisms}

Let $S$ and $C$ denote a finite set of students and schools, respectively. Let $G=(S\cup C, E)$ be a bipartite graph, where two sides of nodes are students and schools, and the edge set $E$ represents the schools which students find \emph{acceptable} (i.e., would like to attend). Every student $s\in S$ has a strict preference relation $>_s$ (which we call the \emph{preference list} of student $s$) over the schools they find acceptable and the option of being unassigned (denoted by $\emptyset$). Formally, for two options $c_1, c_2\in C\cup \{\emptyset\}$, $c_1 >_s c_2$ means that student $s$ strictly prefers $c_1$ to $c_2$. For every student-school pair $(s,c)$, we let $c>_s \emptyset$ if $(s,c)\in E$, and $\emptyset>_s c$ otherwise. There are two types of students, \emph{advantaged} (or \emph{majority}) and \emph{disadvantaged} (or \emph{minority}), denote by $S^M$ and $S^m$ respectively. That is, $S=S^M \dot\cup S^m$ where $\dot\cup$ is the disjoint union operator. On the other hand, every school $c$ has a quota $q_c\in \N\cup \{0\}$, which represents the maximum number of students it can admit, and a strict \emph{priority order} $>_c$ over the students: for any two students $s_1, s_2\in S$, $s_1 >_c s_2$ means that student $s_1$ has a higher \emph{priority} (e.g., higher test score) than student $s_2$ at school $c$. 

Let $>_S\equiv \{>_s: s\in S\}$, $>_C\equiv \{>_c: c\in C\}$, and $\q \equiv \{q_c: c\in C\}$ denote the collection of students' preference lists, their priority orders at schools, and schools' quotas, respectively. Moreover, we write $>\equiv \{>_S, >_C\}$. An \emph{instance (or market)} is thus denoted by $(G, >_S, >_C, \q)$ or $(G, >, \q)$. 

A \emph{matching} $\mu$ (of an instance) is a collection of student-school pairs such that every student is incident to at most one edge in $\mu$ and every school $c$ is incident to at most $q_c$ edges in $\mu$. For student $s\in S$ and school $c\in C$, we denote by $\mu(s)$ the school student $s$ is matched (or assigned) to, and by $\mu(c)$ the set of students school $c$ is matched (or assigned) to, under matching $\mu$.

For every school $c\in C$, let $q_c^R\in \{0,1,\cdots, q_c\}$ denote the number of seats reserved to disadvantaged students at school $c$, and let $q_c^G \coloneqq q_c-q_c^R$ denote the number of general seats at school $c$. We call $\q^R \coloneqq \{q^R_c: c\in C\}$ the \emph{reservation quotas}. A \emph{(matching) mechanism with reserved seats} is a function that maps every instance, together with reservation quotas, to a \emph{matching}. Given an instance $I= (G, >, \q)$, a mechanism $\phi$, and reservation quotas $\q^R$, let $\phi(I, \q^R)$ denote the matching obtained under the mechanism $\phi$ with reservation quotas $\q^R$. Sometimes, when the reservation quotas are clear from context, we simply denote the matching as $\phi(I)$. 

We say priority orders $\tilde{>}_C$ is an \emph{improvement} of $>_C$ for student pair $s\in S$ if $\tilde{>}_C$ is obtained from $>_C$ by increasing the priorities of student $s$ in some schools in $C$, while leaving the relative priority orders of other students unchanged. A mechanism is said to \emph{respect improvements} if for any instance $I=(G,>_S,>_C,\q)$, student $s\in S$, and an improvement $\tilde >_C$ of $>_C$ for student $s$, we have that $\phi(\tilde I, \q^R)(s) \ge_s \phi(I, \q^R)(s)$, where $\tilde I$ is obtained from $I$ by replacing $>_C$ with $\tilde >_C$.

Let $\mu_1, \mu_2$ be two matchings. We say $\mu_1$ \emph{(weakly) dominates} $\mu_2$ \emph{for disadvantaged students} if $\mu_1(s) \ge_s \mu_2(s)$ for all disadvantaged students $s\in S^m$. If moreover $\mu_1\neq \mu_2$ (i.e., there is at least one disadvantaged student $s\in S^m$ such that $\mu_1(s) >_s \mu_2(s)$), then we say $\mu_1$ \emph{Pareto dominates} $\mu_2$ \emph{for disadvantaged students}. A student-school pair $(s,c)\in E$ is a blocking pair of matching $\mu$ \emph{for disadvantaged students} if $s\in S^m$, $c>_s \mu(s)$, and there exists a disadvantaged student $s'\in \mu(c)\cap S^m$ such that $s>_c s'$; and it is a blocking pair of matching $\mu$ \emph{for advantaged students} if $s\in S^M$, $c>_s \mu(s)$, and there exists an advantaged student $s'\in \mu(c)\cap S^M$ such that $s>_c s'$. A blocking pair is called an \emph{in-group} blocking pair if it is a blocking pair for either disadvantaged or advantaged students. 

Fix reservation quotas $\q^R$. A mechanism $\phi$ is \emph{strategy-proof} if for any instance $I$ and for any student $s\in S$, there is no preference list $\tilde >_s$ such that $\phi(\tilde I, \q^R)(s) >_s \phi(I, \q^R)(s)$, where $\tilde I$ is obtained from $I$ by replacing $>_s$ with $\tilde >_s$. In other words, a mechanism is strategy-proof if no student has the incentive to misreport their preference list. As a stronger concept, a mechanism is \emph{weakly group strategy-proof} if for any instance $I$ and for any \emph{group} of students $S_1 \subseteq S$, there are no preference lists $\{\tilde >_s: s\in S_1\}$ such that for every student $s\in S_1$, $\phi(\tilde I, \q^R)(s) >_s \phi(I, \q^R)(s)$, where $\tilde I$ is obtained from $I$ by replacing $>_s$ with $\tilde >_s$ for every $s\in S_1$. That is, a mechanism is weakly group strategy-proof if no group of students can jointly misreport their preference lists so that everyone in the group is strictly better off. Note that if a mechanism is weakly group strategy-proof, it is strategy-proof.

Consider two mechanisms $\phi_1$ and $\phi_2$. If $\phi_1(I, \q^R)$ (weakly) dominates $\phi_2(I, \q^R)$ for disadvantaged students for all instances $I$, we say that mechanism $\phi_1$ \emph{(weakly) dominates} mechanism $\phi_2$ for disadvantaged students. If neither $\phi_1$ nor $\phi_2$ dominates the other mechanism, we say they are \emph{not comparable} or \emph{incomparable}.

\subsection{Choice functions} \label{sec:choice-function}

To unify the treatment of the different mechanisms seen in the paper, we next introduce the concept of choice functions. Under each mechanism, every school $c\in C$ is endowed with a \emph{choice function} $\Ch_c: 2^S \rightarrow 2^S$: for every subset of students $S_1\subseteq S$, $\Ch_c(S_1)$ represents the students whom school $c$ would like to admit among those in $S_1$. In particular, for every $S_1\subseteq S$, we have $\Ch_c(S_1) \subseteq S_1$ and $|\Ch_c(S_1)|\le q_c$. Choice function $\Ch_c$ is a function of the priority order $>_c$ and quotas $q_c^R$ and $q_c^G$, and its exact definition depends on the specific mechanism (see Section \ref{sec:aa-compare}). Students' preferences are still described by a strict order over a subset of schools.

For all mechanisms studied in this paper, every school $c$'s choice function $\Ch_c$ satisfies the following classical (see, e.g.,~\cite{alkan2002class}) properties: \emph{substitutability}, \emph{consistency}, and \emph{$q_c$-acceptance}\footnote{$q_c$-acceptance is also referred to as \emph{quota-filling} by some authors. However, we prefer to use $q_c$-acceptance since it highlights the quota.}. Thus, for the rest of the paper, unless otherwise specified, these properties are always assumed. For some mechanisms, $\Ch_c$ is additionally \emph{$q_c$-responsive}. Intuitively, Substitutability states that whenever a student is selected from a pool of candidates, they will also be selected from a smaller subset of the candidates; consistency is also called ``irrelevance of rejected contracts'', which means that removing rejected candidates from the input does not change the output; $q_c$-acceptance means that the choice function fills the $q_c$ positions as much as possible; and $q_c$-responsiveness means that there is an underlying priority order over the students and the choice function simply selects $q_c$ students with the highest priorities whenever available. Formal definitions are included in Appendix \ref{sec:def-ch-func}.

For any nonnegative integer $q$, a priority order over the students $>$, and a subset of students $S_1\subseteq S$, let $\max(S_1, >, q)$ denote the $\min(q, |S_1|)$ highest ranked students (i.e., students with the highest priorities) of $S_1$ according to the priority order $>$. We further note that $q$-responsiveness implies substitutability, consistency, and $q$-acceptance. Indeed, $q$-responsive choice functions are the ``simplest'' choice functions and are mostly studied in the matching literature, including the seminal work by~\citet{gale1962college} and in practical school choice~\citep{NYC,BOSTON}.

\subsection{Stable matchings} 

Consider an arbitrary collection of schools' choice functions $\Ch\coloneqq \{\Ch_c: c\in C\}$. Note that the $q_c$-acceptant property implies that for every school $c$, we must have $\Ch_c(\mu(c)) = \mu(c)$ by any matching $\mu$ by the definition of matchings. A matching $\mu$ is \emph{stable} (in instance $I$ under choice functions $\Ch$) if there is no student-school pair $(s,c)\in E$ such that $c>_s \mu(s)$ and $s\in \Ch_c(\mu(c)\cup \{s\})$. When such a student-school pair exists, we call it a \emph{blocking pair} of $\mu$, or we say that the edge (or pair) \emph{blocks} $\mu$. Note that the definition of matchings only depends on the instance, not on the choice functions; whereas the definition of stability depends on both.

When the choice function is $q_c$-responsive (i.e., induced by a priority order and a quota), the definition of stability with respect to choice functions is equivalent to the standard definition in the classical model without choice functions. In particular, the condition $s\in \Ch_c(\mu(c)\cup \{s\})$ can then be stated as: either school $c$'s seats are not fully assigned (i.e., $|\mu(c)|<q_c$) or $s$ has a higher priority over some students that are assigned to $c$ (i.e., $\exists s'\in \mu(c)$ such that $s>_c s'$).

Among all stable matchings of a given instance and choice functions, there is one that \emph{dominates} every stable matching, where matching $\mu_1$ is said to \emph{dominate} matching $\mu_2$ if $\mu_1(s) \ge_s \mu_2(s)$ for all students $s\in S$.  This stable matching is called the \emph{student-optimal} stable matching, and it can be obtained by the \emph{student-proposing deferred acceptance algorithm}~\citep{gale1962college,roth1984stability}, which we describe next. The algorithm runs in \emph{rounds}. At each round $k$, every student applies to their most preferred school that has not rejected them; and every school $c$, with $S^{(k)}_c$ denoting the set of students who applied to it in the current round, \emph{temporarily} accepts students in $\Ch_c(S^{(k)}_c)$ and rejects the rest. The algorithm terminates at the first iteration $k$ when there is no rejection and outputs the matching $\mu$ with $\mu(c)=S_c^{(k)}$ for every school $c$. For any instance $I$ and choice functions $\Ch$, we denote by $\ttup{SDA}(I, \Ch)$ the matching output by the student-proposing deferred acceptance algorithm.

\section{Mechanisms} \label{sec:affirmative-action}

For the rest of the section, we fix an instance $I=(G, >, \q)$ and reservation quotas $\q^R$. The choice functions of schools depend on the mechanisms, and we introduce them in details in each subsection. We also discuss the features of the mechanisms in their corresponding subsections.  The unified treatment presented here allow us to compare the different mechanisms, with the goal of understanding which is the one that best fit our improvement goals for the discovery program. We defer all proofs in this subsection to Appendix \ref{sec:app:missing-proof}.

\subsection{Baseline mechanism} \label{sec:noAA}

The simplest mechanism is the one where schools do not distinguish students of different types. In this case, the choice function of school $c$ under this baseline mechanism is $q_c$-responsive, simply induced from its priority order: for all subset of students $S_1\subseteq S$, $$\Ch_c^{\noAA}(S_1) \coloneqq \max(S_1, >_c, q_c).$$ We denote by $\mu^{\noAA} \coloneqq \ttup{SDA}(I, \Ch^{\noAA})$ the matching under the baseline mechanism. Although this matching can be obtained from the original and simpler deferred acceptance algorithm proposed by~\citet{gale1962college}, we present the mechanism from a choice function point of view so that it is consistent with later sections.

\subsection{Discovery program} \label{sec:DISC}

This mechanism is adapted from the policy used by NYC DOE for increasing the number of disadvantaged students at the city's eight specialized schools, which are considered to be the best public schools. The discovery program mechanism distributes reserved seats to disadvantaged student at the end of seat-assignment procedure. One of the reasons for allocating reserved seats to lower ranked disadvantages students is that disadvantaged students who are admitted via reserved seats are required to participate in a 3-weeks \emph{summer enrichment program} as a preparation for the specialized high schools. 

However, for the sake of comparison (with other mechanisms), we assume that students' preference for schools are not affected by whether they are required to participate in the summer program -- that is, students are indifferent between general and reserved seats at each school. We assume this \emph{school-over-seat} hypothesis for the rest of the paper, and we discuss its validity in the Appendix, Section \ref{sec:res-preference}. See also Section \ref{sec:policy-recommendation} for a further discussion.

When there is a shortage of disadvantaged students, reserved seats could go unassigned under the discovery program mechanism. Although this is usually not of concern in real-world applications, since there are usually more students than available seats, we nevertheless present the discovery program mechanism in a more general case where vacant reserved seats are de-reserved \citep{aygun2020dynamic}.

The algorithm for the discovery program mechanism has three stages. Schools' choice functions at all stages are the simple $q$-responsive choice function $\Ch^{\noAA}$. The mechanism starts by running the deferred acceptance algorithm on instance $(G, >, \q^G)$ to obtain matching $\mu^{\ttup{DISC}}_1$ for the general seats; it then runs the deferred acceptance algorithm on the instance restricted to the disadvantaged students that are not yet assigned $(G[C\cup \{s\in S^m: \mu^{\ttup{DISC}}_1(s)= \emptyset\}], >, \q^R)$ to obtain matching $\mu^{\ttup{DISC}}_2$ for reserved seats; and it lastly runs the deferred acceptance algorithm on the instance restricted to the advantaged students that are not yet assigned $(G[C\cup \{s\in S^M: \mu^{\ttup{DISC}}_1(s) = \emptyset\}, >, \q^E)$ with $q^E_c = q^R_c - |\mu^{\ttup{DISC}}_2(c)| \;\forall c\in C$ to obtain matching $\mu^{\ttup{DISC}}_3$ for vacant reserved seats. The final matching combines the matchings obtained at these three stages: $\mu^{\ttup{DISC}} \coloneqq \mu^{\ttup{DISC}}_1 \dot\cup \mu^{\ttup{DISC}}_2 \dot\cup \mu^{\ttup{DISC}}_3$.

Although the mechanism intends to help disadvantaged students, it could actually hurt them. As we show in Example \ref{ex:disc-all-worse}, under the discovery program mechanism, it is possible that all disadvantaged students are worse off. Moreover, the discovery program mechanism could create blocking pairs for disadvantaged students, incentivize disadvantaged students to misrepresent their preference lists or to under-perform, and might hurt disadvantaged students even when the reservation quotas are a smart reserve (see Example \ref{ex:disc-fair-sp}).
\begin{example} \label{ex:disc-all-worse}
    Consider the instance with students $S^M= \{s^M_1, s^M_2\}$, $S^m= \{s^m_1\}$ and schools $C= \{c_1, c_2\}$. The quotas of schools are $q_{c_1}=2$ and $q_{c_2}=1$, and both schools have priority order $s^M_1 > s^M_2 > s^m_1$. Both advantaged students prefer $c_1$ to $c_2$, whereas the disadvantaged student prefers $c_2$ to $c_1$. It is easy to see that under the baseline mechanism, $$\mu^{\noAA} = \{(s^M_1, c_1), (s^M_2, c_1), (s^m_1, c_2)\}.$$ Now consider the discovery program mechanism with reservation quotas $q_{c_1}^R=1$ and $q_{c_2}^R=0$. Then, $$\mu^{\ttup{DISC}} = \{(s^M_1, c_1), (s^M_2, c_2), (s^m_1, c_1)\}.$$ Under the discovery program mechanism, the disadvantaged student $s^m_1$ is not only assigned to a school less preferred less, but is also now required to participate in the summer program. \EOE
\end{example}

\begin{example} \label{ex:disc-fair-sp}
    Consider the instance with students $S^M= \{s^M_1, s^M_2, s^M_3\}$, $S^m= \{s^m_1, s^m_2, s^m_3\}$ and schools $C= \{c_1, c_2\}$. The quotas of schools are $q_{c_1}=3$ and $q_{c_2}=2$, and both schools have priority order $s^M_1 > s^M_2 > s^m_1 > s^M_3 > s^m_2 > s^m_3$. All students prefer $c_1$ to $c_2$. We have $$\mu^{\noAA}(c_1) = \{s_1^M, s_2^M, s_1^m\}, \quad \mu^{\noAA}(c_2) = \{s_3^M, s_2^m\}.$$ Now assume that the reservation quotas are $q^R_{c_1} = q^R_{c_2} = 1$, which in particular is a smart reserve. Under the discovery program mechanism with these reservation quotas, we have $$\mu^{\ttup{DISC}}(c_1) = \{s_1^M, s_2^M, s_2^m\}, \quad \mu^{\ttup{DISC}}(c_2) = \{s_1^m, s_3^m\}.$$ Disadvantaged student $s_1^m$ is worse off under $\mu^{\ttup{DISC}}$ than under $\mu^{\noAA}$. In addition, $\mu^{\ttup{DISC}}$ admits a blocking pair $(s_1^m, c_1)$ as $s_1^m$ prefers $c_1$ to $c_2$ and $s_1^m$ has a higher priority than $s_2^m$ at $c_1$. 
    
    One can see from this example that the discovery program neither is strategy-proof, nor it respects improvements: If $s_1^m$ were to report the preference list as $c_1> \emptyset$ or if $s_1^m$ were to under-perform and reduce their priority standing by one spot (i.e., switch their priority standing with $s_3^M$), the matching under the discovery program mechanism would have been the same as $\mu^\noAA$. \EOE
\end{example}

\subsection{Minority reserve} \label{sec:MR}

Under minority reserve, the choice function of every school $c\in C$, denoted by $\Ch^{\ttup{MR}}_c$, is defined as follows~\citep{hafalir2013effective}: for every subset of students $S_1\subseteq S$, $$\Ch^{\ttup{MR}}_c(S_1) = \underbrace{\max(S_1 \cap S^m, >_c, q_c^R)}_{\eqqcolon S_1^R; \; \textup{reserved seats}} \; \dot\cup \; \underbrace{\max \left( S_1 \setminus S_1^R, >_c, q_c- |S_1^R|) \right)}_{\textup{remaining seats}}.$$ That is, every school first accepts disadvantaged students from its pool of candidates up to its reservation quota, and then fills up the remaining seats from the remaining candidates. Note that if there is a shortage of disadvantage students (i.e., $|S_1\cap C^m| < q_c^R$), then the remaining reserved seats become open to advantaged students. 

\begin{proposition} \label{prop:mr-choice}
    Choice function $\Ch^{\ttup{MR}}_c$ is substitutable, consistent, and $q_c$-acceptant.
\end{proposition}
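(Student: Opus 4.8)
The plan is to verify the three properties one at a time, each time reducing the claim to an elementary fact about the plain responsive operator $\max(\cdot,>_c,\cdot)$ that appears twice in the definition of $\Ch^{\ttup{MR}}_c$. Two such facts will be used throughout, and both are immediate from the definition of $\max$: (a) \emph{deletion monotonicity} --- if $s\in\max(X,>_c,q)$ and $y\in X$ with $y\neq s$, then $s\in\max(X\setminus\{y\},>_c,q)$; and (b) \emph{consistency of $\max$} --- if $\max(X,>_c,q)\subseteq Y\subseteq X$ then $\max(Y,>_c,q)=\max(X,>_c,q)$. I will also use that the reserved block $S_1^R$ has size $\min(q_c^R,|S_1\cap S^m|)\le q_c$. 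Given these, $q_c$-acceptance is pure counting: $|\Ch^{\ttup{MR}}_c(S_1)|=|S_1^R|+\min(q_c-|S_1^R|,\,|S_1|-|S_1^R|)=\min(q_c,|S_1|)$.

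For consistency, assume $\Ch^{\ttup{MR}}_c(S_1)\subseteq S_2\subseteq S_1$. First I would show the reserved block is unchanged: since $S_1^R\subseteq\Ch^{\ttup{MR}}_c(S_1)\cap S^m\subseteq S_2\cap S^m\subseteq S_1\cap S^m$ and $S_1^R=\max(S_1\cap S^m,>_c,q_c^R)$, fact (b) gives $\max(S_2\cap S^m,>_c,q_c^R)=S_1^R$. Then the remaining block is unchanged too: the set $\Ch^{\ttup{MR}}_c(S_1)\setminus S_1^R=\max(S_1\setminus S_1^R,>_c,q_c-|S_1^R|)$ lies between $S_2\setminus S_1^R$ and $S_1\setminus S_1^R$, so fact (b) again makes $\max(S_2\setminus S_1^R,>_c,q_c-|S_1^R|)$ equal to it. Reassembling the two blocks gives $\Ch^{\ttup{MR}}_c(S_2)=\Ch^{\ttup{MR}}_c(S_1)$.

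For substitutability it suffices, by iterating over deleted students, to prove that if $s\in\Ch^{\ttup{MR}}_c(T)$, $t\in T$ and $t\neq s$, then $s\in\Ch^{\ttup{MR}}_c(T\setminus\{t\})$. Write $\Ch^{\ttup{MR}}_c(T)=A\,\dot\cup\,B$ with $A=\max(T\cap S^m,>_c,q_c^R)$ and $B=\max(T\setminus A,>_c,q_c-|A|)$, and let $A',B'$ be the corresponding sets for $T\setminus\{t\}$. The easy cases are $t\in S^M$ and $t\in S^m\setminus A$: in both one checks $A'=A$ (in the latter case $|A|=q_c^R$ and every member of $A$ survives the deletion), the pool feeding the general block is exactly $(T\setminus A)\setminus\{t\}$, and one concludes by splitting on $s\in A$ (then $s\in A'$) versus $s\in B$ (then fact (a) gives $s\in B'$). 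The real work is when $t\in A$. If $|T\cap S^m|\le q_c^R$ then $A=T\cap S^m$, $A'=A\setminus\{t\}$, the general pool $T\setminus A$ is unchanged while its quota $q_c-|A'|$ grows by one, so $B\subseteq B'$ and we are done. If $|T\cap S^m|>q_c^R$, deleting $t$ promotes exactly one new minority student $w$ (the highest-priority minority student of $T$ outside $A$) into the reserved block, so $A'=(A\setminus\{t\})\cup\{w\}$ and the general pool shrinks to $(T\setminus A)\setminus\{w\}$ with the same quota $q_c-|A|$; then $s\in A$ gives $s\in A'$ (since $s\neq t$), and $s\in B$ splits into $s=w$ (hence $s\in A'$) or $s\neq w$ (hence $s\in B'$ by fact (a)).

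I expect this last subcase --- a minority student leaving the reserved block, which simultaneously reshuffles the reserved block and changes which students compete for the general seats --- to be the only genuine obstacle; everything else is bookkeeping on top of facts (a) and (b). The same phenomenon is precisely why one should not expect $q_c$-responsiveness of $\Ch^{\ttup{MR}}_c$: deleting a chosen student can change the chosen set in a way that no single priority order over students could reproduce.
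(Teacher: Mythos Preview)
Your argument is correct. The treatments of consistency and $q_c$-acceptance match the paper's proof essentially line for line. For substitutability, however, you choose a different decomposition: you reduce to deleting a single student $t$ and then condition on the position of the \emph{deleted} student ($t\in S^M$, $t\in S^m\setminus A$, or $t\in A$), explicitly tracking how the blocks $A$ and $B$ change. The paper instead goes directly from $S_1$ to an arbitrary $S_2\subseteq S_1$ with $s\in S_2$ and conditions on the position of the \emph{retained} student $s$: if $s\in S_1^R$ then immediately $s\in S_2^R$ since the minority pool only shrinks; if $s\in\Ch^{\ttup{MR}}_c(S_1)\setminus S_1^R$, the first case gives $S_1^R\cap S_2\subseteq S_2^R$, hence $S_2\setminus S_2^R\subseteq S_1\setminus S_1^R$, and since the general-block quota $q_c-|S_2^R|\ge q_c-|S_1^R|$ can only grow, $s$ stays chosen. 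The paper's route is shorter---two cases, no iteration, and your ``hardest subcase'' never appears---whereas your route makes explicit the promotion mechanism you isolate (deleting $t\in A$ drags the next minority student $w$ into the reserved block and out of the general pool), which is precisely why $\Ch^{\ttup{MR}}_c$ need not be $q_c$-responsive.
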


Since substitutability and consistency guarantee the existence of stable matchings~\citep{aygun2013matching,hatfield2005matching,roth1984stability}, stable matchings exist under choice functions $\Ch^\ttup{MR}$ and we denote by $\mu^{\ttup{MR}} \coloneqq \ttup{SDA}(I ,\Ch^{\ttup{MR}})$ the matching under minority reserve with reservation quotas $\q^R$. Minority reserve has been shown to satisfy several desirable properties, as we summarized in Table \ref{tab:summary-prop}. 

The following claim follows directly from the fact that $\mu^\ttup{MR}$ is stable under choice functions $\Ch^\ttup{MR}$ and the definition of $\Ch^\ttup{MR}$.

\begin{proposition} \label{prop:mr-no-bp}
    $\mu^{\ttup{MR}}$ does not admit in-group blocking pairs.
\end{proposition}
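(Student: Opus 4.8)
The plan is to argue directly that stability under the choice functions $\Ch^\ttup{MR}$ forbids in-group blocking pairs, by unpacking what $s \in \Ch^\ttup{MR}_c(\mu^\ttup{MR}(c) \cup \{s\})$ means in the two cases $s \in S^m$ and $s \in S^M$. Fix the matching $\mu \coloneqq \mu^\ttup{MR}$, which is stable under $\Ch^\ttup{MR}$ by construction (it is the output of \ttup{SDA} with these choice functions). Suppose for contradiction that $(s,c) \in E$ is an in-group blocking pair. By definition of an in-group blocking pair, $c >_s \mu(s)$, and there is a student $s' \in \mu(c)$ of the same type as $s$ with $s >_c s'$. Since $\mu$ is stable, the pair $(s,c)$ does not block in the choice-function sense, i.e. $s \notin \Ch^\ttup{MR}_c(\mu(c) \cup \{s\})$. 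I would derive a contradiction from this last fact in each of the two type-cases.

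First I would treat the case $s \in S^m$. Write $T \coloneqq \mu(c) \cup \{s\}$ and recall $\Ch^\ttup{MR}_c(T) = T^R \dot\cup \max(T \setminus T^R,\, >_c,\, q_c - |T^R|)$ where $T^R = \max(T \cap S^m, >_c, q_c^R)$. Since $s \notin \Ch^\ttup{MR}_c(T)$, in particular $s \notin T^R$, so there are at least $q_c^R$ disadvantaged students in $T$ with priority above $s$; all of these lie in $\mu(c)$. But then $T^R \subseteq \mu(c)$ as well (the top $q_c^R$ disadvantaged students of $T$ are all in $\mu(c)$), and since $s$ is also rejected from the remaining seats, every student in $\max(T \setminus T^R, >_c, q_c - |T^R|)$ has priority above $s$; by $q_c$-acceptance (the set $\Ch^\ttup{MR}_c(T)$ has size $\min(q_c,|T|)$ and $|T| = |\mu(c)|+1 \le q_c+1$), this set together with $T^R$ accounts for at least... in any case $\Ch^\ttup{MR}_c(T) \subseteq \mu(c)$ and every member of it is $>_c s$. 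In particular the disadvantaged student $s' \in \mu(c)$ with $s >_c s'$ is rejected, i.e. $s' \notin \Ch^\ttup{MR}_c(T)$, yet $s$ is rejected and $s >_c s'$ — I would contradict this by exhibiting that $s'$ being chosen forces $s$ to be chosen. Concretely: consider $T' \coloneqq \mu(c)$; since $\mu$ is a matching, $\Ch^\ttup{MR}_c(\mu(c)) = \mu(c)$, so $s' \in \Ch^\ttup{MR}_c(T')$. Now apply substitutability in the form that adding $s$ to $T'$ cannot cause $s$ itself to be chosen while $s'$, ranked below $s$ and of the same type, is retained once $|\mu(c) \cup \{s\}| > q_c$: the clean way is to observe that $\Ch^\ttup{MR}_c$ is \emph{$q_c$-responsive when restricted to a single type}, so among $T \cap S^m$ it picks the top $q_c^R$; since $s >_c s'$ and $s' \in \mu(c) \cap S^m \subseteq T \cap S^m$, if $s$ is not among the top $q_c^R$ disadvantaged students of $T$ then neither is $s'$, forcing $s' \notin T^R$; and since the remaining seats are filled $q_c$-responsively from $T \setminus T^R$ with $s >_c s'$, if $s$ is rejected there so is $s'$. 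Hence $s' \notin \Ch^\ttup{MR}_c(T)$, contradicting... — wait, that is consistent; the actual contradiction comes the other way: $s'$ must be in $\Ch^\ttup{MR}_c(T)$ because $\mu$ restricted to $c$ is internally chosen and $|\mu(c)| \le q_c$, so removing $s$ (a rejected student) from $T$ leaves $\Ch^\ttup{MR}_c(T \setminus \{s\}) = \Ch^\ttup{MR}_c(\mu(c)) = \mu(c) \ni s'$ by consistency; but the monotonicity just described says $s \notin \Ch^\ttup{MR}_c(T)$ and $s >_c s'$ with same type forces $s' \notin \Ch^\ttup{MR}_c(T)$ too — and by consistency again $s' \in \Ch^\ttup{MR}_c(T)$ since $s'$ is chosen from the superset... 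I would clean this up by simply comparing the top-$q_c^R$ lists: $s' \in T^R \subseteq \Ch^\ttup{MR}_c(T)$ would follow if $s \in T^R$, which it is not; so $s'$ competes for remaining seats, and the contradiction is that the remaining-seat selection is $q_c$-responsive on $T \setminus T^R$, within which $s >_c s'$, so $s$ rejected implies $s'$ rejected, while consistency ($\Ch^\ttup{MR}_c(\mu(c)) = \mu(c)$ together with irrelevance of the rejected student $s$) gives $s' \in \Ch^\ttup{MR}_c(T)$ — contradiction.

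The case $s \in S^M$ is analogous but easier: an advantaged $s$ never enters $T^R$, so it competes only for the $q_c - |T^R|$ remaining seats, which are filled $q_c$-responsively by $>_c$ from $T \setminus T^R$; the blocking witness $s' \in \mu(c) \cap S^M$ with $s >_c s'$ also lies in $T \setminus T^R$, so $s$ rejected implies $s'$ rejected, again contradicting $\Ch^\ttup{MR}_c(\mu(c)) = \mu(c)$ via consistency. I expect the main obstacle to be purely expository: stating crisply the ``$q_c$-responsive within each competition stage'' property of $\Ch^\ttup{MR}$ and invoking substitutability/consistency (Proposition \ref{prop:mr-choice}) so that the single-student addition argument is airtight; once that lemma-level observation is isolated, both cases are one line each. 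The cleanest write-up may bypass substitutability entirely and argue straight from the explicit formula for $\Ch^\ttup{MR}_c$, as the proposition statement itself hints (``follows directly from ... the definition of $\Ch^\ttup{MR}$'').
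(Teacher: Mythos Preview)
Your approach is correct and essentially the same as the paper's: both derive a contradiction between the in-group blocking pair and stability of $\mu^\ttup{MR}$ under $\Ch^\ttup{MR}$. The paper's execution is much more direct, however: rather than assuming $s \notin \Ch^\ttup{MR}_c(\mu(c)\cup\{s\})$ and arguing via consistency that $s'$ must then also be rejected, it simply observes that the existence of a same-type $s'\in\mu(c)$ with $s>_c s'$ immediately gives $s \in \Ch^\ttup{MR}_c(\mu(c)\cup\{s\})$ from the definition of $\Ch^\ttup{MR}_c$ (exactly as your final remark anticipates), yielding a choice-function blocking pair and hence the contradiction in one line.
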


\subsection{Joint seat allocation} \label{sec:JSA}

The mechanism of joint seat allocation we discuss here is inspired by the mechanism used for admission to Indian Institutes of Technology~\citep{JoSAA,sonmez2022affirmative}. It allocates the general and reserved seats at the same time, while only allowing disadvantaged students to take the reserved seats when they cannot get admitted via the general seats. Under this mechanism, the choice function of every school $c\in C$, denoted by $\Ch^{\ttup{JSA}}_c$, is defined as follows. For every subset of students $S_1\subseteq S$,
\begin{equation*}
    \resizebox{.98\textwidth}{!}{
    $\Ch^{\ttup{JSA}}_c(S_1) = \underbrace{\max(S_1, >_c, q_c^G)}_{\eqqcolon S_1^G; \; \textup{general seats}} \; \dot\cup \; \underbrace{\max \left( S_1\cap S^m \setminus S_1^G, >_c, q_c^R \right)}_{\eqqcolon S_1^R; \; \textup{reserved seats}} \; \dot\cup \; \underbrace{\max(S_1\setminus (S_1^G\cup S_1^R), >_c, q_c-|S_1^G\cup S_1^R|)}_{\textup{remainning seats}}.$
    }
\end{equation*}

A prominent distinction between joint seat allocation and minority reserve is that in the former, ``highly ranked'' disadvantaged students are admitted via general seats and do not take up the quotas for reserved seats. Intuitively, this opens up more opportunities for disadvantaged students and one would expect all disadvantaged students to be weakly better off under joint seat allocation than under minority reserve. This is true for instances where the competition for seats is high, but is not true for general instances. See Section \ref{sec:aa-compare} and Theorem \ref{thm:jsa-dom-mr-condition} for more discussions on the comparison between these two mechanisms. 

\begin{proposition} \label{prop:jsa-choice}
    Choice function $\Ch^{\ttup{JSA}}_c$ is substitutable, consistent, and $q_c$-acceptant.
\end{proposition}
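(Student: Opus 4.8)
The plan is to verify the three properties directly from the explicit formula for $\Ch^{\ttup{JSA}}_c$, exploiting the fact that it is built as a disjoint union of three nested applications of the $\max(\cdot, >_c, \cdot)$ operator. Write $S_1^G = \max(S_1, >_c, q_c^G)$, $S_1^R = \max(S_1\cap S^m\setminus S_1^G, >_c, q_c^R)$, and let the third block be $S_1^E = \max(S_1\setminus(S_1^G\cup S_1^R), >_c, q_c-|S_1^G\cup S_1^R|)$, so that $\Ch^{\ttup{JSA}}_c(S_1) = S_1^G\,\dot\cup\,S_1^R\,\dot\cup\,S_1^E$.

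First, $q_c$-acceptance: I would show $|\Ch^{\ttup{JSA}}_c(S_1)| = \min(q_c, |S_1|)$. If $|S_1|\ge q_c$, the general block fills $q_c^G$ seats (assuming $|S_1|\ge q_c^G$, which holds), the remaining-seats block is asked for $q_c - |S_1^G\cup S_1^R|$ students from the leftover pool; the only thing to check is that the leftover pool $S_1\setminus(S_1^G\cup S_1^R)$ is large enough to supply that many, which follows from $|S_1|\ge q_c$. If $|S_1|<q_c$, then all of $S_1$ gets selected across the three blocks (each block takes everything available up to its cap, and the caps together exceed what is left). This is a short counting argument; I would organize it by the two cases $|S_1|\lessgtr q_c$ and within each track how many seats each block consumes.

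Second, substitutability: I must show that if $s\in\Ch^{\ttup{JSA}}_c(S_1)$ and $s\in S_2\subseteq S_1$, then $s\in\Ch^{\ttup{JSA}}_c(S_2)$. The clean way is to argue block by block using the priority threshold. Consider the three "cutoff" priorities induced on $S_1$: the lowest priority admitted via a general seat, via a reserved seat, and via a remaining seat. If $s$ was admitted to a general seat in $S_1$, then $s$ is among the top $q_c^G$ of $S_1$, hence among the top $q_c^G$ of the smaller set $S_2$, so $s\in S_2^G$. If $s$ was admitted to a reserved seat, then $s\in S^m$, $s\notin S_1^G$, and $s$ is among the top $q_c^R$ of $S_1\cap S^m\setminus S_1^G$; here the subtlety is that $S_2^G$ need not be contained in $S_1^G$ (a disadvantaged student high enough to take a reserved seat in $S_1$ might instead be promoted to a general seat in $S_2$), but in that case $s\in S_2^G\subseteq\Ch^{\ttup{JSA}}_c(S_2)$ anyway; otherwise $s\notin S_2^G$, and since $S_2\cap S^m\setminus S_2^G$ is "no worse" than a subset of $S_1\cap S^m\setminus S_1^G$ together with possibly some students who dropped out of the general block, one checks $s$ remains within the top $q_c^R$. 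The remaining-seats block is handled the same way. This case analysis — in particular tracking how the partition of the smaller set $S_2$ into the three blocks relates to that of $S_1$ — is the main obstacle, because the blocks are not monotone individually (only their union is well behaved). I expect the cleanest route is to prove it via the known fact that $\Ch^{\ttup{JSA}}_c$ is a "choice function generated by a precedence/capacity system" in the sense of \citet{kominers2016matching, aygun2020dynamic}, which are already cited in the paper; those frameworks establish substitutability and consistency of exactly such sequential-reservation choice functions, so I may simply invoke that characterization, or else reproduce the short direct argument above.

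Third, consistency (irrelevance of rejected contracts): I must show that for $\Ch^{\ttup{JSA}}_c(S_1)\subseteq S_2\subseteq S_1$ we have $\Ch^{\ttup{JSA}}_c(S_2)=\Ch^{\ttup{JSA}}_c(S_1)$. The key observation is that each $\max$ operation only ever looks at a prefix of the priority order, and removing from $S_1$ some students that were \emph{not} chosen cannot change which students sit above the relevant cutoffs: in the general block, the top $q_c^G$ of $S_1$ all lie in $S_2$, so $S_2^G=S_1^G$; given that, $S_1\cap S^m\setminus S_1^G$ and $S_2\cap S^m\setminus S_2^G$ have the same "top $q_c^R$" since the discarded students were below the reserved-seat cutoff; hence $S_2^R=S_1^R$; and finally the remaining-seats block is likewise unchanged. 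So consistency reduces to the elementary fact that $\max(S_1,>_c,q)=\max(S_2,>_c,q)$ whenever $\max(S_1,>_c,q)\subseteq S_2\subseteq S_1$, applied three times in sequence. I would state that elementary fact as a one-line lemma and then chain it through the three blocks. Overall, modulo invoking the cited general frameworks for the substitutability step, the whole proof is a careful but routine bookkeeping over the three-block structure, and I would present it as two short cases for $q_c$-acceptance plus a three-step chaining argument for each of substitutability and consistency.
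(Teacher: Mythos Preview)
Your proposal is correct and follows essentially the same route as the paper: a direct, block-by-block verification of the three properties using the nested $\max$ structure. In fact, you are more explicit than the paper about the ``promotion'' subtlety in the substitutability argument (that a student in $S_1^R$ might land in $S_2^G$ rather than $S_2^R$); the paper's proof simply asserts ``similarly $s\in S_2^R$'' and relies implicitly on the inclusion $S_2\setminus S_2^G\subseteq S_1\setminus S_1^G$, whereas you spell out both possibilities. The alternative you mention---invoking the slot-specific / de-reservation frameworks of \citet{kominers2016matching,aygun2020dynamic}---is valid but is not what the paper does here; the paper gives the direct argument and reserves those citations for strategy-proofness and respect of improvements.
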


Proposition \ref{prop:jsa-choice} implies that stable matchings exist under joint seat allocation, and we denote the student-optimal stable matching by $\mu^{\ttup{JSA}} \coloneqq \ttup{SDA}(I, \Ch^{\ttup{JSA}})$.

On a similarity notes, both minority reserve and joint seat allocation can be viewed as stable matchings under slot specific priorities \citep{kominers2016matching} with vacant seats de-reserved \citep{aygun2020dynamic}, and thus many desirable properties of minority reserve, including weakly strategy-proofness and respect for improvement, also hold for joint seat allocation. We next show additional properties of \ttup{JSA}. See Table \ref{tab:summary-prop} for a complete reference.

\begin{theorem} \label{thm:jsa-one-better}
    For any reservation quota $\q^R$, there exists a disadvantaged student $s\in S^m$ such that $\mu^{\ttup{JSA}}(s) \ge_s \mu^{\noAA}(s)$.
\end{theorem}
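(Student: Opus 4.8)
The plan is to compare the run of \ttup{SDA} under the \ttup{JSA} choice functions with the run under the baseline choice functions, and to argue that the first disadvantaged student who is ever rejected by the baseline-optimal match cannot be worse off under \ttup{JSA}. Concretely, suppose for contradiction that $\mu^{\ttup{JSA}}(s) <_s \mu^{\noAA}(s)$ for \emph{every} disadvantaged student $s \in S^m$. I would then look at the instance $I$ and consider the student-proposing deferred acceptance algorithm under $\Ch^{\ttup{JSA}}$; in this run, since every disadvantaged student ends up strictly worse than under $\noAA$, each disadvantaged student is at some point rejected from $\mu^{\noAA}(s)$ (or from a school they strictly prefer to it). Pick the \emph{first} round $k^*$ of the \ttup{JSA} run in which some disadvantaged student $s^*$ is rejected from a school $c^* = \mu^{\noAA}(s^*)$ that they weakly prefer to all schools they have not yet been rejected from; equivalently, $c^*$ is among the schools that $s^*$ would still be holding in a baseline run at that point.

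The key step is then to analyze why $c^*$ rejects $s^*$ under $\Ch^{\ttup{JSA}}_{c^*}$ at round $k^*$. Write $T$ for the set of applicants to $c^*$ at that round. Since $s^* \notin \Ch^{\ttup{JSA}}_{c^*}(T)$ but $s^*$ is disadvantaged, $s^*$ fails to be selected for a general seat, fails to be selected for a reserved seat, and fails for a remaining seat. Failing the reserved-seat competition means that among the disadvantaged students in $T$ not already chosen for general seats, at least $q^R_{c^*}$ of them outrank $s^*$; failing the general-seat competition means at least $q^G_{c^*}$ students of $T$ (of any type) outrank $s^*$. Combining, there are at least $q^G_{c^*} + q^R_{c^*} = q_{c^*}$ students in $T$ that outrank $s^*$ at $c^*$ \emph{and} would also be selected by $\Ch^{\noAA}_{c^*}$ ahead of $s^*$ — here I would use that $\Ch^{\noAA}_{c^*}$ just takes the top $q_{c^*}$ by $>_{c^*}$. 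By minimality of $k^*$, none of these higher-ranked applicants has yet been rejected by a school they hold in the baseline run before reaching $c^*$, so each of them, in a baseline run, is either already matched to $c^*$ or to something they strictly prefer, and in particular each occupies a seat at $c^*$ or better in $\mu^{\noAA}$. This forces $q_{c^*}$ students strictly preferring $c^*$-or-better to $s^*$'s position, contradicting the fact that $s^*$ is matched to $c^*$ under $\mu^{\noAA}$ (the rural-hospitals/comparison-of-stable-matchings style bookkeeping).

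The main obstacle is making the ``first rejection'' argument precise in the \ttup{JSA} setting, because the \ttup{JSA} choice function is only substitutable and consistent, not responsive: a disadvantaged student can be ``bumped'' out of a general seat by an advantaged student and then land in a reserved seat in the same step, so the naive monotonicity of who-holds-what during \ttup{SDA} is subtler than in the classical model. I would handle this by working with the standard machinery for substitutable choice functions — either invoking the equivalent slot-specific-priorities instance of \citet{kominers2016matching} and \citet{aygun2020dynamic} already cited in the paper (so that $\mu^{\ttup{JSA}}$ becomes an ordinary student-optimal stable matching and I can directly compare two stable matchings of two related instances), or, more self-containedly, using the fact (from consistency + substitutability) that the set of students rejected by $c$ only grows as more students apply, so the first-rejection round is well defined and the higher-ranked ``blockers'' identified above genuinely persist. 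A clean way to package it: show $\mu^{\noAA}$ is \emph{itself} a stable matching under $\Ch^{\ttup{JSA}}$ is false in general, so instead show that if all disadvantaged students are worse off, one can exhibit a $\Ch^{\ttup{JSA}}$-blocking pair of $\mu^{\ttup{JSA}}$ — namely $(s^*, c^*)$ — contradicting stability of $\mu^{\ttup{JSA}} = \ttup{SDA}(I,\Ch^{\ttup{JSA}})$; verifying $s^* \in \Ch^{\ttup{JSA}}_{c^*}(\mu^{\ttup{JSA}}(c^*)\cup\{s^*\})$ is exactly the counting argument of the previous paragraph applied to $T = \mu^{\ttup{JSA}}(c^*)$.
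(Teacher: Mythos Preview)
Your first-rejection argument has a genuine gap. Minimality of $k^*$ only tells you that no \emph{disadvantaged} applicant to $c^*$ has yet been rejected from her $\mu^{\noAA}$-school; it says nothing about advantaged applicants. Under $\Ch^{\ttup{JSA}}$, advantaged students are routinely displaced from their baseline schools by reservation, so an advantaged $s'\in T$ ranked above $s^*$ may have $\mu^{\noAA}(s')>_{s'}c^*$ and never apply to $c^*$ in the $\noAA$ run. Your inference ``all $q_{c^*}$ higher-ranked applicants also appear at $c^*$ in the baseline, so $s^*$ cannot be in $\mu^{\noAA}(c^*)$'' therefore breaks down. A concrete illustration is Example~\ref{ex:jsa-mr-worse}: the first disadvantaged rejection from a baseline school in the \ttup{JSA} run is $s_1^m$ at $c_3$, and the unique applicant above her at that moment is the advantaged $s_1^M$, who was already rejected from his baseline school $c_1$ in round~1 and never applies to $c_3$ under $\noAA$. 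Your repackaging at the end---show $(s^*,c^*)$ blocks $\mu^{\ttup{JSA}}$ by applying the count to $T=\mu^{\ttup{JSA}}(c^*)$---inherits the same problem: to verify $s^*\in\Ch^{\ttup{JSA}}_{c^*}(\mu^{\ttup{JSA}}(c^*)\cup\{s^*\})$ you must rule out $q^R_{c^*}$ disadvantaged students in $\mu^{\ttup{JSA}}(c^*)$ above $s^*$, but under the all-worse-off hypothesis those students have baseline matches strictly preferred to $c^*$, so minimality gives you no handle on them.

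The paper takes an entirely different route, using weak group strategy-proofness of \ttup{JSA} as a black box. Assume $\mu^{\noAA}(s)>_s\mu^{\ttup{JSA}}(s)$ for every $s\in S^m$, and let each such $s$ misreport by listing only $\mu^{\noAA}(s)$. In the truncated instance $\tilde I$, $\mu^{\noAA}$ is stable under $\Ch^{\ttup{JSA}}$: any blocking pair $(s,c)$ must involve an advantaged $s$ (disadvantaged students sit at their unique acceptable school), and for advantaged $s$ the condition $s\in\Ch_c^{\ttup{JSA}}(\mu^{\noAA}(c)\cup\{s\})$ forces some $s'\in\mu^{\noAA}(c)$ with $s>_c s'$, so $(s,c)$ already blocks $\mu^{\noAA}$ under $\Ch^{\noAA}$---impossible. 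Hence $\ttup{SDA}(\tilde I,\Ch^{\ttup{JSA}})$ dominates $\mu^{\noAA}$, so every $s\in S^m$ obtains $\mu^{\noAA}(s)>_s\mu^{\ttup{JSA}}(s)$ by misreporting, contradicting weak group strategy-proofness. The first-rejection technique you attempted is exactly what the paper uses for Theorem~\ref{thm:jsa-smart-all-better}, where the smart-reserve hypothesis $q_c^R\ge|\mu^{\noAA}(c)\cap S^m|$ supplies precisely the bound on disadvantaged applicants needed there, and advantaged applicants are never counted.
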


\begin{theorem} \label{thm:jsa-smart-all-better}
    If the reservation quotas are a smart reserve, then $\mu^{\ttup{JSA}}$ dominates $\mu^{\noAA}$ for disadvantaged students. 
\end{theorem}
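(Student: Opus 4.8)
The plan is to exploit the fact that both $\mu^{\noAA}$ and $\mu^{\ttup{JSA}}$ arise as student-optimal stable matchings of \ttup{SDA}, but under different choice functions, and to compare the two runs directly. Since the reservation quotas form a smart reserve, by definition $\sum_c q_c^R \ge |\{s\in S^m : \mu^{\noAA}(s)\neq\emptyset\}|$, and in fact the natural hypothesis to use is the stronger local/aggregate statement that there are at least as many reserved seats at each school (or in aggregate along any chain of rejections) as disadvantaged students \ttup{BASE} places there. I would first recall the key structural property of $\Ch^{\ttup{JSA}}_c$: on any set $S_1$, it selects the top $q_c^G$ students overall (general seats), then adds the best remaining \emph{disadvantaged} students up to $q_c^R$ reserved seats, then backfills. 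The crucial observation is that a disadvantaged student $s$ is rejected by $c$ from a pool $S_1$ under $\Ch^{\ttup{JSA}}_c$ only if $s$ is \emph{not} among the top $q_c^G$ overall \emph{and} there are at least $q_c^R$ disadvantaged students in $S_1\setminus S_1^G$ ranked above $s$. Compare this with $\Ch^{\noAA}_c$, which rejects $s$ whenever there are $q_c$ students ranked above $s$.

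The main step is an inductive (round-by-round) coupling argument between the \ttup{SDA} run producing $\mu^{\noAA}$ and the one producing $\mu^{\ttup{JSA}}$. I would argue by a "no disadvantaged student is ever rejected under \ttup{JSA} from a school unless they were already rejected under \ttup{BASE}" invariant, maintained across rounds. Formally: run both deferred-acceptance processes in parallel and show by induction on the round $k$ that every disadvantaged student who has been rejected by school $c$ in the \ttup{JSA} run has also been rejected by $c$ in the \ttup{BASE} run (equivalently, $\mu^{\ttup{JSA}}(s)\ge_s \mu^{\noAA}(s)$ for $s\in S^m$). The inductive step: suppose a disadvantaged student $s$ is rejected by $c$ under \ttup{JSA} in round $k$. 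Then $s$ lost a general seat (at least $q_c^G$ students, of either type, applied to $c$ and rank above $s$) \emph{and} lost a reserved seat (at least $q_c^R$ disadvantaged students in the applicant pool rank above $s$ and did not themselves grab general seats). Using the induction hypothesis, all of those higher-ranked students — in particular all the disadvantaged ones — are applying to $c$ in the \ttup{BASE} run at this stage too (they can only have been "pushed down" to $c$ at least as fast in the \ttup{BASE} run, since \ttup{BASE} is weakly harsher on disadvantaged students by IH, and one also needs the analogous, easier statement that advantaged students are pushed down no slower — here the smart-reserve hypothesis is what rules out the bad case where a reserved seat sits empty in \ttup{JSA} while being filled by an advantaged student in \ttup{BASE}). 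Counting: $c$ receives at least $q_c^G + q_c^R = q_c$ applicants ranked above $s$ in the \ttup{BASE} run, so $s$ is rejected there as well, closing the induction. The smart-reserve assumption enters precisely to guarantee that the $q_c^R$ "reserved-seat competitors" of $s$ are genuine disadvantaged students rather than advantaged students occupying de-reserved seats, which is the only way the count above could fail.

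The hard part will be making the parallel-process coupling rigorous, since the two \ttup{SDA} runs proceed through different intermediate matchings and one cannot literally align their rounds; the clean way is to avoid round-by-round bookkeeping and instead use the lattice/comparative-statics machinery for stable matchings under substitutable choice functions (e.g.\ the "improvement" results of \citet{kominers2016matching} and the de-reservation framework of \citet{aygun2020dynamic}, both cited in the excerpt). Concretely, one can phrase $\mu^{\ttup{JSA}}$ and $\mu^{\noAA}$ as stable matchings on a common expanded instance with slot-specific priorities (the "equivalent instance" of Appendix \ref{sec:aux-instances}), where \ttup{BASE} corresponds to making \emph{all} slots general and \ttup{JSA} to designating $q_c^R$ of them as disadvantaged-only; then the theorem reduces to a monotonicity statement: turning general slots into minority-reserved slots, under the smart-reserve condition, can only (weakly) help every disadvantaged student in the student-optimal stable matching. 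That monotonicity is exactly the kind of statement provable by a single application of the known comparative-statics lemma for \ttup{SDA} under a "more favorable" choice-function profile for a designated group, with the smart-reserve hypothesis ensuring the profile is indeed more favorable (no reserved seat is ever wasted on the advantaged side relative to \ttup{BASE}). I would present the round-by-round version as the main proof for self-containedness, flag the smart-reserve step as the delicate point, and remark that the lattice-theoretic route gives an alternative one-line derivation.
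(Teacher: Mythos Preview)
Your overall strategy --- look at the first disadvantaged student rejected, in the \ttup{JSA} run, from her \noAA-school and derive a contradiction --- is exactly the paper's. But the execution of your inductive step contains a real gap, and the fix is not the one you suggest.

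The problematic claim is that ``advantaged students are pushed down no slower'' in the \noAA\ run than in the \ttup{JSA} run. This is false, and the smart-reserve hypothesis does not rescue it: under \ttup{JSA}, advantaged students compete only for the $q_c^G$ general seats at each school (reserved seats are taken by disadvantaged students whenever enough of them apply --- which is precisely what smart reserve tends to guarantee), while under \noAA\ they have access to all $q_c$ seats. Advantaged students are therefore typically pushed down \emph{faster} under \ttup{JSA}, and an advantaged student in the \ttup{JSA} applicant pool at $c$ need not ever appear in the \noAA\ pool at $c$. Your count of ``$q_c^G+q_c^R=q_c$ applicants ranked above $s$ in the \noAA\ run'' cannot be completed.

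The paper avoids this entirely: it never tries to show that $s_1$ would be rejected by $c_1:=\mu^{\noAA}(s_1)$ under \noAA\ (she is not --- she is matched there). Instead it shows directly that $s_1\in\Ch^{\ttup{JSA}}_{c_1}(S^{\ttup{JSA}}_k)$, contradicting the assumed rejection. Only the \emph{disadvantaged} applicants are tracked: by the ``first rejection'' choice of $s_1$, every disadvantaged student in $S^{\ttup{JSA}}_k$ also lies in $S^{\noAA}$, the set of all students who ever apply to $c_1$ under \noAA\ (this half of your inductive hypothesis is correct and is all that is needed). Smart reserve is the \emph{per-school} condition $q^R_{c_1}\ge |\mu^{\noAA}(c_1)\cap S^m|$, not the aggregate inequality you wrote; it gives that $s_1$ is among the top $q^R_{c_1}$ disadvantaged students in $S^{\noAA}\cap S^m$, hence also in the smaller set $S^{\ttup{JSA}}_k\cap S^m$, and therefore $s_1\in\Ch^{\ttup{JSA}}_{c_1}(S^{\ttup{JSA}}_k)$. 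No statement about advantaged students is needed, and no round-by-round alignment of the two runs is required --- one simply compares the \ttup{JSA} pool at the critical round to the full \noAA\ applicant set. Your lattice/comparative-statics alternative may also work, but as stated it is a pointer rather than a proof.
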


When the reservation quota is not a smart reserve, it is possible that $\mu^{\noAA}$ Pareto dominates $\mu^{\ttup{JSA}}$ for disadvantaged students, which can be readily seen from the same example for minority reserve presented in~\citet{hafalir2013effective}. See also Example \ref{ex:jsa-mr-worse} in Appendix \ref{sec:app:ex:JSA}.

As Proposition \ref{prop:mr-no-bp}, the following claim follows directly from the fact that $\mu^\ttup{JSA}$ is stable under choice functions $\Ch^\ttup{JSA}$ and the definition of $\Ch^\ttup{JSA}$.

\begin{proposition} \label{prop:jsa-no-bp}
    $\mu^{\ttup{JSA}}$ does not admit in-group blocking pairs.
\end{proposition}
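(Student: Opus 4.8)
The plan is to derive a contradiction from the stability of $\mu^{\ttup{JSA}}$ together with the shape of $\Ch^{\ttup{JSA}}_c$. Since $\mu^{\ttup{JSA}}=\ttup{SDA}(I,\Ch^{\ttup{JSA}})$ is stable under $\Ch^{\ttup{JSA}}$, there is no pair $(s,c)\in E$ with $c>_s\mu^{\ttup{JSA}}(s)$ and $s\in\Ch^{\ttup{JSA}}_c\big(\mu^{\ttup{JSA}}(c)\cup\{s\}\big)$. Because the requirement $c>_s\mu^{\ttup{JSA}}(s)$ is already built into the definition of an in-group blocking pair, it suffices to show that every in-group blocking pair $(s,c)$ of $\mu\coloneqq\mu^{\ttup{JSA}}$ satisfies $s\in\Ch^{\ttup{JSA}}_c(\mu(c)\cup\{s\})$; this contradicts stability and proves the claim. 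So I would fix such a pair, set $S_1\coloneqq\mu(c)\cup\{s\}$ (note $s\notin\mu(c)$ since $c>_s\mu(s)$), and let $s'$ be the witness from the definition: $s'\in\mu(c)$, $s>_c s'$, and $s,s'$ lie in the same group (both in $S^m$, or both in $S^M$).

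Next I would name the three blocks appearing in the definition of $\Ch^{\ttup{JSA}}_c$. Since $\Ch^{\ttup{JSA}}_c$ is $q_c$-acceptant and $|\mu(c)|\le q_c$, we have $\Ch^{\ttup{JSA}}_c(\mu(c))=\mu(c)$; write $\mu(c)=G_0\,\dot\cup\,R_0\,\dot\cup\,E_0$ for the general, reserved, and remaining blocks this produces, so $E_0=\mu(c)\setminus(G_0\cup R_0)$ and $|E_0|\le q_c-|G_0\cup R_0|$, and write $\Ch^{\ttup{JSA}}_c(S_1)=G_1\,\dot\cup\,R_1\,\dot\cup\,E_1$ analogously. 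The argument then peels off the blocks. First, if $s\in G_1$ we are done; otherwise $s$ is not among the $q_c^G$ highest-priority students of $S_1$, so deleting $s$ leaves that top set unchanged, giving $G_1=G_0$, and moreover $s'\notin G_0$, since $s'\in G_0=G_1$ together with $s>_c s'$ would force $s$ into $G_1$. Second, if $s,s'$ are disadvantaged: if $s\in R_1$ we are done; otherwise the same ``$s$ lies below the cut'' reasoning applied to the pool $S_1\cap S^m\setminus G_1$ gives $R_1=R_0$, and then $s'\notin R_0=R_1$, since $s'\in R_1$ with $s>_c s'$ would force $s$ into $R_1$. If instead $s,s'$ are advantaged: $s'\notin R_0$ automatically (as $R_0\subseteq S^m$), $s\notin R_1$ (as $R_1\subseteq S^m$), and $R_1=R_0$ because $S_1\cap S^m=\mu(c)\cap S^m$ and $G_1=G_0$. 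In both cases we reach $s'\in E_0$, $G_1=G_0$, $R_1=R_0$, and $s\notin G_1\cup R_1$.

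Finally I would close with the remaining block. From $G_1\cup R_1=G_0\cup R_0$ we get $S_1\setminus(G_1\cup R_1)=E_0\cup\{s\}$, a set of size at most $|E_0|+1\le(q_c-|G_1\cup R_1|)+1$; hence $E_1$, the $q_c-|G_1\cup R_1|$ highest-priority members of $E_0\cup\{s\}$, can omit at most one element, namely its lowest-priority one, and that element is not $s$ because $s>_c s'$ and $s'\in E_0$. Therefore $s\in E_1\subseteq\Ch^{\ttup{JSA}}_c(S_1)$, the desired contradiction. The computations are routine bookkeeping; the one genuinely delicate point is the handling of the de-reservation block $E$: one must verify that once $s$ has ``fallen through'' the general and reserved blocks it is nonetheless selected into a (possibly de-reserved) seat rather than rejected, which works precisely because $\Ch^{\ttup{JSA}}_c$ evaluated on $\mu(c)$ already returns all of $\mu(c)$, so room remains for $s$ once $s$ displaces no one. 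The advantaged and disadvantaged cases run in parallel, differing only in that an advantaged witness $s'$ cannot occupy the reserved block; and the very same template (with the two-block choice function $\Ch^{\ttup{MR}}_c$) proves Proposition~\ref{prop:mr-no-bp}.
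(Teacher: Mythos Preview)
Your proposal is correct and follows the same overall strategy as the paper's proof: assume an in-group blocking pair $(s,c)$ with same-group witness $s'$, show $s\in\Ch^{\ttup{JSA}}_c(\mu^{\ttup{JSA}}(c)\cup\{s\})$, and contradict stability of $\mu^{\ttup{JSA}}$. The only difference is granularity: the paper dispatches the key implication in a single phrase (``by definition of $\Ch^{\ttup{JSA}}_c$''), whereas you carry out the block-by-block bookkeeping explicitly, carefully tracking the general, reserved, and de-reservation components and handling the advantaged and disadvantaged cases separately. Your expanded argument is a faithful unpacking of what the paper leaves implicit, so the two proofs are essentially the same.
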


\section{Comparison of Mechanisms} \label{sec:aa-compare}

In this section, we investigate how different mechanisms introduced in the previous section compare with each other. All proofs are deferred to the appendix.

\subsection{Is there a winning mechanism for disadvantaged students?}\label{sec:no-one-wins}

To begin with, we would like to answer the following question regarding any two mechanisms: does one mechanism dominate the other mechanism for disadvantaged students? We consider three domains which impose restrictions on the instance or the reservation quotas. They are: (1) the reservation quotas are a smart reserve, (2) schools share a common priority order over the students (i.e., \emph{universal} priority order), and (3) both smart reserve and universal priority order. We summarized the results in Table \ref{tab:summary-compare}. Note that for a pair of mechanisms, a positive answer for (1) or (2) implies a positive answer for (3) and a negative answer for (3) implies negative answers for both (1) and (2). These allow us to simplify the presentations given in Table \ref{tab:summary-compare}.

From Table \ref{tab:summary-compare}, we can see that no two mechanisms are comparable in the general domain (i.e., all instances included). In addition, even in the restricted domains, most of the mechanisms are not comparable, with the exception that minority reserve and joint seat allocation dominate the baseline mechanism when the reservation quotas are a smart reserve. 

These results are shown as follows. We first observe that the baseline mechanism does not dominate the other mechanisms, through a rather trivial example included in Appendix \ref{sec:app:ex:no-one-wins} (see Example \ref{ex:AA-better}). We then compare the \ttup{DISC} with \ttup{MR} and \ttup{JSA} in Example~\ref{ex:disc-no-compare} in Appendix~\ref{sec:app:ex:no-one-wins} and compare \ttup{MR} and \ttup{JSA} in Example~\ref{ex:mr-jsa-no-compare} below.  

We include Example~\ref{ex:mr-jsa-no-compare} in the main body as it shows a rather counterintuitive fact. Since \ttup{JSA} allows top-performing disadvantaged students to take general seats, hence freeing reserved seats for other disadvantaged students, one would expect $\mu^\ttup{JSA}$ to dominate $\mu^\ttup{MR}$ for disadvantaged students \emph{lexicographically} -- that is, we would expect that, when schools share the same ranking of students, the highest ranked disadvantaged student whose school assignment differs between \ttup{JSA} and \ttup{MR} prefers $\mu^\ttup{JSA}$ to $\mu^\ttup{MR}$.  
However, this is not true because of the role played by advantaged students: when disadvantaged students take up general seats under \ttup{JSA}, an advantaged student could become rejected by the school that accepts them under \ttup{MR}, and this particular rejection then creates a ``chain of rejections'' that eventually hurts some disadvantaged student.

\begin{example} \label{ex:mr-jsa-no-compare} 
    Consider the instance with students $S^M=\{s^M_1, s^M_2, s^M_3\}$, $S^m=\{s^m_1, s^m_2, s^m_3, s^m_4\}$ and schools $C=\{c_1, c_2, c_3, c_4\}$. The quotas and reservation quotas of schools, and the preference lists of students are given below.
    \[\setlength{\arraycolsep}{3pt}
    \begin{array}[t]{c|cccc}
        c & c_1 & c_2 & c_3 & c_4 \\
        \hline
        q_c & 1 & 1 & 1 & 2 \\
        q_c^R & 0 & 1 & 0 & 1  
    \end{array} 
    \hspace{2cm} 
    \begin{array}[t]{ccccccccc}
        s^M_1 & s^M_2 & s^M_3 & s^m_1 & s^m_2 & s^m_3 & s^m_4 \\
        \hline
        c_2 & c_1 & c_4 & c_2 & c_4 & c_3 & c_4 \\
         & c_3 & c_3 & c_1 & & &
    \end{array}\]
    All schools have priority order $s^M_1 > s^m_1 > s^M_2 > s^m_2 > s^M_3 > s^m_3 > s^m_4$. To see that the reservation quotas is a smart reserve, the matching under the baseline mechanism is $$\mu^{\noAA} = \{s^m_1, c_1\}, \{s^M_1, c_2\}, \{s^M_2, c_3\}, \{s^m_2, c_4\}, \{s^M_3, c_4\}.$$ The matchings under minority reserve and joint seat allocation are:
    \begin{align*}
        \mu^{\ttup{MR}} &= \{s^M_2, c_1\}, \{s^m_1, c_2\}, \{s^m_3, c_3\}, \{s^m_2, c_4\}, \{s^M_3, c_4\}; \\
        \mu^{\ttup{JSA}} &= \{s^M_2, c_1\}, \{s^m_1, c_2\}, \{s^M_3, c_3\}, \{s^m_2, c_4\}, \{s^m_4, c_4\}.
    \end{align*}
    Disadvantaged student $s^m_1$ and $s^m_2$ are indifferent between $\mu^{\ttup{MR}}$ and $\mu^{\ttup{JSA}}$, $s^m_3$ strictly prefers $\mu^{\ttup{MR}}$ to $\mu^{\ttup{JSA}}$, but $s^m_4$ strictly prefers $\mu^{\ttup{JSA}}$ to $\mu^{\ttup{MR}}$. \EOE
\end{example}

\subsection{Joint seat allocation vs minority reserve: the high competitiveness hypothesis} \label{sec:jsa-vs-mr}

To further compare minority reserve and joint seat allocation, we consider a special condition on the market, that we term \emph{high competitiveness of the market}: 
\begin{equation*}
    |\mu^{\ttup{MR}}(c) \cap S^m| \le q_c^R \hbox{ for every school } c\in C.
    \tag{high competitiveness}
\end{equation*} 

Note that this is an ex-post condition that is based on the outcome $\mu^{\ttup{MR}}$ of a specific mechanism, namely minority reserve. The condition asks that minority students not occupy general seats in matching $\mu^{\ttup{MR}}$. We show empirically that the NYC SHS market is highly competitive using admission data in Section \ref{sec:data}. Under the high competitiveness hypothesis, joint seat allocation dominates minority reserve for disadvantaged students. We formalize the statement in Theorem \ref{thm:jsa-dom-mr-condition}. 

\begin{theorem} \label{thm:jsa-dom-mr-condition}
  For highly competitive markets, $\mu^{\ttup{JSA}}$ dominates $\mu^{\ttup{MR}}$ for disadvantaged students.
\end{theorem}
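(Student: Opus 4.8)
The plan is to reduce everything to one combinatorial comparison of the two choice functions.

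\smallskip
\noindent\textit{Key claim.} For every school $c$ and every pool $T\subseteq S$ with $|T\cap S^m|\le q_c^R$ one has
$\Ch_c^{\ttup{JSA}}(T)=\Ch_c^{\ttup{MR}}(T)=(T\cap S^m)\cup\max(T\cap S^M,>_c,q_c-|T\cap S^m|)$.
The reason is that when disadvantaged applicants are this scarce there are enough reserved seats to shelter every one of them under either rule, so both choice functions keep all of $T\cap S^m$ and then fill the remaining $q_c-|T\cap S^m|$ seats with the highest-priority majority students of $T$; the order in which reserved and general seats are processed becomes immaterial. (The same block-counting argument --- split $T$ into its top $q_c^G$ students and the rest --- also shows, for arbitrary $T$, that every disadvantaged student chosen by $\Ch_c^{\ttup{MR}}(T)$ is chosen by $\Ch_c^{\ttup{JSA}}(T)$, and dually for majority students; I will only need the equality above.)

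Now I would invoke high competitiveness: $|\mu^{\ttup{MR}}(c)\cap S^m|\le q_c^R$ for every $c$, so the claim applies both to $T=\mu^{\ttup{MR}}(c)$ and to $T=\mu^{\ttup{MR}}(c)\cup\{s\}$ for $s\in S^M$. Combined with the stability of $\mu^{\ttup{MR}}$ under $\{\Ch_c^{\ttup{MR}}\}$ (Section~\ref{sec:MR}) this yields: (i) $\mu^{\ttup{MR}}$ is individually rational under $\{\Ch_c^{\ttup{JSA}}\}$; and (ii) $\mu^{\ttup{MR}}$ admits no blocking pair $(s,c)$ with $s\in S^M$ under $\{\Ch_c^{\ttup{JSA}}\}$. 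Hence every blocking pair of $\mu^{\ttup{MR}}$ under $\{\Ch_c^{\ttup{JSA}}\}$ is of the form $(s,c)$ with $s\in S^m$; and a direct inspection (again using $\Ch_c^{\ttup{MR}}$-stability and high competitiveness) shows such a pair is very rigid: $s$ lies below every member of $\mu^{\ttup{MR}}(c)$ in $>_c$, the school $c$ is full under $\mu^{\ttup{MR}}$, and some disadvantaged student of $\mu^{\ttup{MR}}(c)$ lies among the top $q_c^G$ of $\mu^{\ttup{MR}}(c)$ in $>_c$ --- so that, switching to the \ttup{JSA} rule, that student is served through a general seat and frees a reserved seat into which $s$ can move. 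Moreover the same computation shows that resolving such a pair evicts the lowest-priority \emph{advantaged} student of $\mu^{\ttup{MR}}(c)$, never a disadvantaged one.

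To conclude I would start a deferred-acceptance process under $\{\Ch_c^{\ttup{JSA}}\}$ from the matching $\mu^{\ttup{MR}}$: repeatedly pick a student who currently lies in a blocking pair and let her propose to her most preferred not-yet-rejecting school, each school retaining its $\Ch_c^{\ttup{JSA}}$-choice among the applicants seen so far. By (i) and a Roth--Vande Vate--type argument this halts at a matching $\nu$ that is stable under $\{\Ch_c^{\ttup{JSA}}\}$; since $\mu^{\ttup{JSA}}=\ttup{SDA}(I,\Ch^{\ttup{JSA}})$ dominates every $\{\Ch_c^{\ttup{JSA}}\}$-stable matching for \emph{all} students, it suffices to prove $\nu(s)\ge_s\mu^{\ttup{MR}}(s)$ for every $s\in S^m$. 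I would establish this via the invariant that, at all times during the process, no disadvantaged student is held by --- nor has applied to --- a school she ranks strictly below $\mu^{\ttup{MR}}(s)$. The first step of the induction is exactly (ii) together with the structure of the disadvantaged blocking pairs of $\mu^{\ttup{MR}}$ isolated above (their resolution only promotes a disadvantaged student and demotes an advantaged one).

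The main obstacle is the inductive step, i.e.\ keeping the invariant alive once proposals cascade. When a disadvantaged student is promoted she may bump an advantaged student, who then proposes to a worse school, possibly bumping a further advantaged student, and so on; one must rule out that such a chain ever pushes a disadvantaged incumbent strictly below her $\mu^{\ttup{MR}}$-seat. Equivalently, one must show that whenever a disadvantaged student is evicted from a school $c$ during the process, she was matched under $\mu^{\ttup{MR}}$ to a school she ranks strictly below $c$, so the eviction is harmless. The proof of this rests on two facts: the $\Ch_c^{\ttup{MR}}$-stability of $\mu^{\ttup{MR}}$, which pins down the relative $>_c$-ranking of any two disadvantaged students that both ``reach'' $c$, and high competitiveness, which keeps the number of disadvantaged students at every school within what its reserved seats can absorb, so that the extra pressure created by channeling high-priority disadvantaged students through general seats is always borne by advantaged students. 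It is precisely this last point that fails without high competitiveness --- and indeed, without it $\mu^{\ttup{JSA}}$ need not dominate $\mu^{\ttup{MR}}$ for disadvantaged students, as Example~\ref{ex:mr-jsa-no-compare} shows.
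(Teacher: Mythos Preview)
Your plan over-engineers the argument and leaves the crux --- what you yourself call the ``main obstacle'' --- as an unproved sketch. The paper's proof is a direct first-rejection argument inside the ordinary student-proposing deferred acceptance run with $\Ch^{\ttup{JSA}}$: let $s_1$ be the first disadvantaged student rejected, at some round $k$, by $c_1\coloneqq\mu^{\ttup{MR}}(s_1)$. Because $s_1$ is first, every disadvantaged applicant to $c_1$ through round $k$ has $c_1\ge_s\mu^{\ttup{MR}}(s)$ and hence also applied to $c_1$ during the \ttup{MR}-DA run; thus $S^{\ttup{JSA}}_k\cap S^m\subseteq S^{\ttup{MR}}\cap S^m$. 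High competitiveness together with $\Ch_{c_1}^{\ttup{MR}}(S^{\ttup{MR}})=\mu^{\ttup{MR}}(c_1)$ gives $s_1\in\max(S^{\ttup{MR}}\cap S^m,>_{c_1},q^R_{c_1})$, hence $s_1\in\max(S^{\ttup{JSA}}_k\cap S^m,>_{c_1},q^R_{c_1})$. But any student among the top $q_c^R$ disadvantaged of a pool is always retained by $\Ch_c^{\ttup{JSA}}$ (she either sits in the top $q_c^G$ overall, or else survives the reserved-seat round), contradicting the rejection of $s_1$. No process started from $\mu^{\ttup{MR}}$, no Roth--Vande Vate convergence argument, and no analysis of the blocking-pair structure of $\mu^{\ttup{MR}}$ under $\Ch^{\ttup{JSA}}$ is needed.

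Your route can in principle be completed, but the gap is real: you never prove the invariant, and your ``Key claim'' is the wrong lever for it. Once a disadvantaged student is promoted into a school $c$ via a \ttup{JSA}-blocking pair, the cumulative pool at $c$ can contain strictly more than $q_c^R$ disadvantaged students, so the equality $\Ch_c^{\ttup{JSA}}=\Ch_c^{\ttup{MR}}$ you established for pools with few minorities no longer applies and cannot drive the induction. What is actually needed is the unconditional fact used above --- $\Ch_c^{\ttup{JSA}}$ keeps the top $q_c^R$ disadvantaged of \emph{any} pool --- and with that fact in hand, a first-violation argument for your invariant becomes line-for-line the paper's first-rejection argument, only embedded in the extra scaffolding of your process. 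You should drop the scaffolding and argue directly in the \ttup{JSA}-DA run.
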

 
High competitiveness can be connected to primitives of the market. Intuitively, it is satisfied when disadvantaged students are systematically performing worse than advantaged students and when there is a shortage of seats at all schools. In other words, this two condition is satisfied if after the initial allocation of reserve seats to top ranked disadvantaged students, the remaining disadvantaged students are not able to compete with the advantaged students for general seats\footnote{High competitiveness is also satisfied in the trivial case when there are so many reserved seats, that all disadvantaged students get one, but this is rarely seen in the real world -- and does not happen in our data from NYC SHSs.}. This condition is not uncommon in markets with limited resources. 

Below we state a rigorous statement connecting primitives of the market and high competitiveness.  We call a market \emph{homogeneously random} if it satisfies the following conditions:
\begin{enumerate}
    \item[a)] Students' preference lists are independent random permutations of the set of schools\footnote{This assumption is aligned with previous work~\cite{knuth1990stable,pittel1989average,pittel1992likely}. It can be relaxed to a more general albeit more technical condition: preference lists of students are independent, and for every student, any two schools have the same probability of being ranked the first in their preference list.};
\item[b)] Schools share the same ranking of students;
\item[c)] Schools have the same quotas $q$ and reservation quotas $q^R$. 
\end{enumerate}

\begin{theorem} \label{thm:cs-highly-competitive}
Consider a family of homogeneously random markets with an increasing number of students and schools. Assume that $q-1>q^R > n \log n$, where $n$ is the number of schools. If, for some $\epsilon \in (0,1)$, the $r_M:=(n\log n +(q-q^R)n\log \log n)$-th ranked advantaged student exists and is ranked above the $r_m:=(1-\epsilon)q^Rn$-ranked disadvantaged student, where rankings of students are within their respective groups, then the market is highly competitive with probability $1-o(1)$. 
\end{theorem}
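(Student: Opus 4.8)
The plan is to pass to the slot-specific-priorities reformulation of \ttup{MR}, exploit the common priority order to reduce $\mu^{\ttup{MR}}$ to a seat-by-seat greedy (serial-dictatorship-type) procedure, and then run a ``race'' argument between the general seats filling up and the reserved quota being exceeded at some single school. The first move is to reformulate the target: the high competitiveness condition is equivalent to the statement that \emph{no disadvantaged student occupies a general seat in $\mu^{\ttup{MR}}$}. Indeed, by the definition of $\Ch^{\ttup{MR}}_c$, if $|\mu^{\ttup{MR}}(c)\cap S^m|\ge q_c^R+1$ then the top $q_c^R$ of these disadvantaged students sit on reserved seats and at least one of the others sits on a general seat, and conversely a disadvantaged student on a general seat at $c$ forces $q_c^R$ higher-priority disadvantaged students to also be at $c$.

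Next I would use assumption (b) (a common priority order $\pi$) together with the fact, recalled in Section~\ref{sec:affirmative-action}, that $\mu^{\ttup{MR}}$ is the student-optimal stable matching of an associated slot-specific-priorities market (reserved seats ranking disadvantaged above advantaged, vacant reserved seats de-reserved) to argue that $\mu^{\ttup{MR}}$ is produced by processing students in the order $\pi$ and assigning each one its favorite still-available seat, where a disadvantaged student may bump an advantaged student off a de-reserved seat. The structural payoff is this: \emph{as long as no school has yet accumulated $q^R$ disadvantaged students}, a disadvantaged student being processed simply goes to the school appearing first on its (uniformly random) preference list; hence, up to the first such ``saturation'', the induced allocation of disadvantaged students to schools is exactly a balls-into-bins process with $n$ bins, and it is completely independent of the advantaged students. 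Moreover, with the natural convention that general seats are filled before de-reserved ones, a filled general seat is never vacated, and a school is completely full only after all its $q-q^R$ general seats are filled.

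Now split the prefix of $\pi$ ending just before the $r_m$-th disadvantaged student: by hypothesis it contains at least $r_M=n\log n+(q-q^R)n\log\log n$ advantaged students and exactly $r_m-1=(1-\epsilon)q^Rn-1$ disadvantaged students. Step A: with probability $1-o(1)$ the $\ge r_M$ advantaged students already fill every general seat. This is a generalized coupon-collector / maximum-load estimate: each advantaged student lands on the first not-completely-full school of its fixed random order, general seats fill before de-reserved ones, and a school becomes completely full only after its $q-q^R$ general seats are full, so filling all $n$ schools to general-level $q-q^R$ costs about $n\log n+(q-q^R-1)n\log\log n+O(n)$ advantaged students w.h.p., which $r_M$ exceeds with $\Omega(n\log\log n)$ to spare (the interspersed disadvantaged students only make schools fill faster, hence do not hurt). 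Consequently no disadvantaged student ranked $r_m$-th or lower within its group can ever take a general seat, so none of them triggers an over-quota school. Step B: the first $r_m-1$ disadvantaged students are distributed, up to a possible first saturation, as $r_m-1$ uniform balls into $n$ bins, so each school's load is $\mathrm{Bin}(r_m-1,1/n)$ with mean below $(1-\epsilon)q^R$; a Chernoff bound gives $\Pr[\mathrm{load}\ge q^R]\le e^{-\Omega(\epsilon^2 q^R)}$, and a union bound over the $n$ schools with $q^R>n\log n$ gives $n\,e^{-\Omega(\epsilon^2 n\log n)}=o(1)$, so w.h.p.\ no school reaches $q^R$ disadvantaged from these students either. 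Combining Steps A and B, with probability $1-o(1)$ no disadvantaged student is ever assigned a general seat, i.e.\ the market is highly competitive.

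The delicate part is Step A together with making the serial-dictatorship picture fully rigorous: one must (i) justify reducing $\mu^{\ttup{MR}}$ under a common priority order to the seat-by-seat greedy procedure; (ii) control the bumping of advantaged students off de-reserved seats (and, in principle, of disadvantaged students off general seats, which however can only happen once a school is already over quota, i.e.\ inside the bad event) so that the good-regime coupling with balls-into-bins is exact and the count of filled general seats is genuinely monotone; and (iii) carry out the generalized coupon-collector estimate in the right model --- greedy placement into the first not-full bin of a \emph{fixed} random order rather than a freshly uniform bin, and filling to level $q-q^R$ while the ``skipping'' threshold is the larger cap $q$. Once the coupling is in place, the probabilistic core (Chernoff for Step B, coupon-collector tail bounds for Step A) is standard.
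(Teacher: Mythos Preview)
Your proposal is correct and follows essentially the same two-part race argument as the paper: a balls-into-bins bound to show that with probability $1-o(1)$ no school accumulates $q^R$ disadvantaged students among the top $(1-\epsilon)q^Rn$ of them (the paper cites~\cite{raab1998balls} where you use Chernoff plus union bound), and a generalized coupon-collector estimate to show that $r_M$ advantaged students suffice to fill all general seats (the paper cites the Erd\H{o}s--R\'enyi limit law~\cite{erdHos1961classical} and phrases this via a ``restricted market'' with only advantaged students and $q-q^R$ seats, rather than tracking general-seat saturation inside the actual process as you do). Your write-up is in fact more explicit than the paper's about the serial-dictatorship reduction and the bumping subtleties you flag in (i)--(iii); the paper glosses over these with the same informal coupling you describe.
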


Let us discuss more in detail the hypothesis from Theorem~\ref{thm:cs-highly-competitive}. The condition $q^R > n \log n$ applies when there are few schools compared to the number of seats, while the condition on the relative rankings of students applies when disadvantaged students perform systematically worse than advantaged students. Although our result is asymptotic and relies on homogeneity assumptions on the number of seats and preferences of students, it is nonetheless useful to see that for the NYC SHS market, the $r_M$-th ranked advantaged student ranks well above the $r_m$-th ranked disadvantaged student. 
See Appendix~\ref{sec:hc-theorems-applications} for detailed calculations.

The hypothesis from Theorem~\ref{thm:cs-highly-competitive} can be investigated within other models from the literature. As an example, we consider a \emph{homogeneously random market with $(\mu_M,\mu_m,\sigma_M,\sigma_m)-$normal potentials}. This is a market that satisfies a), b), c) from the definition of homogeneously random markets, and moreover schools rank students in decreasing values of their \emph{potentials}\footnote{The assumption that students' potentials are sampled from a distribution  follows a recent trend in the literature, see e.g.,~\cite{faenza2020impact} in the school choice setting and~\cite{kleinberg2018selection} in the hiring setting.}. We assume in particular that potentials of students are drawn i.i.d.~from a normal distribution with variance $\sigma_M$ and mean $\mu_M$ (resp.~variance $\sigma_m$ and mean $\mu_m$) for advantaged students (resp.~disadvantaged students). Other distributional assumption are of course possible and lead to similar results. The next theorem states that if the means are far enough then the market is highly competitive with high probability. 

\begin{theorem}\label{thm:hc-potential}
Consider a family of homogeneously random markets with $(\mu_M,\mu_m,\sigma_M,\sigma_m)-$normal potentials and an increasing number of students and schools. Assume that, for all markets in the family, $q-1>q^R > n \log n$, where $n$ is the number of schools. Let $\epsilon > 0$ be constant and let $$p_M:=\frac{n\log n+(q-q^R-1)n\log\log n}{|S^M|} \textup{ and } p_m:=\frac{(1+\epsilon)q^Rn}{|S^m|}$$ be strictly between $0$ and $1$ and bounded away from both. If
\begin{equation}\label{eq:condition-hc-potential}
\mu_M - \mu_m > 0.008(\sigma_M+\sigma_m) + \frac{1}{1.702}\left( \sigma_M\ln (\frac{1}{p_M}-1) -\sigma_m(\frac{1}{p_m}-1) \right).
\end{equation}
hold, then with probability $1-o(1)$ the market is highly competitive.
\end{theorem}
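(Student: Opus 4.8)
The plan is to reduce the statement to Theorem~\ref{thm:cs-highly-competitive}: it suffices to prove that, for all sufficiently large markets in the family, with probability $1-o(1)$ over the draw of the potentials the deterministic hypothesis of that theorem is satisfied for a suitable choice of its free parameter $\epsilon' \in (0,1)$. Since the ``$1-o(1)$'' guarantee in Theorem~\ref{thm:cs-highly-competitive} is over the preference randomness, which is independent of the potentials, a union bound over the two failure events then yields high competitiveness with probability $1-o(1)$. Writing $\mathrm{pot}(s)$ for the potential of student $s$, I thus need a constant $\epsilon' \in (0,1)$ for which, with probability $1-o(1)$, the $r_M$-th ranked advantaged student exists and its potential strictly exceeds that of the $r_m$-th ranked disadvantaged student, where $r_M := n\log n + (q-q^R)n\log\log n$ and $r_m := (1-\epsilon')q^R n$.

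These are order statistics of independent Gaussian samples, and the plan is to locate them via concentration of the empirical CDF. By the Dvoretzky--Kiefer--Wolfowitz inequality --- or, equivalently, a Chernoff bound on the binomial count of students above a fixed threshold --- with probability $1-o(1)$ the empirical CDF of the $|S^M|$ advantaged potentials is uniformly within $o(1)$ of the $N(\mu_M,\sigma_M^2)$ CDF, and likewise on the disadvantaged side. The hypotheses $q-1>q^R>n\log n$ and ``$p_M,p_m$ bounded away from $0$ and $1$'' ensure $|S^M| = \omega(n\log\log n)$ (so $r_M/|S^M| = p_M + n\log\log n/|S^M| = p_M + o(1)$) and that $r_M/|S^M|$ and $r_m/|S^m|$ remain in a compact subinterval of $(0,1)$ on which $\Phi^{-1}$ is Lipschitz. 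Hence, with probability $1-o(1)$, the potential of the $r_M$-th ranked advantaged student equals $\mu_M+\sigma_M\Phi^{-1}(1-p_M)+o(1)$, and --- choosing $\epsilon'$ in terms of the given $\epsilon$ so that $r_m/|S^m|$ reduces to $p_m$ --- the potential of the $r_m$-th ranked disadvantaged student equals $\mu_m+\sigma_m\Phi^{-1}(1-p_m)+o(1)$. It therefore remains to verify the deterministic inequality $\mu_M+\sigma_M\Phi^{-1}(1-p_M) > \mu_m+\sigma_m\Phi^{-1}(1-p_m)$ with a fixed positive margin, large enough to dominate the $o(1)$ terms for all large $n$.

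The final step is to rewrite this Gaussian-quantile inequality as~\eqref{eq:condition-hc-potential}. I would invoke the classical logistic approximation $\Phi^{-1}(u)\approx \tfrac{1}{1.702}\ln\tfrac{u}{1-u}$, whose error, over the range of $u$ determined by the assumption that $p_M,p_m$ are bounded away from $0$ and $1$, is small; propagating it through the terms $\sigma_M\Phi^{-1}(1-p_M)$ and $\sigma_m\Phi^{-1}(1-p_m)$ converts them into $\tfrac{\sigma_M}{1.702}\ln(\tfrac{1}{p_M}-1)$ and $\tfrac{\sigma_m}{1.702}\ln(\tfrac{1}{p_m}-1)$ respectively, with the combined approximation error absorbed into the $0.008(\sigma_M+\sigma_m)$ slack. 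Then~\eqref{eq:condition-hc-potential} is exactly what guarantees the quantile inequality with room to spare, and the theorem follows.

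I expect the main obstacle to be the control of the error terms: the slack in~\eqref{eq:condition-hc-potential} is a fixed constant, not a vanishing one, so I must show that every stochastic fluctuation is genuinely $o(1)$ --- the deviation of each order statistic from its population quantile, and the gap between $r_M/|S^M|$ and $p_M$ --- which is precisely where the scale separations $q^R>n\log n$ and ``$p_M,p_m$ bounded away from $0$ and $1$'' are used. A secondary delicate point is to pin down the exact range of $p_M,p_m$ on which the logistic-approximation error stays below the claimed bound, and to match the free parameter $\epsilon'$ of Theorem~\ref{thm:cs-highly-competitive} to the given $\epsilon$ so that the disadvantaged-side quantile appears at level $1-p_m$ in the comparison.
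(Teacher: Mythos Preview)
Your proposal is correct and follows essentially the same route as the paper's proof: reduce to Theorem~\ref{thm:cs-highly-competitive} by showing that the relevant order statistics converge to the corresponding population quantiles, then replace the Gaussian quantile function by its logistic approximation and absorb the bounded approximation error into the $0.008(\sigma_M+\sigma_m)$ slack. The paper's argument is terser---it invokes textbook convergence of order statistics to quantiles (citing Dasgupta) and the Bowling--Soranzo error bound directly, without spelling out the DKW/Chernoff concentration, the Lipschitz control on $\Phi^{-1}$, or the matching of the free parameter $\epsilon'$ to the stated $\epsilon$---but the skeleton is identical, and the issues you flag (uniform control of the $o(1)$ fluctuations and the $\epsilon$ versus $\epsilon'$ bookkeeping) are exactly the points the paper leaves implicit.
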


As with Theorem \ref{thm:cs-highly-competitive}, although the result is asymptotic, we find it informative to confirm that data of SHSAT scores from NYC DOE (see Figure~\ref{fig:score-distributions}) easily verify~\eqref{eq:condition-hc-potential}. See Appendix~\ref{sec:hc-theorems-applications} again for detailed calculations.

\section{Data on NYC Specialized High Schools} \label{sec:data}

In this section, we analyze and compare the mechanisms on real-world datasets\footnote{The dataset is under a non-disclosure agreement with NYC DOE.}. There is a total of $12$ anonymized datasets, each for one of the $12$ consecutive academic years from 2005-06 to 2016-17. Entries of each dataset include (1) students' IDs, (2) their scores for the Specialized High School Admissions Test, (3) their (possibly, non-complete) preference lists of these eight specialized schools, (4) their middle schools, (5) which school they are admitted to (which could be empty), and other information that are not relevant for our analysis. See Table \ref{tab:school-name} for a list of specialized high schools. 

\begin{small}
\begin{table}[ht]
    \centering 
    \begin{tabular}{|c|l|}
        \hline
        B & Bronx High School of Science \\
        \hline
        T & Brooklyn Technical High School \\
        \hline
        R & Staten Island Technical High School \\
        \hline
        L & Brooklyn Latin \\
        \hline
        Q & Queens High School for the Sciences at York \\
        \hline
        M & High School of Mathematics, Science and Engineering at City College \\
        \hline
        S & Stuyvesant High School \\
        \hline
        A & High School of American Studies at Lehman College \\
        \hline
    \end{tabular}
    \vspace{-.5em}
    \caption{School code and school name of NYC specialized high schools. \vspace{-2.5em}}
    \label{tab:school-name}
\end{table}
\end{small}


Immediately from the dataset, we can extract the number of students applying for these specialized high schools and the capacities of each schools (i.e., the number of students admitted). On average, about $27,000$ students take the SHSAT exam every year, and among them, about $8,000$ (which is about $30\%$) are disadvantaged students (defined below). In terms of admission, about $5,100$ students receive an offer, out of whom about $820$ (which is about $16\%$) are disadvantaged students.

To label each student as advantaged or disadvantaged, we follow the definition currently used by NYC DOE for the discovery program:

\begin{quote}\it
    To be eligible for the Discovery program, a Specialized High Schools applicant must be one or more of the following: 
    \begin{enumerate}
        \item a student from a low-income household, a student in temporary housing, or an English Language Learner who moved to NYC within the past four years; and
        \item Have scored within a certain range below the cutoff score on the SHSAT; and
        \item Attend a high-poverty school. A school is defined as high-poverty if it has an Economic Need Index (ENI) of at least 60\%.
    \end{enumerate}
\end{quote}

The second condition is related to eligibility, and not specifically to whether a student is disadvantaged, so we do not incorporate that when labeling the students. For the first set of conditions, we use an accompanying dataset which contains students' demographic information. However, since the information given in the dataset are not exactly the same as those specified in the definition, we slightly modify the first condition: ``be one or more of the following: (1) eligible for free or reduced price lunch or has been identified by the Human Resources Administration (HRA) as receiving certain types of public assistance; or (2) an English Language Learner''. For the last condition, we obtain the ENIs of NYC middle schools from a school quality report of academic year 2017-2018, which can be downloaded from the NYC Open Data website\footnote{\url{https://data.cityofnewyork.us/Education/2017-2018-School-Quality-Reports-Elem-Middle-K-8/g6v2-wcvk}}.

To obtain schools' universal priority order $>_C$ over the students, we assign to every student a unique \emph{lottery} number, denoted as $\ell_s$, for tie-breaking. For any two students $s_1, s_2\in S$, $s_1$ has a higher priority than $s_2$ (i.e., $s_1 >_C s_2$) only when $s_1$ has a higher score than $s_2$ or when they have the same score but $\ell_{s_1}< \ell_{s_2}$. This idea of using lottery numbers for tie breaking has been used in practice (see, e.g.,~\citet{abdulkadirouglu2009strategy}).

Combining all components, the final dataset for analysis contains the following information for each student: unique identification number, test score, preference list, indicator for whether they are disadvantaged students, and the lottery number. 

First in Section \ref{sec:res-dis}, we analyze the outcome of the discovery program mechanism under the current guideline, and we provide some additional observations besides the theoretical results in Section \ref{sec:DISC}. We then compare, in Section \ref{sec:res-compare}, the outcomes from all three mechanisms. For most of the experiments, we only include results of the latest academic year, since they are qualitatively similar for all academic years. Full results of all academic years can be found in Appendix \ref{sec:app:figure-all-years}. We also investigate and discuss the school-over-seat hypothesis by analyzing the patterns of students' preference lists, which can be found in Appendix \ref{sec:res-preference}.

\subsection{Results: the discovery program} \label{sec:res-dis}

We start by analyzing the performance of the discovery program mechanism, where the reservation quota of every school $c$ is set to be $q_c^R \coloneqq \lceil q_c\times 20\% \rceil$, since $20\%$ is the number recommended in a proposal by~\citet{SH-proposal}. 

We show empirically that the discovery program admits in-group blocking pairs and does not respect improvements. As we discussed earlier in Table \ref{tab:summary-prop}, the discovery program is the only mechanism that admits in-group blocking pairs. We show that on average there are about $950$ blocking pairs for disadvantaged students every academic year involving about $650$ disadvantaged students (see Figure \ref{fig:block-pair-disc}). Moreover, the discovery program is the only mechanism under which disadvantaged students can be worse-off when compare to \noAA, when considering the changes in rank to matched schools. In particular, this hurts the top-performing disadvantaged students much more, and helps the low-performing disadvantaged students (see Figure \ref{fig:disc-who-worse}). 

Although, in theory, students could truncate their preference lists to attend better schools, it is unclear if this type of behavior appears systematically in the dataset. Across the 12 academic years of data, the lengths of disadvantaged students' preference lists remains quite constant, with the average being about five schools. Similar lengths of preference lists are observed when restricting to top-performing disadvantaged students only. In addition, we fitted a linear regression model to identify the relationship between the lengths of students' preference lists and their priority standings, but the results were inconclusive. We include the detailed analysis results in Appendix \ref{app:sec:no-strategy}. This is not particularly surprising as remarked by \citet{kesten2010school}: ``failure to satisfy dominant-strategy incentive compatibility does not necessarily imply easy manipulability in practice."

\subsection{Results: comparison of three mechanisms} \label{sec:res-compare}

For experiments in this section, we choose the reservation quotas so that they are consistent with the proportion of disadvantaged students in the market: $q_c^R = \lceil q_c\times \frac{|S^m|}{|S^M|+|s^m|} \rceil$, $\forall c\in C$. We choose these reservation quotas simply because they are a reasonable choice and are a smart reserve, and we would like to point out that one could slightly increase or decrease these numbers without affecting the findings in this section qualitatively.

\begin{figure}[tb]
    \centering
    \begin{subfigure}[t]{.48\textwidth}
    \centering
    \includegraphics[width=.8\textwidth]{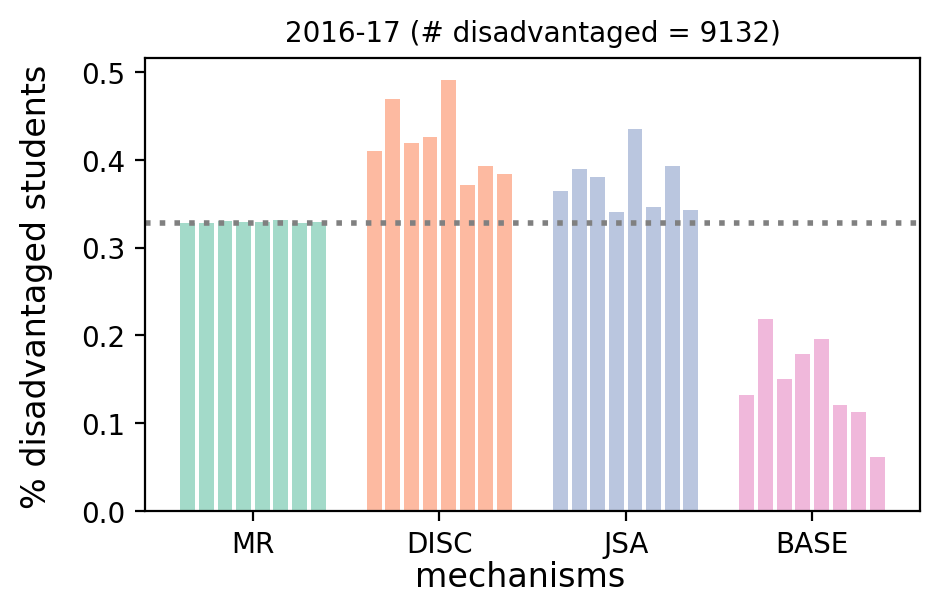}
    \caption{\raggedright \footnotesize Proportions of disadvantaged students admitted, with bars from left to right corresponding to schools: B, T, R, L, Q, M, S, A. The dotted line represents the proportion of disadvantaged students among all applicants.} \label{fig:perc_in_school}
    \end{subfigure} %
    \hspace{.02\textwidth}
    \begin{subfigure}[t]{.48\textwidth}
    \centering
    \includegraphics[width=.8\textwidth]{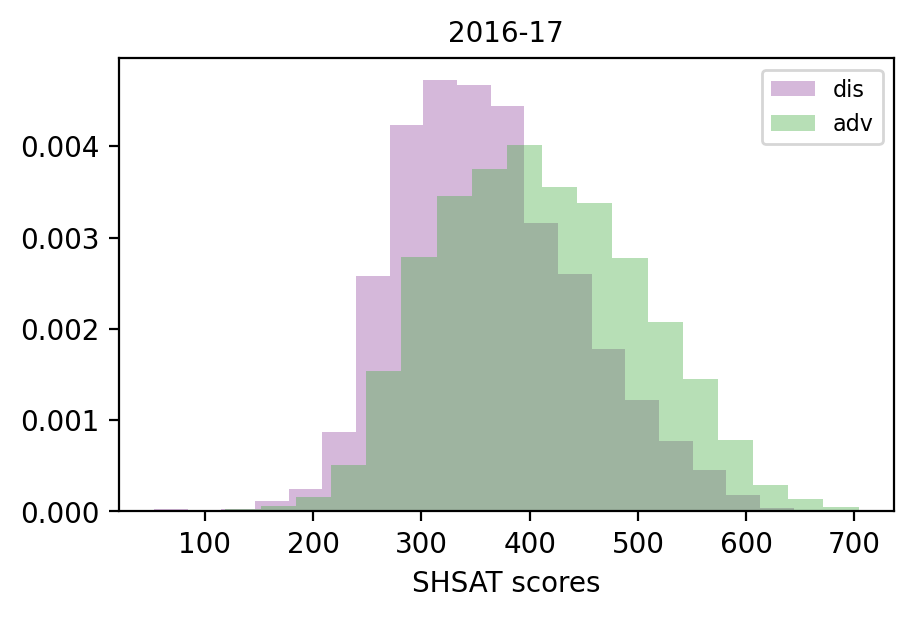}
    \caption{\raggedright \footnotesize The distribution of the SHSAT scores of advantaged students (labeled ``adv'') and disadvantaged students (labeled ``dis'').} \label{fig:score-distributions}
    \end{subfigure}
    \vspace{-1em}
    \caption{Mechanisms with reserved seats increase the number of disadvantaged students admitted.\vspace{-1.5em}}
\end{figure}

\smallskip \noindent\textbf{Proportion of disadvantaged students admitted.} In Figure \ref{fig:perc_in_school}, we show that all mechanisms with reserved seats can increase the proportion of disadvantaged students admitted to these schools. More specifically, under joint seat allocation and the discovery program mechanism, the numbers of disadvantaged students admitted exceeds the reservation quotas. This is because disadvantaged students with high scores can take up general seats under these two mechanisms. On the other hand, for minority reserve, the numbers of disadvantaged students admitted match exactly the reservation quota\footnote{A similar  observation was made by \citet{dur2018reserve} for the Boston school district, where the percentage of walk-zone students hover just around $50\%$ when $50\%$ seats are reserved for them.}. This is because after disadvantaged students take up the reserved seats, the remaining disadvantaged students cannot compete against advantaged students for the general seats and are thus not admitted. The phenomenon is exactly the high competitiveness condition we discussed in Section~\ref{sec:jsa-vs-mr} and is particularly true for our dataset since the number of students are much higher than the number of available seats, and disadvantaged students are in general performing worse than advantaged students, as one can see in Figure \ref{fig:score-distributions}. 

The figure seems to suggest that, for a fixed quota, the discovery program mechanism is better for disadvantaged students, as the number of disadvantaged students admitted to any school is the largest. However, this is not true when we examine the matching more closely down to individual students. Moreover, since one can increase the number of disadvantaged students admitted by simply increasing the reservation quotas, policymakers should not solely focus on the absolute number of disadvantaged students admitted when comparing mechanisms.

\smallskip \noindent\textbf{Effects to individual students.}
As opposed to Figure \ref{fig:perc_in_school} which shows the effects of mechanisms with reserved seats on disadvantaged students as a whole group, we show in Figure \ref{fig:perc_change_rank} these effects on individual levels. In particular, we examine the change in rank of the schools assigned to students under these mechanisms as compared to under the baseline mechanism. For instance, if a student is matched to their third choice (i.e., rank of assigned school is $3$) under the baseline mechanism, but is matched to their first choice (i.e., rank of assigned school is $1$) under minority reserve, then their change in rank of assigned school is $-2$ under minority reserve. 

The main takeaway of Figure \ref{fig:perc_change_rank} is that when the reservation quotas are a smart reserve, the discovery program mechanism is the only one under which disadvantaged students can be worse off, as it is the only mechanism with markers on the positive axis. This is consistent with our discussion in Section \ref{sec:affirmative-action} (see Table \ref{tab:summary-prop}). We further investigate who are the disadvantaged students that are worse off under the discovery program, and we show the results in Figure \ref{fig:disc-who-worse}. Interestingly, the disadvantaged students who are performing relatively well are the ones who are being admitted to schools they prefer less (dots on the upper left side of Figure \ref{fig:disc-who-worse}). These are essentially the disadvantaged students who are assigned to general seats during the first stage of the discovery program mechanism. Because there are fewer seats during the first stage of the discovery program mechanism (as compared to the baseline mechanism), the competition is fiercer and thus, these disadvantaged students got assigned to worse schools. Not only does this phenomenon imply that the discovery program mechanism is unfair to these well-performing disadvantaged students, but it also hints at a situation where students have the incentive to under-perform in the admission exams. This certainly is in sharp contrast to the purpose of education and should not be a consequence of any applicable mechanism. 

\begin{figure}[tb]
    \centering
    \begin{subfigure}[t]{.48\textwidth}
    \centering
    \includegraphics[width=.8\textwidth]{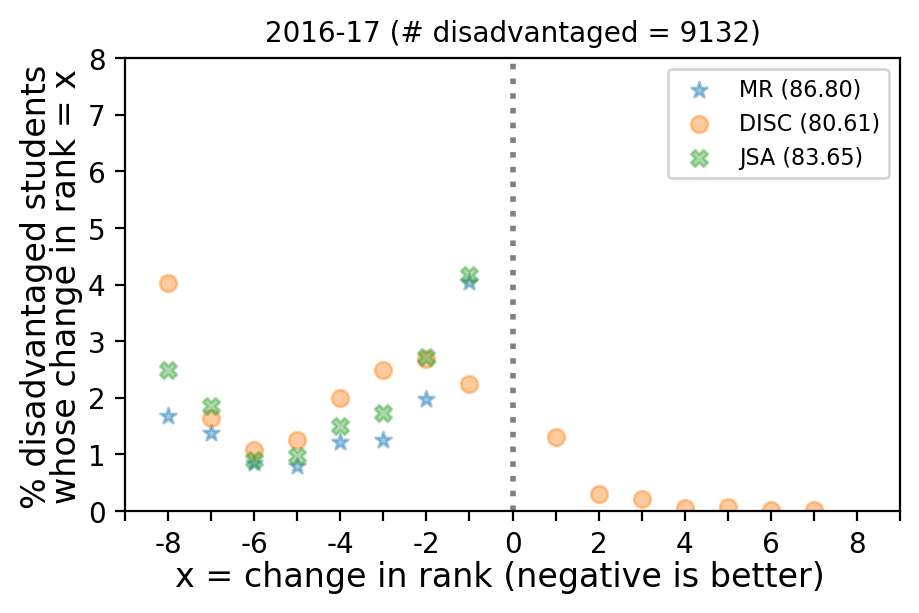}
    \caption{\raggedright \footnotesize Change from \noAA ~to a mechanism with reserved seats, for disadvantaged students} \label{fig:perc_change_rank}
    \end{subfigure}
    \hspace{.02\textwidth}
    \begin{subfigure}[t]{.48\textwidth}
    \centering
    \includegraphics[width=.8\textwidth]{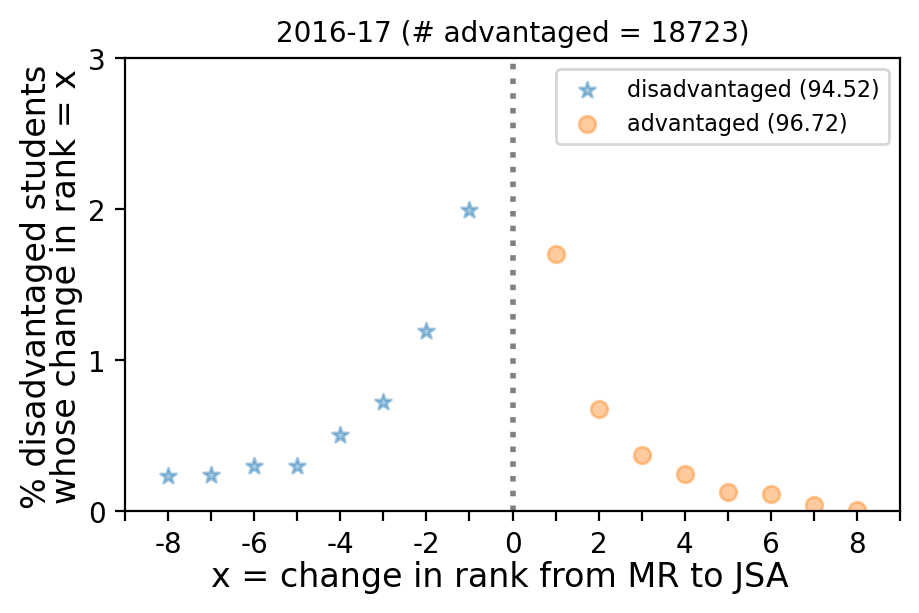}
    \caption{\raggedright \footnotesize Change from \ttup{MR} to \ttup{JSA}, for both advantaged and disadvantaged students.}
    \label{fig:mr-jsa-adv-dis}
    \end{subfigure}
    \vspace{-.5em}
    \caption{\raggedright Percentage of (dis)advantaged students (w.r.t. the total number of (dis)advantaged students) whose change in rank of assigned schools is a certain value. The number in each legend label is for when $x=0$. \vspace{-1.5em}}
\end{figure}

\smallskip \noindent\textbf{Joint seat allocation dominates minority reserve.} In Figure \ref{fig:perc_change_rank}, we see that for each negative change in rank of assigned schools, the markers of joint seat allocation are in general higher than those of minority reserve. It seems to suggest that matching $\mu^{\ttup{JSA}}$ dominates matching $\mu^{\ttup{MR}}$ for disadvantaged students. To understand if this is true, we directly compare these two matchings and confirm the hypothesis (see Figure \ref{fig:mr-jsa-adv-dis}). In fact, we observe the same dominance relation for all academic years. This prompts us to investigate the reason behind it, especially given that this dominance relation is not true in general as we discussed in Section \ref{sec:aa-compare}. This dominance is a consequence of the data satisfying the high competitiveness hypothesis defined in Section~\ref{sec:jsa-vs-mr} (see Figure \ref{fig:perc_in_school}): the number of disadvantaged students admitted under minority reserve should not exceed the reservation quotas.

\section{Conclusion and Discussion} \label{sec:discussion}

In this paper, we study three mechanisms with reserved seats, and compare their outcomes for disadvantaged students under the school-over-seat hypothesis. We show that although the discovery program is instrumental in providing opportunities for disadvantaged students, the current implementation suffers from some drawbacks both theoretically and empirically. Although both joint seat allocation and minority reserve could alleviate these drawbacks, the former is better for disadvantaged students for the NYC specialized high school market. As our main theoretical contribution, we identify a fairly broad condition of markets, that we call \emph{high competitiveness}, under which \ttup{JSA} dominates \ttup{MR} for all disadvantaged students. In particular, we show that this condition holds in the NYC SHS market using 12 years of data.

One caveat of our results is that they are based on the school-over-seat hypothesis, for which current data do not offer a definitive validation. Our experiments on the polarization of the preference data (see Appendix~\ref{sec:res-preference}) and the fact that the length of the summer program (3 weeks) is minimal when compared to the length of a high-school cycle (4 years) seem to suggest that this hypothesis is reasonable. However, other factors may come into play, such as the social stigma attached to being admitted via reserved seats\footnote{We are not aware of this stigma being present in NYC SHSs, but it is definitely present in other markets employing some form of seat reservation \citep{aygun2020designing}.}. We think it is important for the DOE to further investigate this hypothesis, for instance, through questionnaires to the perspective students.

\bigskip
\noindent \textbf{Acknowledgements.} The authors would like to thank the NYC DOE for providing the data as well as further discussion on the discovery program. The authors would also like to express their gratitude to anonymous reviewers of a previous version of the manuscript for their helpful comments and suggestions as well as for pointing out missing references. Yuri Faenza acknowledges support from the NSF Award 2046146 \emph{Career: An Algorithmic Theory of Matching Markets}. A part of Swati Gupta's research has been funded by the NSF AI Institute grant NSF-2112533. Xuan Zhang thanks the Cheung Kong Graduate School of Business (CKGSB) for their fellowship support.

\bibliographystyle{ACM-Reference-Format}
\bibliography{ref}

\newpage\appendix

\section{Missing Definition for Choice Functions} \label{sec:def-ch-func}

\begin{definition}[substitutability]
	Choice function $\Ch_c$ is substitutable if for any set of students $S_1$, $s\in \Ch_c(S_1)$ implies that for all $S_2\subseteq S_1$, $s\in \Ch_c(S_2\cup \{s\})$.
\end{definition}

\begin{definition}[consistency]
	Choice function $\Ch_c$ is consistent if for any sets of students $S_1$ and $S_2$, $\Ch_c(S_1) \subseteq S_2 \subseteq S_1$ implies $\Ch_c(S_1) = \Ch_c(S_2)$.
\end{definition}

\begin{definition}[$q_c$-acceptance]
    Choice function $\Ch_c$ is $q_c$-acceptant if for any set of students $S_1$, $|\Ch_c(S_1)|=\min(q_c, |S_1|)$.
\end{definition}

\begin{definition}[$q_c$-responsive]
    Choice function $\Ch_c$ is $q_c$-responsive if there exists a priority order  $>$ over the students such that for any set of students $S_1$, $\Ch_c(S_1) = \max(S_1, >, q_c)$. In such case, we say $\Ch_c$ is \emph{induced} by priority order $>$ (and quota $q_c$).
\end{definition}

\section{Missing Examples} \label{sec:app:missing-example}

\subsection{From Section~\ref{sec:JSA}} \label{sec:app:ex:JSA}

\begin{example} \label{ex:jsa-mr-worse}
    Consider the instance with students $S^M= \{s^M_1\}$, $S^m= \{s^m_1, s^m_2\}$ and schools $C= \{c_1, c_2, c_3\}$, each with a quota of $1$. All schools have priority order $s^M_1 > s^m_1 > s^m_2$. Students' preference lists are given below:
    \[\setlength{\arraycolsep}{3pt}
    \begin{array}{ccccc}
        s^M_1 & \qquad & s^m_1 & \qquad & s^m_2  \\
        \hline
        c_1 && c_3 && c_1 \\
        c_3 && c_1 && c_2
    \end{array}\]
    Without seat reservation, the resulting matching is $$\mu^{\noAA} = \{(s^M_1, c_1), (s^m_2, c_2), (s^m_1, c_3)\}.$$ Consider the reservation quotas $q_{c_1}^R= 1$ and $q_{c_2}^R= q_{c_3}^R=0$. Then, $$\mu^{\ttup{MR}} = \mu^{\ttup{JSA}} = \{(s^m_1, c_1), (s^m_2, c_2), (s^M_1, c_3)\}.$$ Disadvantaged student $s^m_2$ is indifferent between the two matchings, but disadvantaged student $s^m_1$ strictly prefers $\mu^{\noAA}$ to $\mu^{\ttup{JSA}}$. That is, $\mu^{\noAA}$ Pareto dominates $\mu^{\ttup{JSA}}$ for disadvantaged students. \EOE
\end{example}

\subsection{From Section \ref{sec:no-one-wins}} \label{sec:app:ex:no-one-wins} 

\begin{example} \label{ex:AA-better}
    Consider the instance with students $S^M= \{s_1^M\}$, $S^m= \{s_1^m, s_2^m\}$ and schools $C= \{c_1, c_2\}$. Both schools have a quota of $1$, and a reservation quota of $1$. All students prefer school $c_1$ to $c_2$. Both schools have priority order $s_1^M > s_1^m > s_2^m$. Then, $$\mu^{\noAA} = \{s_1^M, c_1\}, \{s_1^m, c_2\}, \textup{ and } \mu^{\ttup{MR}} = \mu^{\ttup{DISC}} = \mu^{\ttup{JSA}} = \{s_1^m, c_1\}, \{s_2^m c_2\}.$$ That is, the matching under any of the mechanisms with reserved seats Pareto dominates the matching obtained from the baseline mechanism for disadvantaged students. \EOE
\end{example}

\begin{example} \label{ex:disc-no-compare} 
    Consider the instance with students $S^M=\{s^M_1, s^M_2\}$, $S^m=\{s^m_1, s^m_2\}$ and schools $C=\{c_1, c_2\}$. Both schools have a quota of $2$ and a reservation quota of $1$. All students prefer school $c_1$ to $c_2$, and all schools have priority order $s^M_1 > s^m_1 > s^M_2 > s^m_2$. Then, $$\mu^{\noAA} = \mu^{\ttup{MR}} = \mu^{\ttup{JSA}} = \{s^M_1, c_1\}, \{s^m_1, c_1\}, \{s^M_2, c_2\}, \{s^m_2, c_2\},$$ and $$\mu^{\ttup{DISC}} = \{s^M_1, c_1\}, \{s^m_2, c_1\}, \{s^m_1, c_2\}, \{s^M_2, c_2\}.$$ Note that the reservation quotas is a smart reserve. Disadvantaged student $s^m_2$ strictly prefers $\mu^{\ttup{DISC}}$ to the other matching, while $s^m_1$ strictly prefers the other matching to $\mu^{\ttup{DISC}}$. \EOE
\end{example}

\section{Equivalent Interpretation} \label{sec:aux-instances}

In this subsection, we take a different approach and instead of comparing the outputs. We compare how mechanisms interpret the inputs, and particularly how students' original preferences over schools are translated to their preferences over reserved and general seats at all schools. 

\subsection{Techniques}

The mechanisms with reserved seats introduced in this paper seem to entail different algorithms applied to the same preferences lists of students and schools. However, it turns out that an equivalent, yet mathematically more convenient way is to view their assignment outputs as obtained from the same algorithm applied, however, to different input instances (see Section \ref{sec:aux-instances}). There are two approaches by which we can obtain such a reformulation. 

This first approach is to employ \emph{choice functions}, which are a general and powerful way to model the preference lists of agents in matching markets. In particular, all choice functions needed to model the mechanisms in this paper satisfy the \emph{substitutability}, \emph{consistency}, and \emph{$q_c$-acceptance} properties (see Section~\ref{sec:choice-function}). Under such properties, stable matchings are known to exist and satisfy strong structural and algorithmic properties (see, e.g., \citet{alkan2002class,faenza2021affinely,roth1984stability}). This reformulation\footnote{We note in passing, that, this reformulation allows a central planner to access many stable matchings, using recent results by \citet{faenza2021affinely}, which provide alternatives to the matchings output by the mechanisms considered in this paper.} allows us to analyze the assignments under different mechanisms as the outputs of one or more rounds of Roth's generalization \citep{roth1984stability} of the classical deferred acceptance algorithm by \citet{gale1962college}. As a result, to show properties of the assignment obtained from mechanisms with reserved seats, we can directly use properties of its choice functions, of stable matchings, as well as the properties of the generalized deferred acceptance algorithm.

The second approach is to expand students' original preferences over schools to preferences over reserved and general seats at schools. Under this reformulation, assignments under different mechanisms with reserved seats can be obtained simply by applying the classical deferred acceptance algorithm over the equivalent instances. This allows us to deduce interesting properties of the mechanisms (e.g., strategy-proofness), by leveraging on classical results on stable matchings.

\subsection{Auxiliary instances}

We present alternative representations of the inputs under three mechanisms. That is, for each of the three matchings -- $\mu^{\ttup{MR}}$, $\mu^{\ttup{DISC}}$, and $\mu^{\ttup{JSA}}$ -- we show how to construct an auxiliary instance such that the matching corresponds to the student-optimal stable matching of the auxiliary instance without reserved seats.

The reason for developing these auxiliary instances is three-fold. First, it allows us to prove many of the properties (e.g., weakly group strategy-proofness) of the joint seat allocation mechanism, since we can now apply results developed for the classical stable matching model. Second, it completely removes the cost of implementing a new mechanism for the DOE. That is, the DOE does not need to develop a new algorithm incorporating choice functions, and can use the same algorithm as in their current system. Lastly, these auxiliary instances elucidate a simple difference of the three mechanisms: they differ in how students' preferences over general and reserved seats at all schools are extracted from their original preferences over schools.

We start by describing the common components of these auxiliary instances, which are the set of schools, their quotas, and their priority orders over the students. Every school $c\in C$ is divided into two schools $c'$ and $c''$, where $c'$ represents the part with general seats and has quota $q^{\ttup{aux}}_{c'} \coloneqq q_c-q_c^R$, and $c''$ is the part with reserved seats and has quota $q^{\ttup{aux}}_{c''} \coloneqq q_c^R$. Let $C^{\ttup{aux}} = \{c': c\in C\} \cup \{c'': c\in C\}$ be the new set of schools after the division, and for every $c\in C^{\ttup{aux}}$, let $\omega(c)$ denote its corresponding school in the original instance. Then, graph $G^{\ttup{aux}}$ has vertices and edges: $$V(G^{\ttup{aux}})  = C^{\ttup{aux}} \cup S, \textup{ and } E(G^{\ttup{aux}}) = \{(s,c): s\in S, c\in C^{\ttup{aux}} , (s,\omega(c))\in E\}.$$ The priority order over the students by school $c'$ is the same as that of school $c$ (i.e., $>^{\ttup{aux}}_{c'} = >_c$); and that by school $c''$ is defined as follows: for two students $s_1, s_2\in S$, $$s_1 >_{c''}^{\ttup{aux}} s_2 \; \Leftrightarrow \; \begin{cases} s_1\in S^m \textup{ and } s_2\in S^M; \textup{ or } \\ s_1, s_2\in S^m \textup{ and } s_1>_c s_2; \textup{ or } \\ s_1, s_2\in S^M \textup{ and } s_1>_c s_2. \end{cases}$$ The choice function $\Ch^{\ttup{aux}}_{c}$ of every school $c\in C^{\ttup{aux}}$ is $q_c^{\ttup{aux}}$-responsive and is simply induced from priority order $>^{\ttup{aux}}_{c}$. We state the choice functions here to be consistent with our approach in previous sections. However, they are not necessary to obtain the student-optimal stable matching as the classical deferred acceptance algorithm suffice.

The only component remaining is the preference lists of students, which depends on the mechanism with reserved seats, and we describe those next. 

\smallskip \noindent\textbf{Minority reserve.} The original preference list $c_1 >_s c_2 >_s \cdots >_s c_k$ of student $s$ is modified as: $$c''_1 >^{\ttup{MR-a}}_s c'_1 >^{\ttup{MR-a}}_s c''_2 >^{\ttup{MR-a}}_s c'_2 >^{\ttup{MR-a}}_s \cdots >^{\ttup{MR-a}}_s c''_k >^{\ttup{MR-a}}_s c'_k.$$ Although the relative ranking of the schools remains the same, students prefer reserved seats to general seats. Let $I^{\ttup{MR-a}} \coloneqq (G^{\ttup{aux}}, >^{\ttup{MR-a}}_S, >^{\ttup{aux}}_C, \q^{\ttup{aux}})$ denote the auxiliary instance, and let $\mu^{\ttup{MR-a}} \coloneqq \ttup{SDA}(I^{\ttup{MR-a}}, \Ch^{\ttup{aux}})$ denote the student-optimal stable matching of the auxiliary instance.

\begin{proposition}[\cite{hafalir2013effective}] \label{prop:mr-equiv}
    For every student $s\in S$, $\mu^{\ttup{MR}}(s) = \omega(\mu^{\ttup{MR-a}}(s))$.
\end{proposition}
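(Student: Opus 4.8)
\textbf{Proof proposal for Proposition~\ref{prop:mr-equiv}.}

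The plan is to show that the run of \ttup{SDA} on the auxiliary instance $I^{\ttup{MR-a}}$ simulates, school by school, exactly what the choice function $\Ch^{\ttup{MR}}_c$ does on the original instance. The key observation is that the preferences $>^{\ttup{MR-a}}_s$ interleave each school's two copies consecutively, with the reserved copy $c''$ ranked immediately above the general copy $c'$; so from the student side, applying to $\omega(c)$ under $\Ch^{\ttup{MR}}$ corresponds to applying first to $c''$ and, upon rejection there, immediately to $c'$, before moving on to the next original school. From the school side, $>^{\ttup{aux}}_{c''}$ ranks all minority students above all majority ones (breaking further ties by $>_c$), while $>^{\ttup{aux}}_{c'}=>_c$; so $c''$ absorbs the top $q^R_c$ minority applicants exactly as the ``reserved seats'' block of $\Ch^{\ttup{MR}}_c$ does, and $c'$ then takes the top $q_c-q^R_c$ of the remaining applicants exactly as the ``remaining seats'' block does. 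The degenerate case where there are fewer than $q^R_c$ minority applicants is handled automatically: $>^{\ttup{aux}}_{c''}$ then lets majority students fill the leftover $c''$-seats, matching the clause in the definition of $\Ch^{\ttup{MR}}_c$ where unused reserved seats open up to advantaged students.

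Concretely, I would argue by establishing a correspondence between the rounds of the two \ttup{SDA} executions. The cleanest route is to prove that the map $\mu \mapsto (s \mapsto \omega(\mu(s)))$ sends stable matchings of $I^{\ttup{MR-a}}$ (under $\Ch^{\ttup{aux}}$) to stable matchings of the original instance under $\Ch^{\ttup{MR}}$, and conversely that any matching $\nu$ stable under $\Ch^{\ttup{MR}}$ lifts to a matching $\hat\nu$ on $G^{\ttup{aux}}$ (place a student matched to $\omega(c)$ into $c''$ if they are among the top $q^R_c$ minority students of $\nu(\omega(c))$, otherwise into $c'$) which is stable under $\Ch^{\ttup{aux}}$; and that this correspondence preserves the student-side preference order, hence carries student-optimal to student-optimal. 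For the forward direction, suppose $\mu$ is stable on $I^{\ttup{MR-a}}$ but $(s, \omega(c))$ blocks its image under $\Ch^{\ttup{MR}}$: then $s\in \Ch^{\ttup{MR}}_c(\mu^{\mathrm{orig}}(\omega(c))\cup\{s\})$, which by the block structure of $\Ch^{\ttup{MR}}_c$ means $s$ beats someone in the reserved block or in the remaining block of $\omega(c)$; I unwind this to show $s$ would be accepted at $c''$ or at $c'$ over some current occupant, and since $c''>^{\ttup{MR-a}}_s c'>^{\ttup{MR-a}}_s(\text{anything below }\omega(c))$ and $\omega(c)>_s\mu^{\mathrm{orig}}(s)$ implies one of $c'',c'$ is above $\mu(s)$ in $>^{\ttup{MR-a}}_s$, this produces a blocking pair of $\mu$ in $I^{\ttup{MR-a}}$, a contradiction. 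The reverse direction is the symmetric unwinding.

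The main obstacle I anticipate is bookkeeping rather than conceptual: one must verify carefully that the lift $\hat\nu$ of a $\Ch^{\ttup{MR}}$-stable matching is a genuine matching of $G^{\ttup{aux}}$ (respecting the separate quotas $q_c-q^R_c$ and $q^R_c$) and is $\Ch^{\ttup{aux}}$-stable, and in particular that no student placed in $c'$ under $\hat\nu$ would rather be in $c''$ in a way that $c''$ would honor — this is where the fact that $\Ch^{\ttup{MR}}_c$ puts the \emph{top} minority students into the reserved block (not an arbitrary choice) is essential, and it is exactly mirrored by $>^{\ttup{aux}}_{c''}$ preferring minority students and then ordering by $>_c$. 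A subtle point to get right is that $\Ch^{\ttup{MR}}_c$ may put a minority student into the ``remaining seats'' block when there are more than $q^R_c$ of them; in $G^{\ttup{aux}}$ such a student sits in $c'$, and one checks they do not block $c''$ because $c''$ is already full with strictly higher-$>_c$ minority students. Since Proposition~\ref{prop:mr-equiv} is attributed to \citet{hafalir2013effective}, I would present this as a short verification, citing their work for the original argument and noting that it is a direct consequence of the block-decomposition of $\Ch^{\ttup{MR}}_c$ together with the consecutive-copies structure of $>^{\ttup{MR-a}}_s$ and the minority-first order $>^{\ttup{aux}}_{c''}$.
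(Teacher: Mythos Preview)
The paper does not supply a proof of Proposition~\ref{prop:mr-equiv}; it simply attributes the result to \citet{hafalir2013effective} and moves on. So there is nothing in this paper to compare your argument against directly. That said, the bijection route you outline in your second paragraph is exactly the strategy the paper uses for the analogous statement about \ttup{JSA} (Proposition~\ref{prop:jsa-equiv}): build a correspondence between matchings of the auxiliary and original instances, check that stability is preserved in both directions, and conclude that student-optimal maps to student-optimal. (For \ttup{DISC}, Proposition~\ref{prop:disc-equiv}, the paper instead runs the McVitie--Wilson sequential variant of deferred acceptance in a carefully chosen order, which is closer in spirit to your first paragraph.) Your blocking-pair unwindings are correct.

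One technical gap worth closing: the lift $\hat\nu$ you describe (``place a student into $c''$ if they are among the top $q^R_c$ minority students of $\nu(\omega(c))$, otherwise into $c'$'') can overfill $c'$ in the degenerate case you yourself flag, where $\nu(\omega(c))$ is full but contains fewer than $q^R_c$ minority students; $c'$ would then receive more than $q_c-q^R_c$ students, and moreover any majority student placed in $c'$ would block an empty slot at $c''$ (since all students rank $c''$ above $c'$ in $>^{\ttup{MR-a}}_s$). The fix is to set $\hat\nu(c'') = \max(\nu(\omega(c)), >^{\ttup{aux}}_{c''}, q^R_c)$ and $\hat\nu(c') = \nu(\omega(c)) \setminus \hat\nu(c'')$, so that leftover $c''$-seats are filled by the highest-$>_c$ majority students. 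With that correction the argument goes through.
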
 

\noindent\textbf{Discovery program.} The original preference list $c_1 >_s c_2 >_s \cdots >_s c_k$ of student $s$ becomes: $$c'_1 >^{\ttup{DISC-a}}_s c'_2 >^{\ttup{DISC-a}}_s \cdots >^{\ttup{DISC-a}}_s c'_k >^{\ttup{DISC-a}}_s c''_1 >^{\ttup{DISC-a}}_s \cdots >^{\ttup{DISC-a}}_s c''_k.$$ Students prefer general seats over reserved seats; and within each type of seats, the ranking of the schools is the same as that of the original instance. Similarly, we denote the auxiliary instance by $I^{\ttup{DISC-a}} \coloneqq (G^{\ttup{aux}}, >^{\ttup{DISC-a}}_S, >^{\ttup{aux}}_C, \q^{\ttup{aux}})$, and let $\mu^{\ttup{DISC-a}} \coloneqq \ttup{SDA}(I^{\ttup{DISC-a}}, \Ch^{\ttup{aux}})$ denote the student-optimal stable matching of the auxiliary instance. 

\begin{proposition} \label{prop:disc-equiv}
    For every student $s\in S$, $\mu^{\ttup{DISC}}(s)= \omega(\mu^{\ttup{DISC-a}}(s))$.
\end{proposition}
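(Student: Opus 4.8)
The plan is to show that the student-proposing deferred acceptance algorithm applied to the auxiliary instance $I^{\ttup{DISC-a}}$ naturally \emph{splits} into the three stages that define $\mu^{\ttup{DISC}}$: namely, that $\mu^{\ttup{DISC-a}}$ restricted to the general-seat copies $\{c':c\in C\}$ coincides with $\mu^{\ttup{DISC}}_1$, and $\mu^{\ttup{DISC-a}}$ restricted to the reserved-seat copies $\{c'':c\in C\}$ coincides with $\mu^{\ttup{DISC}}_2 \dot\cup \mu^{\ttup{DISC}}_3$; applying $\omega$ then yields the claim. The only nontrivial ingredient is the classical fact that the outcome of Roth's generalized deferred acceptance algorithm does not depend on the order in which proposals and rejections are processed \citep{roth1984stability}, so in particular we are free to delay proposals as long as each proposal is eventually made.

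\textbf{Phase 1 (general seats).} Using order-independence, I would first run the algorithm on $I^{\ttup{DISC-a}}$ while forbidding any student from proposing to a reserved-seat copy $c''$: a student becomes \emph{inactive} the moment its next proposal would be to some $c''$. Since every student's auxiliary preference list ranks all general-seat copies above all reserved-seat copies, in the original relative order, and since each $c'$ has quota $q_c^G$ and priority $>_c$, this restricted run is literally the deferred acceptance algorithm on $(G,>,\q^G)$, i.e.\ the first stage of the discovery program; hence the general-seat copies end up holding exactly $\mu^{\ttup{DISC}}_1$, and the set of inactive students is $U := \{s: \mu^{\ttup{DISC}}_1(s)=\emptyset\}$. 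I would then observe that these assignments are frozen for the remainder of the run: a student holding some $c'$ never proposes again unless displaced, an inactive student has already been rejected by every general-seat copy on its list, and no one else will ever propose to a general-seat copy, so no $c'$ receives a further proposal and no displacement from a $c'$ occurs.

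\textbf{Phase 2 (reserved seats).} I would now re-activate the students in $U$ and partition them as $U = U^m \dot\cup U^M$ with $U^m \subseteq S^m$ and $U^M \subseteq S^M$. At each reserved-seat copy $c''$, every student of $S^m$ has strictly higher $>^{\ttup{aux}}_{c''}$-priority than every student of $S^M$, so a proposal by a student of $U^M$ can never displace a student of $U^m$. Invoking order-independence again, I would first let only $U^m$ propose (keeping $U^M$ inactive) and run to completion: among disadvantaged students $>^{\ttup{aux}}_{c''}$ agrees with $>_c$ and the quotas are $q_c^R$, so this subrun is exactly the second stage of the discovery program, and the reserved-seat copies now hold $\mu^{\ttup{DISC}}_2$. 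Re-activating $U^M$, these students propose only to reserved-seat copies and, being outranked there by all disadvantaged students, can occupy only residual capacity $q_c^R - |\mu^{\ttup{DISC}}_2(c)| = q_c^E$, competing among themselves via $>_c$; this is precisely the third stage, yielding $\mu^{\ttup{DISC}}_3$, and since these proposals never disturb $\mu^{\ttup{DISC}}_2$, the reserved-seat copies finish holding $\mu^{\ttup{DISC}}_2 \dot\cup \mu^{\ttup{DISC}}_3$.

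\textbf{Conclusion.} Combining the two phases, $\mu^{\ttup{DISC-a}}$ assigns each student $s$ to $(\mu^{\ttup{DISC}}_1(s))'$, to $(\mu^{\ttup{DISC}}_2(s))''$, to $(\mu^{\ttup{DISC}}_3(s))''$, or to $\emptyset$, according to the stage in which $\mu^{\ttup{DISC}}$ matches $s$; in every case $\omega(\mu^{\ttup{DISC-a}}(s)) = \mu^{\ttup{DISC}}(s)$. I expect the main obstacle to be purely expository: making the ``delay the proposals'' arguments watertight, in particular the two freezing claims, since once order-independence of deferred acceptance is in hand, each phase is a routine identification of a restricted run with one stage of the discovery program. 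This proof parallels that of Proposition~\ref{prop:mr-equiv} in \citet{hafalir2013effective}, the differences being that here general seats are preferred to reserved ones (forcing the general-seats-first phasing) and that the reserved-seat copies prioritize disadvantaged students (so that the de-reservation stage emerges automatically in Phase 2).
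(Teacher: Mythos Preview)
Your proposal is correct and follows essentially the same approach as the paper: both exploit the order-independence of deferred acceptance (the paper via the McVitie--Wilson sequential variant, you via delayed proposals) to split the run on $I^{\ttup{DISC-a}}$ into three stages that coincide with the three stages defining $\mu^{\ttup{DISC}}$, using the same two freezing observations (no further proposals to $c'$ after stage one; advantaged students cannot displace disadvantaged ones at $c''$). Your write-up is somewhat more explicit about the freezing arguments, but there is no substantive difference in method.
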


\begin{proof}[Proof of Proposition \ref{prop:disc-equiv}.]
   To prove the proposition, instead of carrying out the deferred acceptance algorithm as we introduced in Section \ref{sec:model-notation} based on~\cite{roth1984stability} for choice function models, we consider an equivalent execution of the algorithm when choice functions $\Ch$ are responsive. This algorithm was introduced by~\citet{mcvitie1971stable} and it similarly runs in rounds. The algorithm starts with all students unmatched. In every round, one student $s$ who is not (temporarily) matched applies to his or her most preferred school $c$ that has not yet rejected him or her. Let $S_c$ denote the set of students $c$ has temporarily accepted at the end of the previous round. School $c$ temporarily accepts $\Ch_c(S_c\cup \{s\})$ and rejects the rest. Note that during the algorithm, at every round, the student $s$ can be arbitrarily selected. Hence, we now consider a particular execution of the algorithm on the auxiliary instance (i.e., the order in which students are selected). The execution has three stages, and they match exactly to the three stages of the discovery program mechanism. In the first stage, the algorithm can only select students who would apply to schools of type $c'$. Since after this stage, students will only apply to schools of type $c''$, the students who are temporarily matched in the first stage would not be rejected in later stages. That is, the temporary assignment at the end of the first stage becomes permanent, and it is matching $\mu^{\ttup{DISC}}_1$. For the second stage, the algorithm can only select disadvantaged students. Since schools of type $c''$ prefers disadvantaged students to advantaged students, the temporary assignment at the end of the second stage is also permanent and it corresponds to $\mu^{\ttup{DISC}}_2$. In the last stage, the algorithm continues without restriction until it terminates. Since there are only advantaged students applying to schools of type $c''$ at this final stage, the matching finalized at this stage is $\mu^{\ttup{DISC}}_3$. 
\end{proof}

\noindent\textbf{Joint seat allocation.} The original preference list $c_1 >_s c_2 >_s \cdots >_s c_k$ of student $s$ becomes: $$c'_1 >^{\ttup{JSA-a}}_s c''_1 >^{\ttup{JSA-a}}_s c'_2 >^{\ttup{JSA-a}}_s c''_2 >^{\ttup{JSA-a}}_s \cdots >^{\ttup{JSA-a}}_s c'_k >^{\ttup{JSA-a}}_s c''_k.$$ Similar to minority reserve, the relative ranking of the schools remains the same as that of the original instance; but different from minority reserve, students prefer general seats to reserved seats. Again, we let $I^{\ttup{JSA-a}} \coloneqq (G^{\ttup{aux}}, >^{\ttup{JSA-a}}_S, >^{\ttup{aux}}_C, \q^{\ttup{aux}})$ denote the auxiliary instance, and let $\mu^{\ttup{JSA-a}} \coloneqq \ttup{SDA}(I^{\ttup{JSA-a}}, \Ch^{\ttup{aux}})$ denote the student-optimal stable matching of the auxiliary instance.

\begin{proposition} \label{prop:jsa-equiv}
    For every student $s\in S$, $\mu^{\ttup{JSA}}(s) = \omega(\mu^{\ttup{JSA-a}}(s))$.
\end{proposition}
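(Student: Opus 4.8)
The plan is to build an explicit correspondence between stable matchings of the auxiliary instance $I^{\ttup{JSA-a}}$ (stable in the classical sense, since every $\Ch^{\ttup{aux}}_c$ is responsive) and the matchings of $I$ that are stable under the choice functions $\Ch^{\ttup{JSA}}$, and then transport student-optimality across it. Define the \emph{projection} $\pi(\nu) := \omega\circ\nu$; capacities are respected because $q^{\ttup{aux}}_{c'}+q^{\ttup{aux}}_{c''}=q_c$. Define the \emph{lift} $\lambda(\mu)$, meaningful for any $\mu$ with $\Ch^{\ttup{JSA}}_c(\mu(c))=\mu(c)$ for all $c$ (in particular for every matching stable under $\Ch^{\ttup{JSA}}$, since these choice functions are $q_c$-acceptant): send the $>_c$-top $q_c^G$ students of $\mu(c)$ to the general copy $c'$ and the remaining students of $\mu(c)$ to the reserved copy $c''$. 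By construction $\pi(\lambda(\mu))=\mu$. A small observation used throughout is that the expanded preference $>^{\ttup{JSA-a}}_s$ refines $>_s$: within a fixed original school its two copies are consecutive in $>^{\ttup{JSA-a}}_s$, and copies of distinct schools are ordered as in $>_s$; hence $d \ge^{\ttup{JSA-a}}_s d'$ implies $\omega(d) \ge_s \omega(d')$.

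The core consists of three claims. (i) If $\nu$ is stable in $I^{\ttup{JSA-a}}$ then its split is forced, namely $\nu(c')=\max(\pi(\nu)(c),>_c,q_c^G)$ and $\nu(c'')=\pi(\nu)(c)\setminus\nu(c')$, so $\nu=\lambda(\pi(\nu))$: any other split produces an internal blocking pair $(s,c')$ with $s$ sitting on the reserved copy $c''$ of the same school, recalling that $c'>^{\ttup{JSA-a}}_s c''$ while $c'$ ranks students exactly as $c$ does. (ii) If $\mu$ is stable under $\Ch^{\ttup{JSA}}$ then $\lambda(\mu)$ is stable in $I^{\ttup{JSA-a}}$. (iii) Conversely, if $\nu$ is stable in $I^{\ttup{JSA-a}}$ then $\pi(\nu)$ is stable under $\Ch^{\ttup{JSA}}$ (here (i) is used so that the general/reserved partition of $\pi(\nu)(c)$ agrees with what $\Ch^{\ttup{JSA}}_c$ would produce). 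Both (ii) and (iii) are proved by contraposition, turning a blocking pair on one side into one on the other: this is where one unfolds the three-layer formula for $\Ch^{\ttup{JSA}}_c$ (general, then reserved, then remaining seats) against the definition of $>^{\ttup{aux}}_{c''}$ (disadvantaged students outrank advantaged ones, $>_c$ breaking ties inside each group), splitting on whether the blocking student is advantaged or disadvantaged and which of the three seat layers it would take.

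Given (i)--(iii) and the refinement observation, the conclusion is a short squeeze. The matching $\mu^{\ttup{JSA}}=\ttup{SDA}(I,\Ch^{\ttup{JSA}})$ is stable under $\Ch^{\ttup{JSA}}$ and is the student-optimal such matching (existence and student-optimality follow from Proposition~\ref{prop:jsa-choice} together with the general fact on choice-function stable matchings recalled in Section~\ref{sec:model-notation}); the matching $\mu^{\ttup{JSA-a}}$ is the classical student-optimal stable matching of $I^{\ttup{JSA-a}}$. By (ii), $\lambda(\mu^{\ttup{JSA}})$ is stable in $I^{\ttup{JSA-a}}$, so $\mu^{\ttup{JSA-a}}(s)\ge^{\ttup{JSA-a}}_s\lambda(\mu^{\ttup{JSA}})(s)$ for all $s$, and projecting via the refinement observation gives $\omega(\mu^{\ttup{JSA-a}}(s))\ge_s\omega(\lambda(\mu^{\ttup{JSA}})(s))=\mu^{\ttup{JSA}}(s)$. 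By (iii), $\pi(\mu^{\ttup{JSA-a}})$ is stable under $\Ch^{\ttup{JSA}}$, so student-optimality of $\mu^{\ttup{JSA}}$ yields $\mu^{\ttup{JSA}}(s)\ge_s\pi(\mu^{\ttup{JSA-a}})(s)=\omega(\mu^{\ttup{JSA-a}}(s))$. The two inequalities and strictness of $>_s$ force $\mu^{\ttup{JSA}}(s)=\omega(\mu^{\ttup{JSA-a}}(s))$ for every $s$, which is the statement.

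The main obstacle is the blocking-pair bookkeeping in (ii) and (iii). Unlike the discovery program (Proposition~\ref{prop:disc-equiv}), where the expansion puts all general seats above all reserved seats so the whole run of deferred acceptance decomposes into three clean stages, the JSA expansion interleaves $c'_i$ with $c''_i$, so there is no stage decomposition and one must reason directly about stability. The case analysis is elementary but has to be done with care, tracking whether a school is full, how many disadvantaged students occupy its reserved copy, and the type of the candidate student; it is also the place where the asymmetry between $\Ch^{\ttup{JSA}}$ and $\Ch^{\ttup{MR}}$ (general seats filled first, reserved seats only as a fallback available to disadvantaged students) actually enters. Alternatively, one may bypass (i)--(iii) by noting that $I^{\ttup{JSA-a}}$ is precisely the slot-specific-priorities reformulation of JSA with vacant reserved seats de-reserved, and invoke the correspondence between the slot-specific model and the aggregate-choice-function model from~\cite{kominers2016matching,aygun2020dynamic}.
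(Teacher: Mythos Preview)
Your proposal is correct and follows the same strategy as the paper: establish a correspondence between matchings of $I$ stable under $\Ch^{\ttup{JSA}}$ and stable matchings of the auxiliary instance $I^{\ttup{JSA-a}}$ via the projection and lift maps you call $\pi,\lambda$ (the paper calls them $\psi,\psi^{-1}$, defined identically), and then transport student-optimality across it. The paper's version is terser---it asserts the stability equivalence ``by construction'' without carrying out the blocking-pair case analysis---whereas you explicitly isolate claims (i)--(iii) and finish with a clean two-sided squeeze; the underlying argument is the same.
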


\begin{proof}[Proof of Proposition \ref{prop:jsa-equiv}.]
    We first show that matchings in the original instance $I_1 \coloneqq (G, >, \q)$ and matchings in the auxiliary instance $I_2 \coloneqq (G^{\ttup{aux}}, >_S^{\ttup{JSA-a}}, >_C^{\ttup{aux}}, \q)$ can be transformed from each other. One direction is straightforward. Given a matching $\mu_2$ in instance $I_2$, its corresponding matching $\mu_1$ in instance $I_1$ has $\mu_1(s) = \omega(\mu_2(s))$ for all students $s\in S$. For the other direction, let $\mu_1$ be a matching in instance $I_1$, we can construct its corresponding matching $\mu_2$ in instance $I_2$ as follows. For every school $c$, $\mu_2(c') = \max(\mu_1(c), >_c, q_c^G)$ and $\mu_2(c'') = \mu_1(c) \setminus \mu_2(c')$. Let $\psi$ denote the above mapping from matchings in $I_2$ to matchings in $I_1$, and let $\psi^{-1}$ denote the above mapping for the reverse direction. By construction, a matching $\mu$ of $I_1$ is stable in $I_1$ if and only if $\psi^{-1}(\mu)$ is stable in $I_2$. Therefore, the student-optimal stable matching in $I_1$ can be obtained from the student-optimal stable matching in $I_2$ via mapping $\psi^{-1}$, and the claim follows. 
\end{proof}

\section{Missing Proofs} \label{sec:app:missing-proof}

\subsection{From Section \ref{sec:MR}} \label{sec:app:missing-proof-MR}

\begin{proof}[Proof of Proposition \ref{prop:mr-choice}.]
    The substitutability property was shown in~\cite{hafalir2013effective}, but we include the proof here for completeness. Let $S_1\subseteq S$ be a subset of students, $s\in \Ch_c^{\ttup{MR}}(S_1)$ be a student selected by the choice function, and $S_2$ be a subset of students such that $s\in S_2\subseteq S_1$. We want to show that $s\in \Ch^{\ttup{MR}}_c(S_2)$. Consider the following two cases. The first case is when $s\in S_1^R$. Here, it is immediate that $s\in S_2^R\coloneqq \max(S_2\cap S^m, >_c, q_c^R)$ since $S_2\cap S^m\subseteq S_1\cap S^m$ and thus, $s\in \Ch_c^{\ttup{MR}}(S_2)$. The other case is when $s\in \Ch_c^{\ttup{MR}}(S_1) \setminus S_1^R$. Our argument for the first case implies that $S_1^R\cap S_2\subseteq S_2^R$ and thus, we have $S_2 \setminus S_2^R \subseteq S_2 \setminus S_1^R \subseteq S_1 \setminus S_1^R$. Hence, we also have $s\in \Ch_c^{\ttup{MR}}(S_2)$.
    
    Next, for consistency, let $S_2$ be a subset of students with $\Ch_c^{\ttup{MR}}(S_1)\subseteq S_2 \subseteq S_1$, and we want to show that $\Ch_c^{\ttup{MR}}(S_1) = \Ch_c^{\ttup{MR}}(S_2)$. By the definition of the choice function, it is clear that $S_1^R = S_2^R$ since $S_1^R\subseteq S_2$. With the same reasoning, we additionally have $\max(S_1 \setminus S_1^R, >_c, q_c-|S_1^R|)) = \max(S_2 \setminus S_1^R, >_c, q_c- |S_1^R|)) = \max(S_2 \setminus S_2^R, >_c, q_c- |S_2^R|)).$ Therefore, the claim follows. 
    
    Lastly, for $q_c$-acceptance, we first have that $|\Ch_c^{\ttup{MR}}(S_1)| \le |S_1^R| + q_c - |S_1^R| = q_c$, where the inequality follows directly from the definition. It remains to show that when $|S_1| < q_c$, we have $\Ch_c^{\ttup{MR}}(S_1) = S_1$. This is immediate from the definition of the choice function. 
\end{proof}

\begin{proof}[Proof of Propsoition \ref{prop:mr-no-bp}.]
    Assume by contradiction that $(s,c)$ is an in-group blocking pair of $\mu^{\ttup{MR}}$. Let $s'$ be the student in the same group as $s$ such that $s'\in \mu^{\ttup{MR}}(c)$ and $s>_c s'$. Then, by definition of $\Ch^{\ttup{MR}}_c$, we have $s\in \Ch^{\ttup{MR}}_c(\mu^{\ttup{MR}}(c) \cup \{s\})$, which means $(s,c)$ is a blocking pair of $\mu^{\ttup{MR}}$. However, this contradicts stability of $\mu^{\ttup{MR}}$. 
\end{proof}

\subsection{From Section \ref{sec:JSA}}  \label{sec:app:missing-proof-JSA}

\begin{proof}[Proof of Proposition \ref{prop:jsa-choice}.]
    The proof steps are similar to that of Proposition \ref{prop:mr-choice} for minority reserve. Let $S_1\subseteq S$ be a subset of students. First, for substitutability, let $s\in \Ch_c^{\ttup{JSA}}(S_1)$ and let $S_2$ be a subset of students such that $s\in S_2\subseteq S_1$. We want to show that $s\in \Ch_c^{\ttup{JSA}}(S_2)$ and we consider the following three cases. The first case is when $s\in S_1^G$. In this case, it is immediate that $s\in S_2^G\coloneqq \max(S_2, >_c, q_c^G)$ since $S_2\subseteq S_1$. This first case in particular implies that $S_1^G \cap S_2\subseteq S_2^G$ and thus, $S_2\setminus S_2^G \subseteq S_2\setminus S_1^G \subseteq S_1\setminus S_1^G$. Hence, in the second case where $s\in S_1^R$, we similarly have $s\in S_2^R\coloneqq \max( S_2\cap S^m \setminus S_2^G, >_c, q_c^R)$. Note that this argument for the second case also implies that $S_2 \setminus (S_2^G \cup S_2^R) \subseteq S_1\setminus (S_1^G \cup S_1^R)$. Hence, for the last case where $s\in \max(S_1\setminus (S_1^G\cup S_1^R), >_c, q_c-|S_1^G\cup S_1^R|)$, we also have $s\in \max(S_2\setminus (S_2^G\cup S_2^R), >_c, q_c-|S_2^G\cup S_2^R|)$. Therefore, in all these three cases, we have $s\in \Ch_c^{\ttup{JSA}}(S_2)$ and thus $\Ch^{\ttup{JSA}}_c$ is substitutable.
    
    Next, for consistency, let $S_2$ be a subset of students with $\Ch_c^{\ttup{JSA}}(S_1)\subseteq S_2 \subseteq S_1$, and we want to show that $\Ch_c^{\ttup{JSA}}(S_1) = \Ch_c^{\ttup{JSA}}(S_2)$. By the definition of the choice function, it is clear that $S_1^G = S_2^G$ since $S_1^G\subseteq S_2$. Moreover, we have $S_1^R=S_2^R$ since $S_1^R\subseteq S_2\cap S^m\setminus S_2^G$. With the same reasoning, we additionally have that $\max(S_1\setminus (S_1^G\cup S_1^R), >_c, q_c-|S_1^G\cup S_1^R|) = \max(S_2\setminus (S_2^G\cup S_2^R), >_c, q_c-|S_2^G\cup S_2^R|)$. Therefore, the choice function is consistent. 
    
    Lastly, for $q_c$-acceptant, we first have that $|\Ch_c^{\ttup{JSA}}(S_1)| \le |S_1^G| + |S_1^R| + q_c - |S_1^G| - |S_1^R| = q_c$, where the inequality follows directly from the definition. It remains to show that when $|S_1| < q_c$, we have $\Ch_c^{\ttup{JSA}}(S_1) = S_1$. This is immediate from the definition of the choice function. 
\end{proof}

\begin{proof}[Proof of Theorem \ref{thm:jsa-one-better}.]
    Assume by contradiction that there is reservation quotas $\q^R$ such that $\mu^{\noAA}(s) >_s \mu^{\ttup{JSA}}(s)$ for every disadvantaged student $s\in S^m$. Then, consider an alternative instance where every disadvantaged student $s$ misreports his or her preference list where $\mu^{\noAA}(s)$ is the only acceptable school. Let $\tilde G$ and $\tilde >_S$ be the resulting graph and preference lists of the students. In the following, we consider the alternative instance $\tilde I = (\tilde G, \tilde >_S, >_C, \q)$ and we claim that $\mu^{\noAA}$ is stable in instance $\tilde I$ under choice functions $\Ch^{\ttup{JSA}}$. Assume by contradiction that $\mu^{\noAA}$ admits a blocking pair $(s,c)$. Since all disadvantaged students are matched to their first choice, it must be that $s\in S^M$. Then, $s\in \Ch^{\ttup{JSA}}_c (\mu^{\noAA}(c) \cup \{s\})$ implies that there is a student $s'\in \mu^{\noAA}(c)$ such that $s>_c s'$. However, this means $s\in \Ch^{\noAA} (\mu^{\noAA}(c) \cup \{s\})$, which contradicts stability of $\mu^{\noAA}$ in the original instance $I$ under choice functions $\Ch^{\noAA}$. Hence, $\mu^{\noAA}$ is stable in instance $\tilde I$ with choice functions $\Ch^{\ttup{JSA}}$. Since $\ttup{SDA}(\tilde I, \Ch^{\ttup{JSA}})$ is the student-optimal stable matching, it dominates $\mu^{\noAA}$ and thus, every disadvantaged student is strictly better off under $\ttup{SDA}(\tilde I, \Ch^{\ttup{JSA}})$ as compared to $\mu^{\ttup{JSA}}$. However, this contradicts the fact that the joint seat allocation mechanism is weakly group strategy-proof \citep{kominers2016matching,aygun2020dynamic}. 
\end{proof}

\begin{proof}[Proof of Theorem \ref{thm:jsa-smart-all-better}.]
    Assume by contradiction that there exists disadvantaged students $s$ with $\mu^{\noAA}(s) >_s \mu^{\ttup{JSA}}(s)$. Let $s_1$ be the \emph{first} disadvantaged student that is rejected by $c_1 \coloneqq \mu^{\noAA}(s_1)$ during the deferred acceptance algorithm on instance $I$ with choice functions $\Ch^{\ttup{JSA}}$. Assume this rejection happens at round $k$. Let $S^{\ttup{JSA}}_k$ denote the set of students who apply to school $c_1$ during round $k$. In addition, let $S^{\noAA}$ denote the set of students who have ever applied to $c_1$ throughout the deferred acceptance on instance $I$ with choice functions $\Ch^{\noAA}$. It has been shown in~\cite{roth1984stability} that $\Ch^{\noAA}_{c_1} (S^{\noAA}) = \mu^{\noAA}(c_1)$. Thus, $s_1 \in \max(S^{\noAA} \cap S^m, >_{c_1}, q_{c_1}^R)$ by definition of choice function $\Ch_{c_1}^{\noAA}$ and the assumption that the reservation quotas are a smart reserve (i.e., $q_{c_1}^R \ge |\mu^{\noAA}(c_1)|$). Moreover, by our choice of $s_1$, we have $S^{\ttup{JSA}}_k \cap S^m \subseteq S^{\noAA} \cap S^m$. Therefore, $s_1 \in \max(S^{\ttup{JSA}}_k \cap S^m, >_{c_1}, q_{c_1}^R)$, which then implies $s_1 \in \Ch^{\ttup{JSA}}_{c_1} (S^{\ttup{JSA}}_k)$ by definition of choice function $\Ch^{\ttup{JSA}}_{c_1}$. However, this contradicts our assumption that $s_1$ is rejected by $c_1$ at round $k$, concluding the proof. 
\end{proof}

\begin{proof}[Proof of Proposition \ref{prop:jsa-no-bp}.]
    Assume by contradiction that $(s,c)$ is an in-group blocking pair. Let $s'$ be the student in the same group as $s$ such that $s'\in \mu^{\ttup{JSA}}(c)$ and $s>_c s'$. Then, by definition of $\Ch^{\ttup{JSA}}_c$, we have $s\in \Ch^{\ttup{JSA}}_c(\mu^{\ttup{JSA}}(c) \cup \{s\})$, which means $(s,c)$ is a blocking pair of $\mu^{\ttup{JSA}}$. However, this contradicts stability of $\mu^{\ttup{JSA}}$. 
\end{proof}

\subsection{From Section \ref{sec:jsa-vs-mr}} \label{sec:app:missing-proof-compare}

\begin{proof}[Proof of Theorem \ref{thm:jsa-dom-mr-condition}.]
    Assume by contradiction there exists disadvantaged students $s$ such that $\mu^{\ttup{MR}}(s) >_s \mu^{\ttup{JSA}}(s)$. Consider the execution of the deferred acceptance algorithm with choice functions $\Ch^{\ttup{JSA}}$, and let $s_1$ be the first disadvantaged student who is rejected by $\mu^{\ttup{MR}}(s_1)\coloneqq c_1$. Assume this rejection happens at round $k$ of the deferred acceptance algorithm. Let $S^{\ttup{JSA}}_{k}$ denote the set of students who apply to school $c_1$ during round $k$. In addition, let $S^{\ttup{MR}}$ denote the set of students who have ever applied to school $c_1$ during the execution of the deferred acceptance algorithm with choice functions $\Ch^{\ttup{MR}}$. It has been shown in~\cite{roth1984stability} that $\Ch^{\ttup{MR}}_{c_1}(S^{\ttup{MR}}) = \mu^{\ttup{MR}}(c_1)$, which then implies that $s_1\in \max(S^{\ttup{MR}} \cap S^m, >_{c_1}, q_{c_1}^R)$ by definition of choice function $\Ch^{\ttup{MR}}_{c_1}$ and our assumption that $|\mu^{\ttup{MR}}(c_1)| \le q_{c_1}^R$. Moreover, our choice of student $s_1$ implies that $S^{\ttup{JSA}}_{k} \cap S^m \subseteq S^{\ttup{MR}} \cap S^m$ and thus, we also have $s_1 \in \max(S^{\ttup{JSA}}_k \cap S^m, >_{c_1}, q_{c_1}^R)$. Therefore, $s_1\in \Ch^{\ttup{JSA}}_{c_1} (S^{\ttup{JSA}}_{k})$ by definition of choice function $\Ch^{\ttup{JSA}}_{c_1}$. However, this contradicts our assumption that $s_1$ is rejected by $c_1$ at round $k$, concluding the proof. 
\end{proof}

\begin{proof}[Proof of Theorem~\ref{thm:cs-highly-competitive}.] Recall that, under \texttt{MR}, a student applies to her favorite school's reserved seats, and, if rejected, to the same school's non-reserved seat (see Section 4). We want to estimate the ranking, among disadvantaged students, of the \emph{bottleneck} student -- that is, the first disadvantaged  student that is not admitted through a reserved seat at her most preferred school (hence, the student may either be admitted to her most preferred school via a general seat, or be admitted to another school, or not be admitted to any school).

We reformulate this problem in the classical balls in bins setting: given $n$ bins and a series of balls, each inserted in exactly one bin chosen uniformly at random, which is the first ball $k$ that is inserted in a bin with already $q^R$ balls? Classical bounds (see, e.g.,~\cite{raab1998balls}) imply that, in the $q^R > n \log n$ regimen, $k\geq (1-\epsilon){q^R}{n}$ with probability $1-o(1)$ for any $\epsilon \in (0,1)$ -- in particular, for the $\epsilon$ from the hypothesis of the theorem. Interpreting schools as bins, disadvantaged students as balls, and assigning students to their most preferred schools as inserting balls to bins, we obtain that, with probability $1-o(1)$, the bottleneck student is ranked at least $(1-\epsilon){q^R}{n}$ among disadvantaged students. 

The market is highly competitive if and only if any disadvantaged student ranked at par or worse than the bottleneck student does not get a general seat in any school. For this to happen, the bottleneck student must be ranked worse than an advantaged student that we call \emph{lucky applicant}. This is the worst-ranked advantaged student that would get a non-reserved seat in the market obtained from the original market with the number of seats being $q-q^R$, no reservation quota, and no disadvantaged student (call such a market \emph{restricted}). So we want to compute the ranking, among advantaged students, of the lucky applicant. We can use again the balls and bins analogy from above. Denote by $b(q-q^R,n)$ the random variable denoting the smallest $p$ such that, when ball $p$ is extracted, all bins already have at least $(q-q^R)$ balls inserted. From~\cite{erdHos1961classical}, we know that for any real $x$, we have
$$\lim_{n\rightarrow \infty}\mathbb{P}(b(q-q^R,n)-1 < n\log n + n(q-q^R-1)\log \log n + nx) =e^{-\frac{e^{-x}}{(q-q^R-1)!}}.
$$
Taking $x=\log \log \log n$, we have 
\begin{eqnarray*}\lim_{n\rightarrow \infty}\mathbb{P}(b(q-q^R,n)-1 < n\log n + n(q-q^R-1)\log \log n + n \log \log \log n) & =& \lim_{n\rightarrow \infty} e^{-\frac{e^{-\log \log \log n}}{(q-q^R-1)!}}\\ & \geq &  \lim_{n\rightarrow \infty} e^{-e^{-\log \log \log n}}\\ & = & 1.\end{eqnarray*}
Hence, with probability $1-o(1)$, each school is ranked first at least $(q-q^R-1)$ times when we look at the preference lists of the best $n\log n + (q-q^R)n\log \log n$ advantaged students. Thus, with high probability, all the advantaged students that are admitted to a seat in the restricted market -- in particular, the lucky applicant -- are contained in the  $(n\log n +(q-q^R)n\log \log n)$-best ranked advantaged students. It suffices therefore that the worst of them is ranked above the bottleneck student -- as it is required by the hypothesis -- to conclude that the market is highly competitive. 
\end{proof}

\begin{proof}[Proof of Theorem~\ref{thm:hc-potential}]
Because of the hypothesis, the $k$-th order statistic of a distribution converges in probability to the $k$-th quantile function of the CDF (see, e.g.,~\cite[Chapter 7]{dasgupta2008asymptotic}). We use the approximation for the quantile function of a standard normal distribution from~\cite{bowling2009logistic}:
$$
\phi^{-1}(\alpha)\approx \frac{1}{-1.702} \ln (\frac{1}{\alpha} -1).
$$
which has an absolute error of at most $1.4 \cdot 10^{-2}$, see~\cite{soranzo2014very}. The approximation of $\phi^{-1}(\alpha)$ can be plugged in the formula of the quantile function of a normal with mean $\mu$ and variance $\sigma$:
$$\phi_{\mu,\sigma}^{-1}(\alpha)=\mu + \sigma \phi^{-1}(\alpha)\approx \mu + \sigma \frac{1}{-1.702} \ln (\frac{1}{\alpha} -1),$$ which has therefore an absolute error of at most $0.8 \sigma \cdot 10^{-2}$. In order to apply Theorem~\ref{thm:cs-highly-competitive}, we need that 
$$
\phi_{\mu_M,\sigma}^{-1}(p_M) > \phi_{\mu_m,\sigma}^{-1}(p_m).
$$
Rearranging and plugging in the approximation above, we obtain~\eqref{eq:condition-hc-potential}.\end{proof}

\section{Applications of Theorem~\ref{thm:cs-highly-competitive} and Theorem~\ref{thm:hc-potential}}\label{sec:hc-theorems-applications}

\subsection{Application of Theorem~\ref{thm:cs-highly-competitive} to data from NYC SHS}\label{sec:hc-theorems-applications-one} The average reservation quota and the average number of seats at each school is respectively $q^R=208$ and $q=635$. In addition, for the ranking, we have $n+n(q-q^R)=3424$, and $q^R n=1664$. Note that we omit from the comparison the terms logarithmic and sublogarithmic in $n$ because $n=8$ and thus these terms would only help the hypothesis of Theorem~\ref{thm:cs-highly-competitive} to be satisfied. We see that the $1664$-th ranked disadvantaged student performs at par with the $6848$-th advantaged student, hence well within the comparative rank condition of Theorem \ref{thm:cs-highly-competitive}.

\subsection{Application of Theorem~\ref{thm:hc-potential} to SHSAT scores}\label{sec:hc-theorems-applications-two}

From data on the SHSAT scores of students (under NDA with the department of Education of NYC), we know that $|S^m|=9132$, $|S^M|=18723$, $\mu^m=362.40$, $\mu^M=408.76$, $\sigma_M=92.53$, $\sigma_m=83.13$, $p_M=.18$, $p_m=.18$. Hence, the left-hand side of~\eqref{eq:condition-hc-potential} (difference between the means) is $46.36$, which is much larger than the right-hand side, which is $9.47$. 

\section{Lack of Evidence for Strategic Behaviors} \label{app:sec:no-strategy}

In Table \ref{tab:lengh-pref-list}, we present the average length of preference lists submitted to the DOE by all disadvantaged students and by the top 500 disadvantaged students across all 12 academic years. Numbers here do not elicit clear evidence for strategic behaviors, as there is neither a pattern of top-performing disadvantaged students truncating their preference lists more than other disadvantaged students, nor is a there a trend that over time more top-performing disadvantaged students start to truncate their preference lists.

\begin{table}[ht]
    \centering
    \begin{tabular}{c|cc}
        academic year & all & top 500 \\
        \hline
        2005-06 & 5.04 & 5.10 \\
        2006-07 & 5.35 & 5.63 \\
        2007-08 & 4.93 & 4.77 \\
        2008-09 & 5.10 & 5.00 \\
        2009-10 & 5.17 & 5.01 \\
        2010-11 & 5.28 & 5.05 \\
        2011-12 & 5.25 & 5.16 \\
        2012-13 & 5.22 & 4.81 \\
        2013-14 & 5.49 & 5.11 \\
        2014-15 & 5.56 & 5.58 \\
        2015-16 & 5.63 & 5.34 \\
        2016-17 & 5.34 & 5.60 
    \end{tabular}
    \caption{Average length of preference lists submitted to the DOE by ``all'' disadvantaged students, and by ``top 500'' disadvantaged students only. }
    \label{tab:lengh-pref-list}
\end{table}

In Table \ref{tab:corr-len-rank}, we summarized the Pearson correlation coefficients and the corresponding p-values between the lengths of students' preference lists and their priority standings, across all 12 academic years. There is no clear evidence that top-performing students are more likely to truncate their preference lists. In fact, for 4 out of the 12 academic years, we observe that top-performing students have significantly longer preference lists, which is contrary to the strategy behaviors. 

\begin{table}[ht]
    \centering
    \begin{tabular}{c|cl}
        academic year & correlation coefficient & p-value \\
        \hline
        2005-06 & -0.0433 & 0.0002 $\dagger$ \\
        2006-07 & -0.0306 & 0.0157 $\dagger$ \\
        2007-08 & -0.0087 & 0.4905 \\
        2008-09 & -0.0269 & 0.0283 $\dagger$ \\
        2009-10 & -0.0042 & 0.7037 \\
        2010-11 & 0.0214 & 0.0467 $\ddagger$ \\
        2011-12 & 0.0221 & 0.059 \\
        2012-13 & 0.0086 & 0.4548 \\
        2013-14 & 0.0281 & 0.0047 $\ddagger$ \\
        2014-15 & -0.0071 & 0.4855 \\
        2015-16 & 0.006 & 0.5661 \\
        2016-17 & -0.0539 & 0.0 $\dagger$  
    \end{tabular}
    \caption{The correlation between the lengths of students' preference lists and their priority standing where higher score implies a smaller number in priority standing. For significant results, we noted the p-values with either $\ddagger$ or $\dagger$, where $\ddagger$ is for cases where the top-performing students have significantly shorter preference lists, and $\dagger$ is for cases where the top-performing students have significantly longer preference lists.} \label{tab:corr-len-rank}
\end{table}

\newpage
\section{Additional Figures for all Academic Years} \label{sec:app:figure-all-years}

\begin{figure}[h]
    \centering
    \includegraphics[width=.8\textwidth]{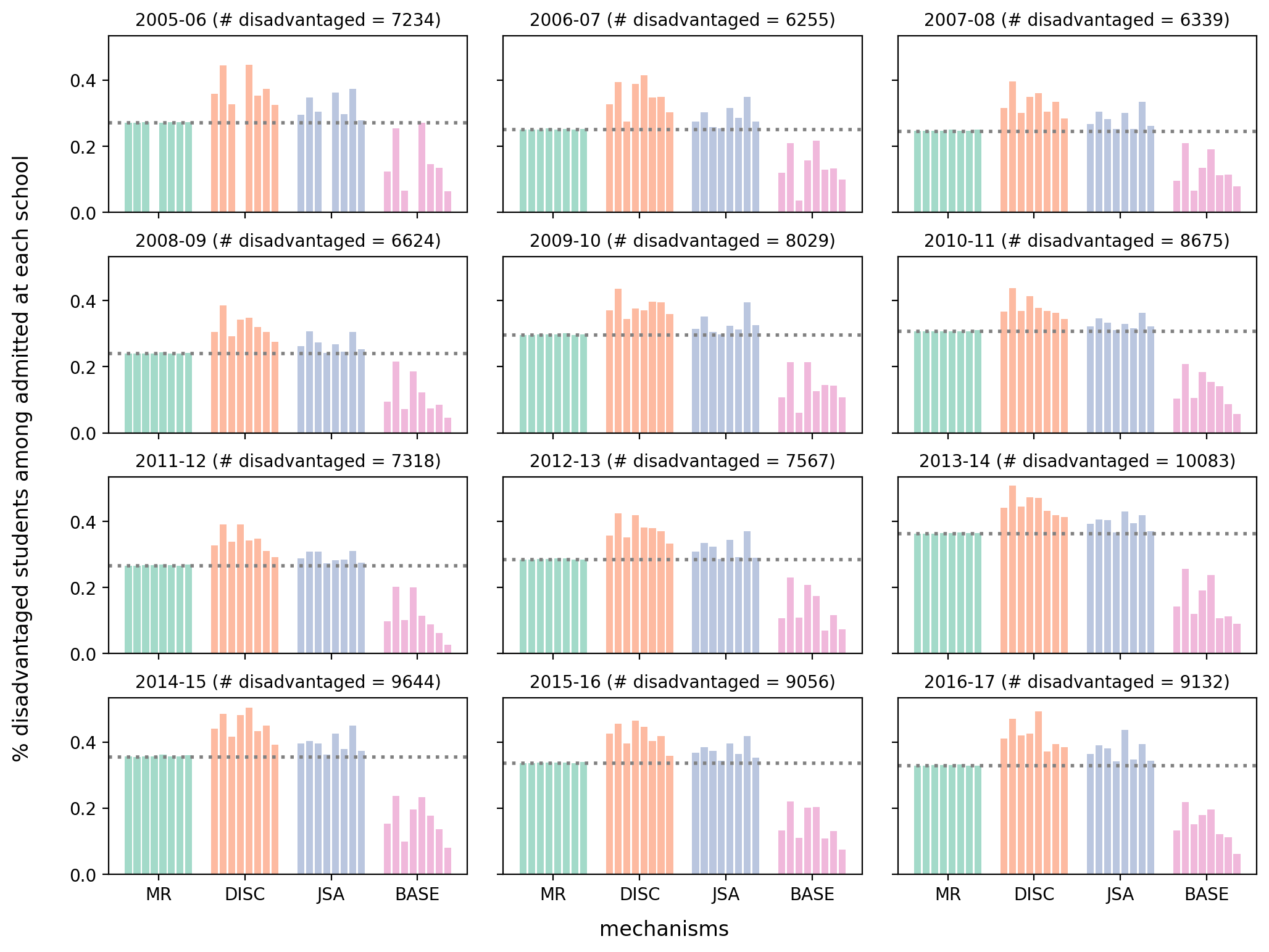}
    \caption{All academic years of Figure~\ref{fig:perc_in_school}. For the 2008-09 academic year, there were only seven SHSs.}
\end{figure}
\text{~}

\newpage
\text{~}
\begin{figure}[t]
    \centering
    \includegraphics[width=.8\textwidth]{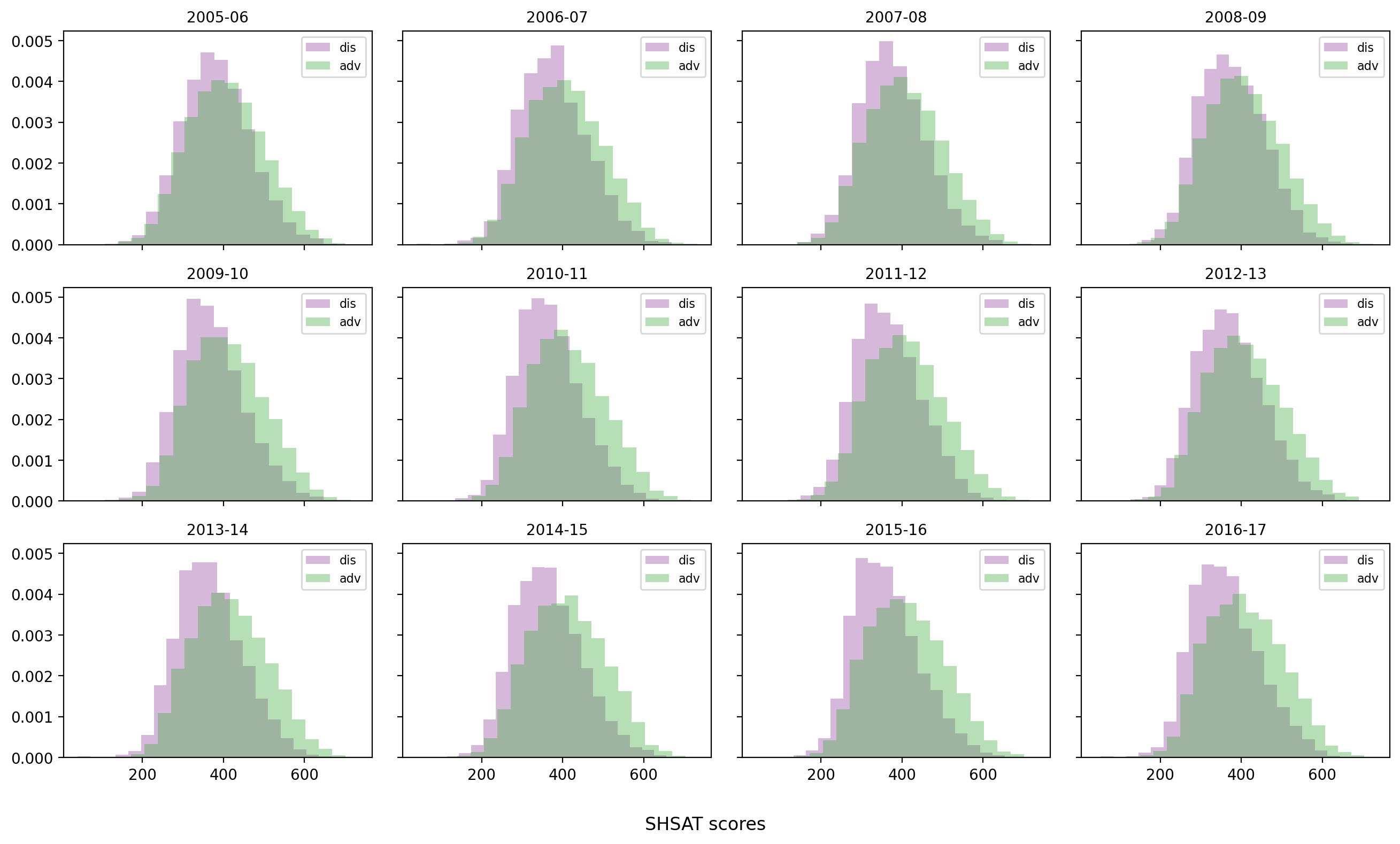}
    \caption{All academic years of Figure~\ref{fig:score-distributions}.}
\end{figure}
\text{~}
\vspace{.1cm}
\text{~}
\begin{figure}[h]
    \centering
    \includegraphics[width=.8\textwidth]{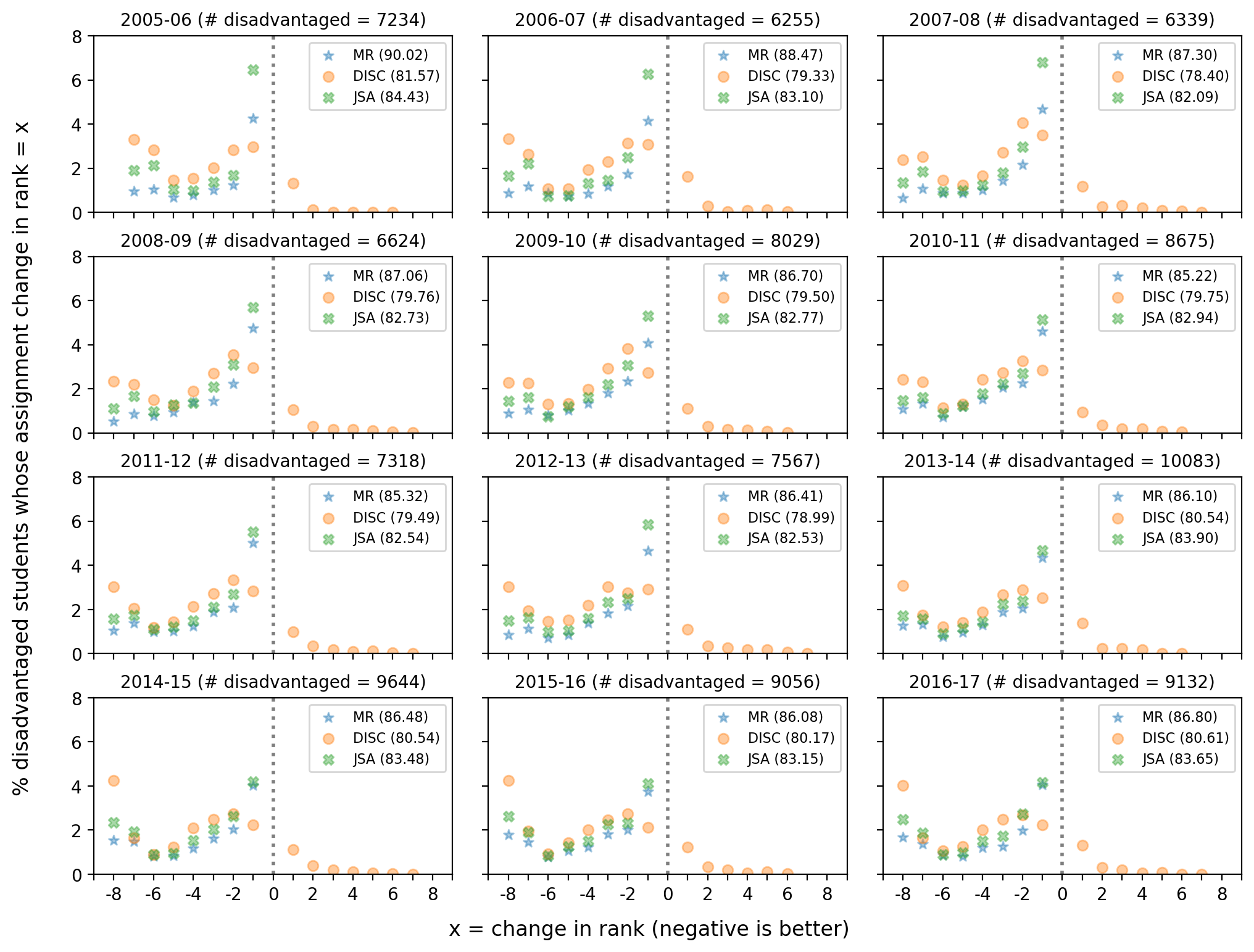}
    \caption{All academic years of Figure~\ref{fig:perc_change_rank}.}
\end{figure}
\text{~}

\newpage
\text{~}
\begin{figure}[t]
    \centering
    \includegraphics[width=.8\textwidth]{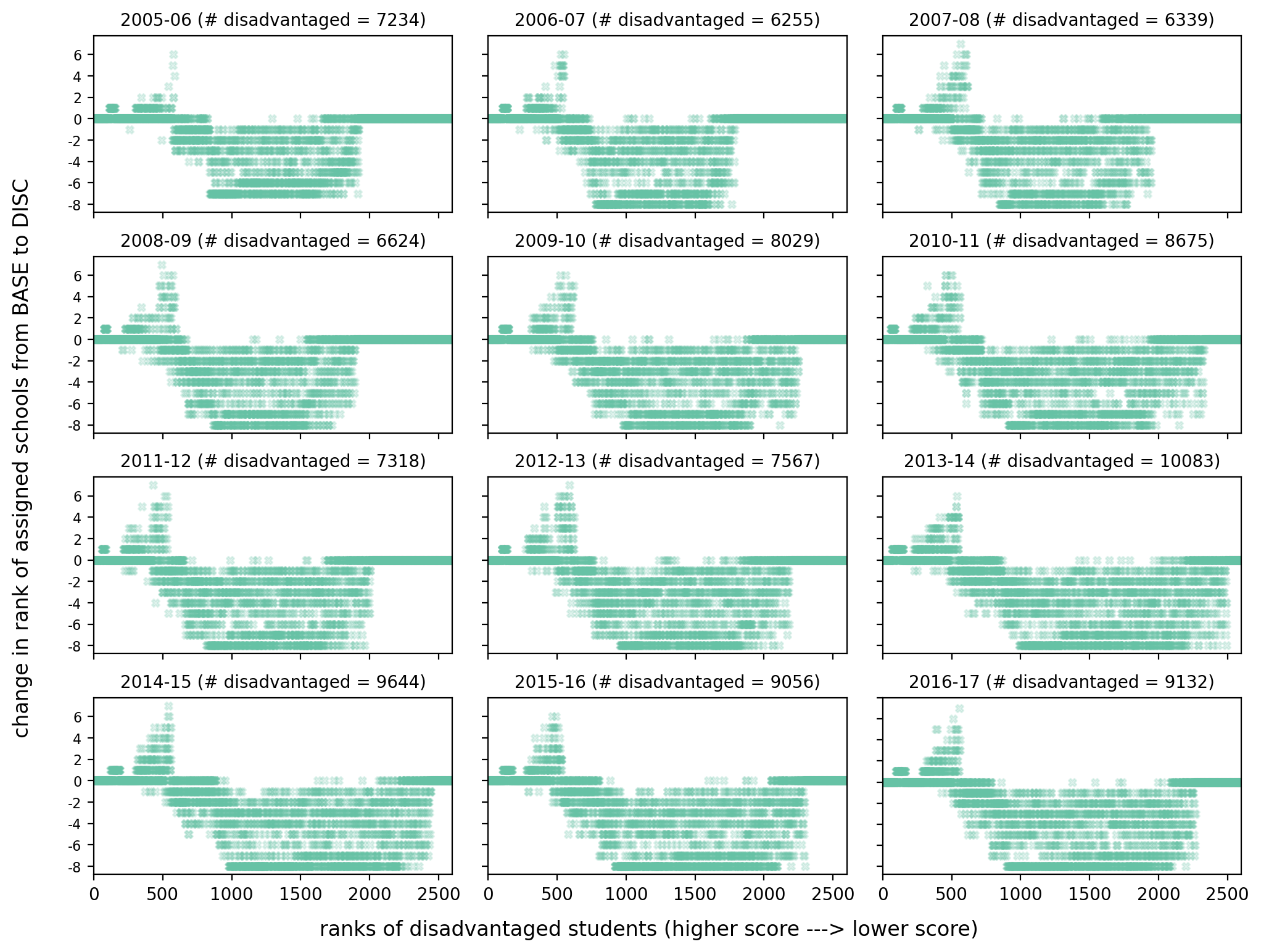}
    \caption{All academic years of Figure~\ref{fig:disc-who-worse}.}
\end{figure}
\text{~}
\vspace{.1cm}
\text{~}
\begin{figure}[h]
    \centering
    \includegraphics[width=.8\textwidth]{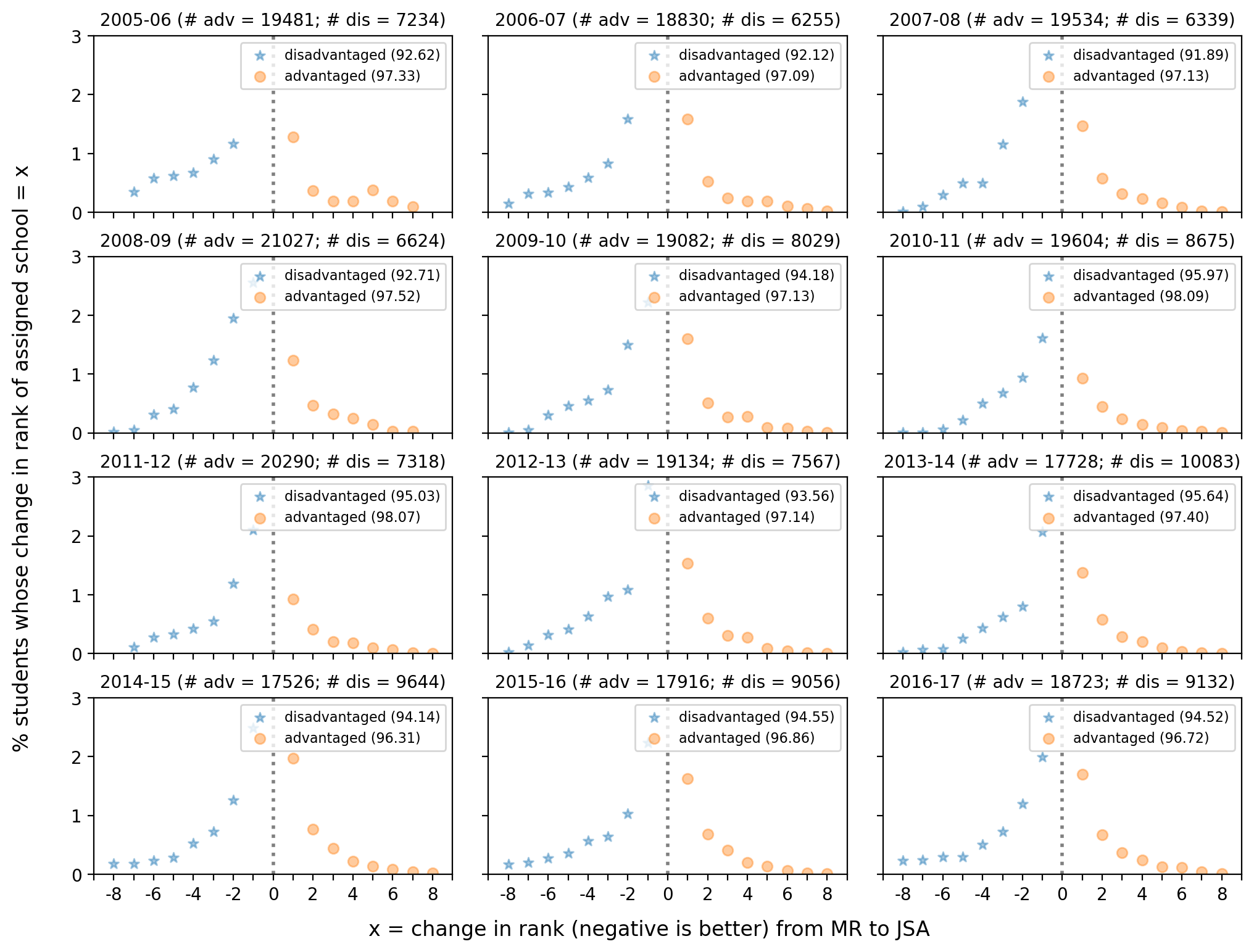}
    \caption{All academic years of Figure~\ref{fig:mr-jsa-adv-dis}.}
\end{figure}
\text{~}

%
%
%
%

\newpage
\section{Discussion on the school-over-seat hypothesis} \label{sec:res-preference}

In this section, we delve into some empirical observations of students' preference lists and we do so for two reasons. The first one is to investigate the school-over-seat hypothesis. Since students are not asked to report their preferences over different types of seats, we can only make some inferences based on the pattern of the preferences submitted by students. For the second reason, recall that in Section \ref{sec:aux-instances}, we show how different mechanisms expand differently students' original preferences over schools to their preferences over reserved and general seats. Hence, our observations aim to shed some light on the validity of these expansions. For the following discussion, we forgo the assumption that participation in the summer enrichment program does not affect students' preference for schools.

\begin{figure}[th]
    \centering
    \includegraphics[width=.7\textwidth]{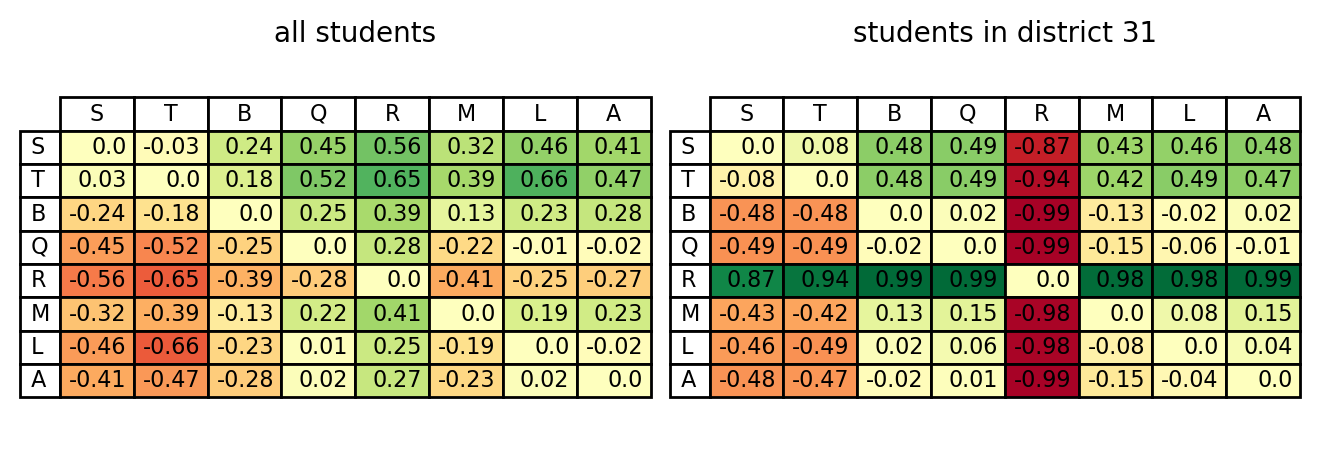}
    \caption{\raggedright Each cell in this table represents the extent to which students prefer the row school to the column school. Specifically, the number is calculated as the percentage of students in each district who prefer the row school to the column school minus the percentage of students who prefer the column school to the row school. The cells are color-formatted with numbers in $[-1,1]$ mapped to a spectrum from red to green. } \label{fig:pwc_sch}
\end{figure}

The second table in Figure \ref{fig:pwc_sch} indicates that geographic proximity could lead to a strong preference for some schools. We observe that students in district 31 strongly prefer Staten Island Tech (S) to any other schools. This is because district 31 is the only school district on Staten Island, and Staten Island Tech is the only specialized high school on Staten Island. Hence, for students residing in Staten Island, since transportation to other boroughs are extremely limited and lengthy, it is reasonable to assume the school-over-seat hypothesis when comparing Staten Island Tech to any other specialized high school. We show in Figures \ref{fig:pwc_sch_all_1} Figure \ref{fig:pwc_sch_all_2} the same type of tables for other school districts, where we observe similar patterns: students in district 10 strongly prefers Bronx Science (B) and students in district 29 strongly prefers Queens High School for the Sciences at York (Q). The difference in preferences towards Stuyvesant and Brooklyn Tech seems to be more nuanced. The complete map of school districts in New York City can and the map of specialized high schools can be found in Appendix \ref{sec:app:sch-districts} and \ref*{sec:app:shp-map}. 

Lastly, we would like to point out some concerns that are not directly observable from our data.~\citet{aygun2020designing} noted that for admissions to Indian Institutes of Technology (IIT), there is often social stigma associated with reserved seats and thus, many students prefer to not be admitted via reserved seats. We also note that NYC DOE defines disadvantaged students based on their social economic status instead of a caste system as in the case of IIT admission. Hence, the severity of the social stigma associated with reserved seats might differ between these two markets. 

In sum, we believe more study is needed to understand students' preference structure over reserved and general seats for the NYC SHS market. Moreover, as a future direction, it would be interesting to design and study mechanisms which incorporate students' preferences over general and reserved seats at all schools, possibly in orders that are not consistent with those interpreted by the mechanisms. 

\begin{figure}[ht]
    \centering
    \includegraphics[width=.72\textwidth]{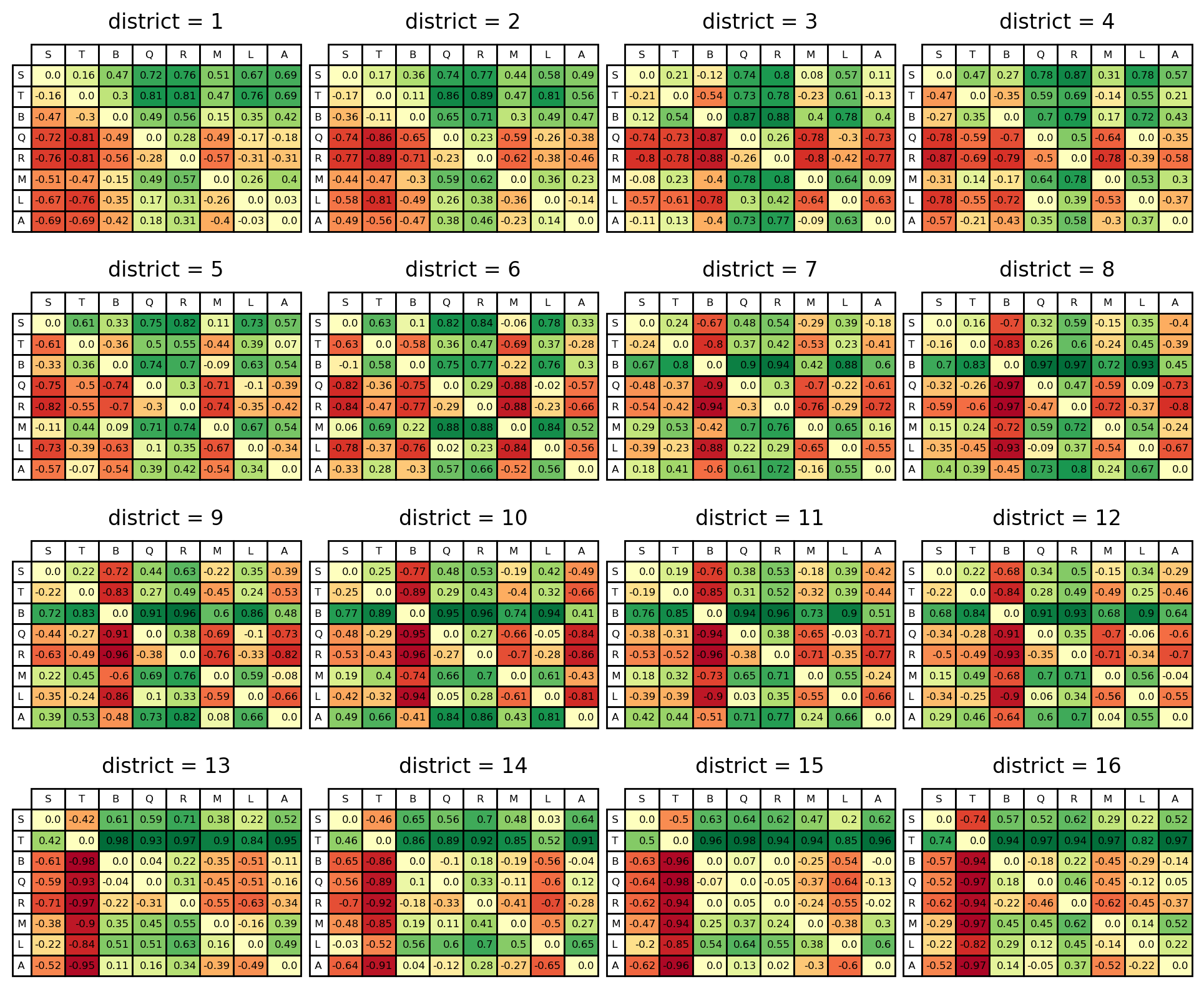}
    \caption{\raggedright These tables are the same as those in Figure \ref{fig:pwc_sch}, but for districts 1 -- 16. }
    \label{fig:pwc_sch_all_1}
\end{figure}
\text{~}
\vspace{-.2cm}
\text{~}
\begin{figure}[h]
    \centering
    \includegraphics[width=.72\textwidth]{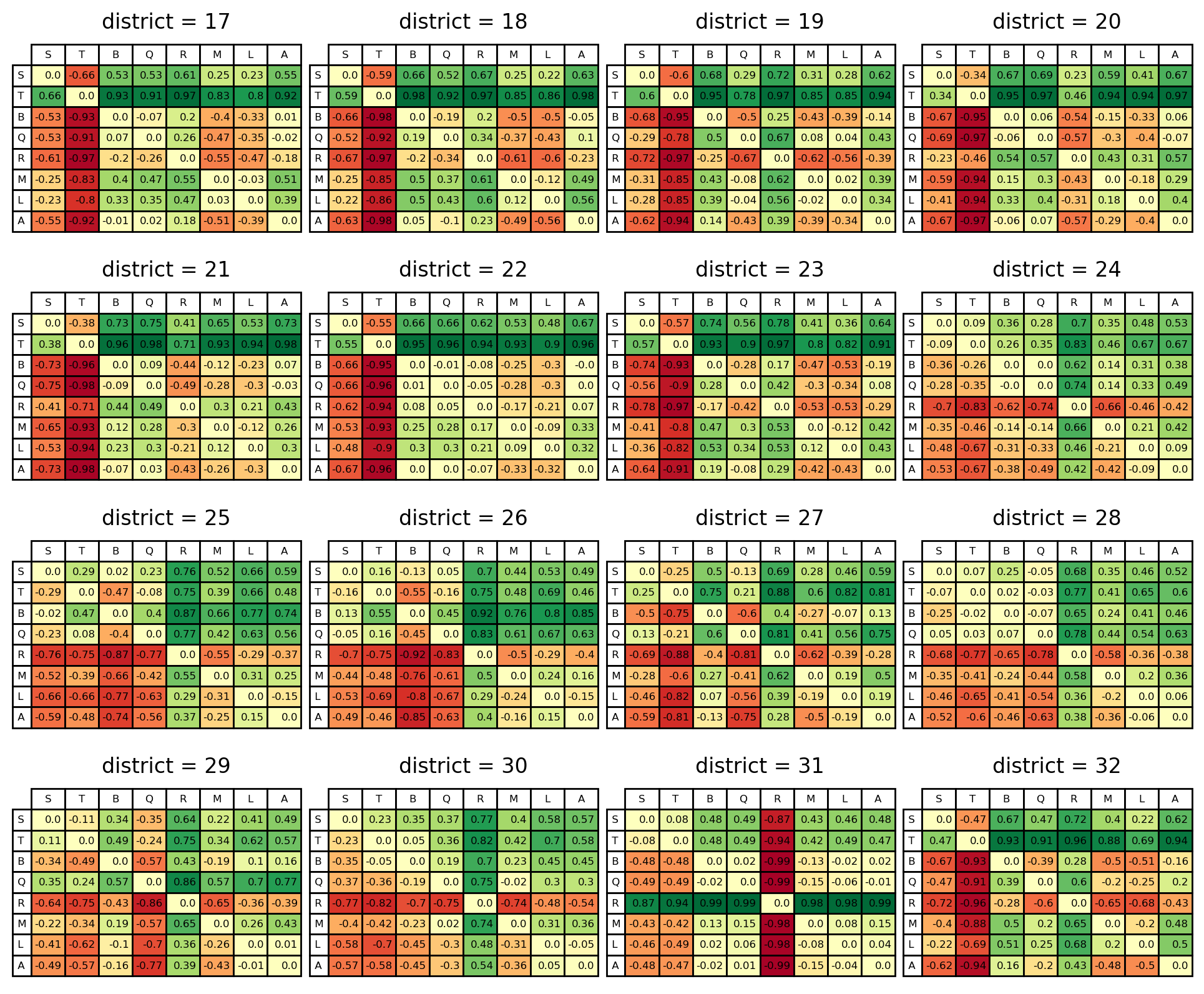}
    \caption{\raggedright These tables are the same as those in Figure \ref{fig:pwc_sch}, but for districts 17 -- 32. }
    \label{fig:pwc_sch_all_2}
\end{figure}
\text{~}

\text{~}

\newpage
\section{Map of NYC School Districts} \label{sec:app:sch-districts}
\begin{figure}[ht]
    \centering
    \includegraphics[width=.45\textwidth]{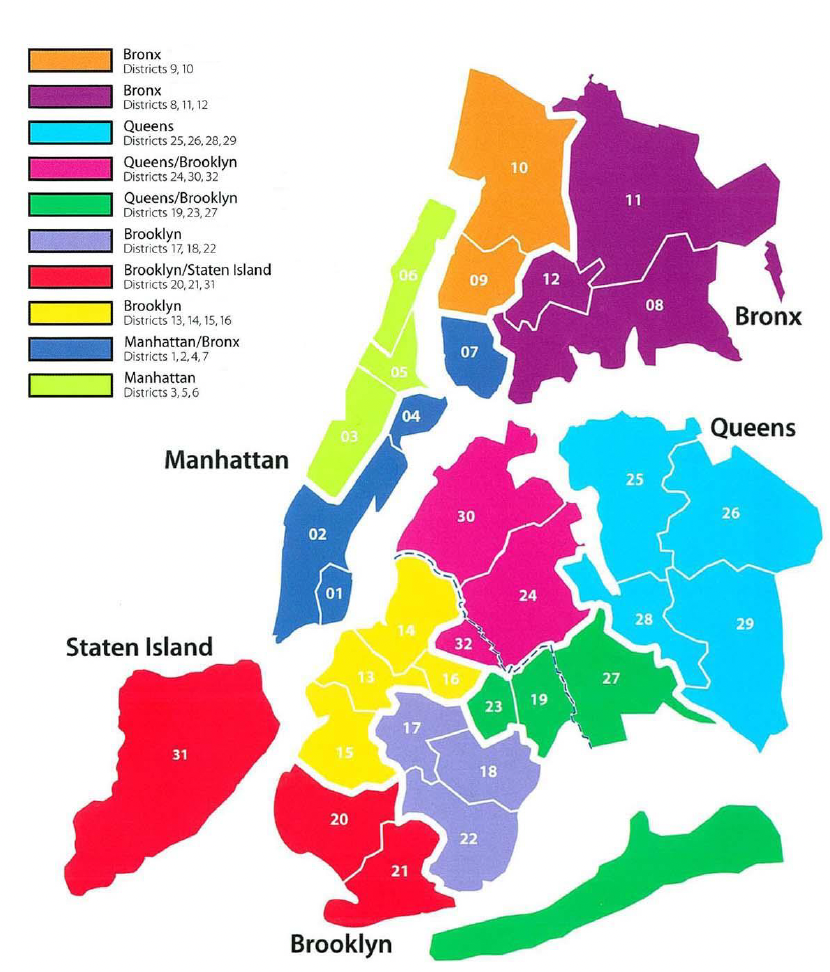}
    \caption{\raggedright Map of school districts in New York City, compiled by NYC DOE and available online at \url{https://video.eschoolsolutions.com/udocs/DistrictMap.pdf}} \label{fig:district}
\end{figure}
\text{~}

\section{Map of NYC Specialized High Schools} \label{sec:app:shp-map}
\begin{figure}[ht]
    \centering
    \includegraphics[width=.55\textwidth]{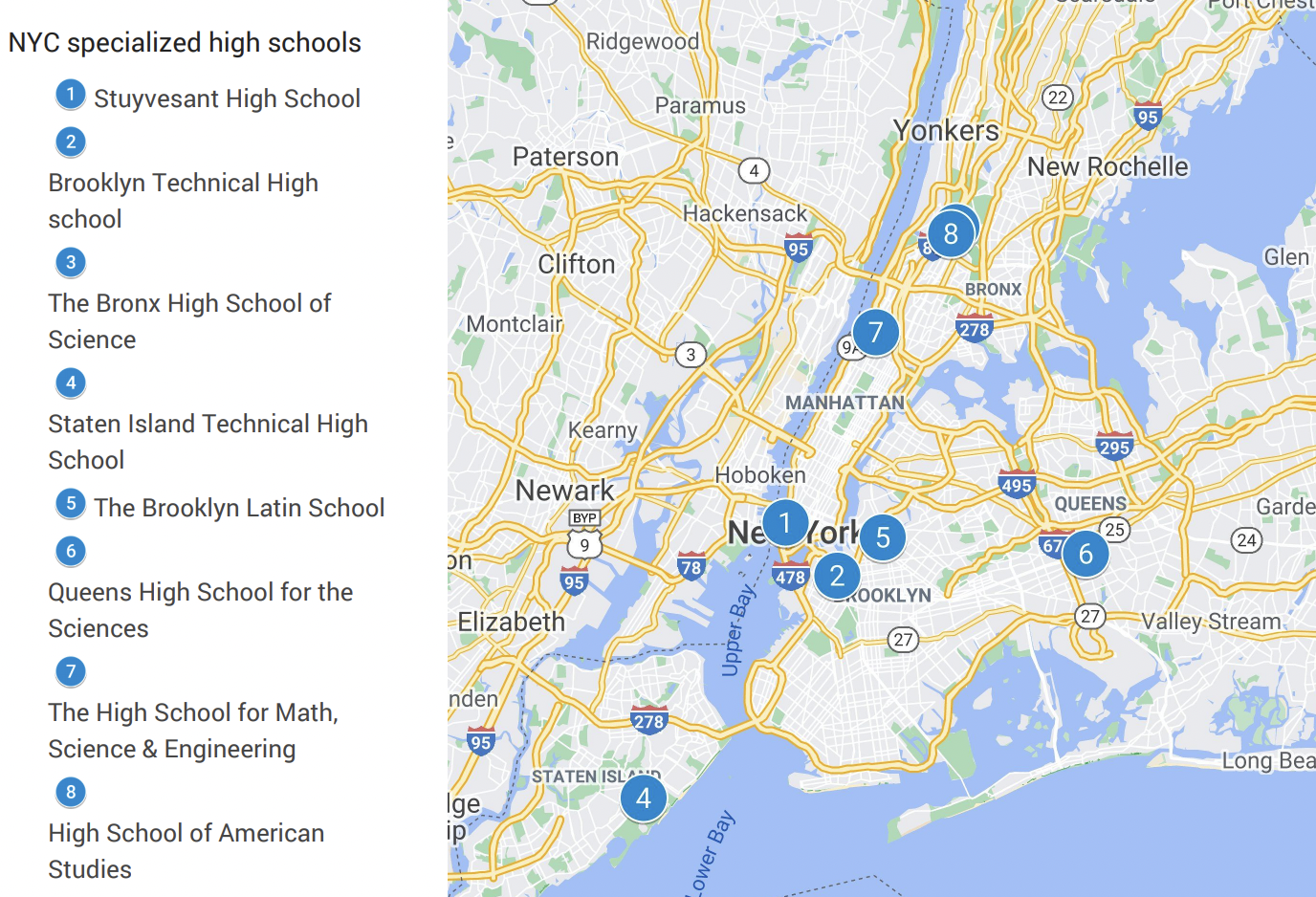}
    \caption{\raggedright Map of specialized high schools in New York City. In Bronx, the two schools numbered by 3 and 8 are overlapping on the map. The map is generated by \emph{Google My Maps}.}
    \label{fig:sph}
\end{figure}
\text{~}

\end{document}